\definecolor{linkblue}{rgb}{0.1,0.1,0.8}
\newcommand{\assign}{\leftarrow}
\newcommand{\ignore}[1]{}
\newcommand{\Advance}{\mathit{Advance}}
\newtheorem{theorem}{Theorem}[section]
\newtheorem{lemma}[theorem]{Lemma}
\newtheorem{corollary}[theorem]{Corollary}
\newcommand{\N}{\mathbb{N}}
\newcommand{\R}{\mathbb{R}}
\renewcommand{\epsilon}{\varepsilon}
\newcommand{\F}{\mathcal{F}}
\newcommand{\Ha}{\mathcal{H}}
\renewcommand{\lg}{\log}
\DeclareMathOperator{\E}{\mathbb{E}}
\DeclareMathOperator{\id}{id}
\newcommand{\leadingones}{\textsc{LeadingOnes}\xspace}
\newcommand{\ourproblem}{\textsc{HiddenPermutation}\xspace}
\definecolor{orange}{rgb}{1,0.5,0}
\newcommand{\RandBinSearch}{\texttt{RandBinSearch}}\newcommand{\subroutineone}{\texttt{RandBinSearch}}
\newcommand{\SizeReduction}{\texttt{SizeReduction}}
\newcommand{\ReductionStep}{\texttt{ReductionStep}}
\newcommand{\set}[2]{\{ #1 \, \mid \, #2 \}}
\newcommand{\sset}[1]{\{ #1 \}}
\newcommand{\abs}[1]{| #1 |}
\newcommand{\is}{{\textbf{i}}}
\newcommand{\eps}{\varepsilon}
\renewcommand{\mod}{\textrm{ mod }}
\renewcommand{\phi}{\varphi}
\newcommand{\tsLB}{\textbf{t}}
\DeclareMathOperator{\con}{Con}
\newcommand{\vs}{\ell}
\begin{document}

\title{The Query Complexity of a Permutation-Based Variant of Mastermind}

\author{
Peyman Afshani$^1$,
Manindra Agrawal$^2$,
Benjamin Doerr$^3$,\\
Carola Doerr$^{4}$\thanks{Corresponding author. Carola.Doerr@mpi-inf.mpg.de}, 
Kasper Green Larsen$^1$,
Kurt Mehlhorn$^5$
}
\date{
	$^1$MADALGO\footnote{Center for Massive Data Algorithms, a Center of the Danish National Research Foundation, Denmark}, Department of Computer Science, Aarhus University, Denmark\\
$^2$Indian Institute of Technology Kanpur, India\\
$^3$\'Ecole Polytechnique, CNRS, Laboratoire d'Informatique (LIX), Palaiseau, France\\
$^4$Sorbonne Universit\'e, CNRS, LIP6, Paris, France\\
$^5$Max Planck Institute for Informatics, Saarbr\"ucken, Germany}
\maketitle

\begin{abstract}
We study the query complexity of a permutation-based variant of the guessing game Mastermind. In this variant, the secret is a pair $(z,\pi)$ which consists of a binary string $z \in \{0,1\}^n$ and a permutation $\pi$ of $[n]$. The secret must be unveiled by asking queries of the form $x \in \{0,1\}^n$. For each such query, we are returned the score 
\[ f_{z,\pi}(x):= \max \{ i \in [0..n]\mid \forall j \leq i: z_{\pi(j)} = x_{\pi(j)}\}\,;\]
i.e., the score of $x$ is the length of the longest common prefix of $x$ and $z$ with respect to the order imposed by $\pi$. The goal is to minimize the number of queries needed to identify $(z,\pi)$. 
This problem originates from the study of black-box optimization heuristics, where it is known as the \textsc{LeadingOnes} problem. 

In this work, we prove matching upper and lower bounds for the deterministic and randomized query
complexity of this game, which are $\Theta(n \log n)$ and $\Theta(n \log \log n)$, respectively. 
\end{abstract}




\section{Introduction}
\label{sec:introduction}

\emph{Query complexity}, also referred to as \emph{decision tree complexity}, is one
of the most basic models of computation. We aim at learning an
unknown object (a secret) by asking queries of a certain type. The cost of
the computation is the number of queries made until the secret is unveiled. All
other computation is free. 
Query complexity is one of the standard measures in computational complexity theory. 

A related performance measure can be found in the \emph{theory of black-box optimization}, where we are asked to optimize a function $f:S\rightarrow \R$ without having access to it other than by evaluating (``querying'') the function values $f(x)$ for solution candidates $x \in S$. In black-box optimization, the performance of an algorithm on a class $\F$ of functions is measured by the number of function evaluations that are needed until, for any unknown function $f \in \F$, an optimal search point is evaluated for the first time. 

\subsection{Problem Description and Summary of Main Results}\label{sec:intro:problemdescription}

In this work, we consider the query complexity of the following problem.

Let $S_n$ denote the set of permutations of $[n] := \{1, \ldots, n\}$; let $[0..n]:=\{0,1, \ldots, n\}$.
Our problem is that of learning a hidden permutation 
$\pi \in S_n$ together with a hidden bit-string $z \in \{0,1\}^n$
through queries of the following type. A query is again a bit-string $x \in
\{0,1\}^n$. As answer we receive the length of the longest common prefix of $x$
and $z$ in the order of $\pi$, which we denote by 
\begin{align*} 
f_{z,\pi}(x) := 
\max \{ i \in [0..n]\mid \forall j \leq i: z_{\pi(j)} = x_{\pi(j)}\}\,.
\end{align*}
We call this problem the $\ourproblem$ problem. It can be viewed as a guessing game like the well-known Mastermind 
problem (cf. Section~\ref{sec:intro:motivation}); however, the secret is now a pair $(z,\pi)$ and not just a
string. The problem is a standard benchmark problem in black-box optimization.

It is easy to see (see Section~\ref{sec:definitions}) that learning one part of the secret (either $z$ or $\pi$) is no easier (up to $O(n)$ questions) than
learning the full secret.
It is also not difficult to see that $O(n \log n)$ queries suffice deterministically to
unveil the secret (see Section~\ref{sec:deterministic}).   
Doerr and Winzen~\cite{DoerrW11EA} showed that randomization
allows to beat this bound. They gave a randomized algorithm with 
$O(n \log n / \log\log n)$ expected complexity. The information-theoretic lower
bound is only $\Theta(n)$ as the answer to each query is a number between zero
and $n$ and hence may reveal as many as $\log n$ bits. We show that 
\begin{compactenum}[\hspace{\parindent}(1)]
\item the deterministic
query complexity is $\Theta(n \log n)$,
cf.~Section~\ref{sec:deterministic}, and that
\item the randomized query complexity is 
$\Theta(n \log\log n)$, cf.~Sections~\ref{sec:upper} and~\ref{sec:lower}. 
\end{compactenum}
Both
upper bound strategies are efficient, i.e., they can be implemented in polynomial
time. The lower bound is established by a (standard)
adversary argument in the deterministic case and by a potential function
argument in the randomized case. We remark that for many related problems (e.g., sorting, Mastermind, and many coin weighing problems) the asymptotic query complexity, or the best known lower bound for it, equals the information-theoretic lower bound. For our problem, deterministic and randomized query complexity differ, and the randomized query complexity exceeds the information theoretic lower bound. 
The randomized upper and lower bound require
non-trivial arguments. In Section~\ref{sec:definitions} we derive auxiliary
results, of which some are interesting in their
own right. For example, it can be decided efficiently whether a sequence
of queries and answers is consistent. 

A summary of the results presented in this work previously appeared in~\cite{AfshaniADLMW12}.

\subsection{Origin of the Problem and Related Work}
\label{sec:intro:motivation} 

The $\ourproblem$ problem can be seen as a \emph{guessing game.} An archetypal and well-studied representative of guessing games is \emph{Mastermind}, a classic two-player board game from the Seventies. In the Mastermind game, one player chooses a secret code $z \in [k]^n$. The role of the second player is to identify this code using as few guesses as possible. For each guess $x \in [k]^n$ she receives the number of positions in which $z$ and $x$ agree (and in some variants also the number of additional ``colors'' which appear in both $x$ and $z$, but in different positions). 

The Mastermind game has been studied in different context. Among the most prominent works is a study of Erd\H os and R\'enyi~\cite{Erd63}, where the 2-color variant of Mastermind is studied in the context of a coin-weighing problem. Their main result showed that the query complexity of this game is $\Theta(n/\log n)$. This bound has subsequently been generalized in various ways, most notably to the Mastermind game with $k=n$ colors. Using similar techniques as Erd\H os and R\'enyi, Chv\'atal~\cite{Chvatal83} showed an upper bound of $O(n \log n)$ for the query complexity of this game. This bound has been improved $O(n \log\log n)$~\cite{DoerrDST16}. The best known lower bound is the information-theoretic linear one. This problem is open for more than 30 years.

Our permutation-based variant of Mastermind has its origins in the field of evolutionary computation. There, the $\leadingones$ function 
$\{0,1\}^n \rightarrow [0..n], x \mapsto \max \{ i \in [0..n] \mid \forall j \leq i: x_j=1\}\,$
counting the number of initial ones in a binary string of length $n$, 
is a commonly used benchmark function both for experimental and theoretical analyses
(e.g.,~\cite{Rudolph97}). It is studied as one of the simplest examples of a unimodal non-separable function.  

An often desired feature of general-purpose optimization algorithms like genetic and evolutionary
algorithms is that they should be oblivious of problem representation. In the context of optimizing functions of the form $f:\{0,1\}^n \rightarrow \R$, this is often formalized as an \emph{unbiasedness} restriction, which requires that the performance of an unbiased algorithm is identical for all composed functions $f \circ \sigma$ of $f$ with an automorphism $\sigma$ of the $n$-dimensional hypercube. Most standard evolutionary algorithms as well many other commonly used black-box heuristics like local search variants, simulated annealing, etc. are unbiased. Most of them query an average number of $\Theta(n^2)$ solution candidates until they find the maximum of $\leadingones$, see, e.g.,~\cite{DrosteJW02}. 

It is not difficult to see that 
$$\left\{\leadingones \circ \sigma \mid \sigma \text{ is an automorphism of } \{0,1\}^n \right\}
= 
\left\{f_{z,\pi} \mid z \in \{0,1\}^n, \pi \in S_n \right\}, 
$$ 
showing that $\ourproblem$ generalizes the $\leadingones$ function by indexing the bits not
from left to right, but by an arbitrary permutation, and by swapping the interpretation of $0$ and $1$ for indices $i$ with $z_i=0$. 

The first to study the query complexity of $\ourproblem$ were Droste, Jansen, and Wegener~\cite{DrosteJW06}, who introduced the notion of \emph{black-box complexity} as a measure for the difficulty of black-box optimization problems. As mentioned, the main objective in black-box optimization is the identification of an optimal solution using as few function evaluations as possible. Denoting by $T_{A}(f)$ the number of queries needed by algorithm $A$ until it queries for the first time a search point $x \in \arg\max f$ (this is the so-called \emph{runtime} or \emph{optimization time} of $A$), the \emph{black-box complexity} of a collection $\F$ of functions is 
$$\inf_{A} \max_{f \in \F} \E[T_A(f)],$$ 
the best (among all algorithms) worst-case (with respect to all $f \in \F$) expected runtime. 
Black-box complexity is
essentially query-complexity, with a focus on optimization rather than on learning. A survey on this topic can be found in~\cite{Doerr18BBC}.

Droste et al.~\cite{DrosteJW06} considered only the $0$/$1$-invariant class 
$\left\{f_{z,\id} \mid z \in \{0,1\}^n \right\}$ 
(where $\id$ denotes the identity permutation). They showed that the black-box complexity of this function class is $n/2 \pm o(n)$. 

The first bound for the black-box complexity of the full class $\ourproblem$ was presented in~\cite{LehreW12} by Lehre and Witt,  who showed that any so-called \emph{unary unbiased black-box algorithm} needs $\Omega(n^2)$ steps, on average, to solve the $\ourproblem$ problem. In~\cite{DoerrJKLWW11} it was then shown that already with binary distributions one can imitate a binary search algorithm (similar to the one presented in the proof of Theorem~\ref{thm:deterministic}), thus yielding an $O(n \log n)$ black-box algorithm for the $\ourproblem$ problem. The algorithm achieving the $O(n \log n / \log\log n)$ bound mentioned in Section~\ref{sec:intro:problemdescription} can be implemented as a ternary unbiased one~\cite{DoerrW11EA}. This bound, as mentioned above, was the previously best known upper bound for the black-box complexity of $\ourproblem$. 

In terms of lower bounds, the best known bound was the linear information-theoretic one. However, more recently a stronger lower bound has been proven to hold for a restricted class of algorithms. More precisely, Doerr and Lengler~\cite{DoerrL17LO} recently proved that the so-called $(1+1)$ \emph{elitist black-box complexity} is $\Omega(n^2)$, showing that any algorithm beating this bound has to keep in its memory more than one previously evaluated search point or has to make use of information hidden in non-optimal search points. It is conjectured in~\cite{DoerrL17LO} that the (1+1) memory-restriction alone (which allows an algorithm to store only one previously queried search point in its memory and to evaluate in each iteration only one new solution candidate) already causes the $\Omega(n^2)$ bound, but this conjecture stands open. We note that many local search variants, including simulated annealing, are $(1+1)$ memory-restricted.

\section{Preliminaries}
\label{sec:definitions}
\label{SEC:DEFINITIONS}

For all positive integers $k \in \N$ we define $[k]:=\{1,\ldots,k\}$ and $[0..k]:=[k] \cup \{0\}$. 
By $e^n_k$ we denote the $k$th unit vector $(0,\ldots,0,1,0,\ldots,0)$ of length $n$. 
For a set $I \subseteq [n]$ we define $e^n_I:=\sum_{i \in I}{e^n_i}=\oplus_{i \in I}{e^n_i}$, where $\oplus$ denotes the bitwise exclusive-or.
We say for two bitstrings $x,y \in \{0,1\}^n$ that we \emph{created $y$ from $x$ by flipping $I$} or that we \emph{created $y$ from $x$ by flipping the entries in position(s) $I$} if $y=x \oplus e^n_I$.
By $S_n$ we denote the set of all permutations of $[n]$.
For $r\in\R_{\geq 0}$, let 
$\lceil r \rceil := \min\{n\in\N_0 \mid n\geq r\}$. 
and $\lfloor r \rfloor := \max\{n\in\N_0 \mid n \leq r\}$. 
To increase readability, we sometimes omit the $\lceil \cdot \rceil$ signs; that is, whenever we write $r$ where an integer is required, we implicitly mean $\lceil r \rceil$.

Let $n \in \N$. 
For $z \in \{0,1\}^n$ and $\pi \in S_n$ we define 
\begin{align*} 
f_{z,\pi}: 
\{0,1\}^n \rightarrow [0..n], 
x \mapsto
\max \{ i \in [0..n]\mid \forall j \leq i: z_{\pi(j)} = x_{\pi(j)}\}\,.
\end{align*}
$z$ and $\pi$ are called the \emph{target string} and \emph{target permutation}
of $f_{z,\pi}$, respectively. We want to identify target string and
permutation by asking queries $x^i$, $i = 1$, $2$, \ldots, and evaluating the answers (``scores'') $s^i = f_{z,\pi}(x^i)$. 
We may stop after
$t$ queries if there is only a \emph{single} pair $(z,\pi) \in \{0,1\}^n \times
S_n$ with $s^i = f_{z,\pi}(x^i)$ for $1 \le i \le t$. 
 
A \emph{deterministic strategy} for the $\ourproblem$ problem is a tree of outdegree
$n+1$ in which a query in $\sset{0,1}^n$ is associated with every node of
the tree. The search starts as the root. If the search reaches a node, the
query associated with the node is asked, and the
search proceeds to the child selected by the score. 
The complexity of a strategy on input $(z,\pi)$ is the
number of queries required to identify the secret, and the complexity of a deterministic
strategy is the worst-case complexity of any input. That is, the complexity is the height of the search tree.

A \emph{randomized strategy} is a
probability distribution over deterministic strategies. The complexity of a
randomized strategy on input $(z,\pi)$ is the expected
number of queries required to identify the secret, and the complexity of a randomized
strategy is the worst-case complexity of any input. The probability
distribution used for our randomized upper bound is a product distribution in
the following sense: a probability distribution over $\sset{0,1}^n$ is associated with every node of
the tree. The search starts as the root. In any node, the query is selected
according to the probability distribution associated with the node, and the
search proceeds to the child selected by the score.

We remark that knowing $z$ allows us to determine $\pi$ with $n-1$ queries $z
\oplus e^n_i$, $1 \le i < n$. Observe that $\pi^{-1}(i)$ equals $f_{z,\pi}(z
\oplus e^n_i)+1$. Conversely, knowing the target permutation $\pi$ we
can identify $z$ in a linear number of guesses. If our query $x$ has a score of
$k$, all we need to do next is to query the string $x'$ that is created from
$x$ by flipping the entry in position $\pi(k+1)$.  Thus, learning one part of the secret is no easier (up to $O(n)$ questions) than
learning the full.  

A simple information-theoretic argument gives an $\Omega(n)$ lower bound for the deterministic query complexity and, together with Yao's minimax principle~\cite{Yao77}, also for the randomized complexity. The
\emph{search space} has size $2^n n!$, since the unknown secret is an element of
$\{0,1\}^n \times S_n$. A deterministic strategy is a tree with outdegree $n+1$
and $2^n n!$ leaves. The maximal and average depth of any such tree is
$\Omega(n)$. 

Let
$\Ha:=(x^i, s^i)_{i=1}^t$ be a sequence of queries
$x^i \in \{0,1\}^n$ and scores $s^i \in [0..n]$. 
We call $\Ha$ a \emph{guessing history.} A secret $(z,\pi)$ is \emph{consistent with
$\Ha$} if $f_{z,\pi}(x^i)=s^i$ for all $i \in [t]$. $\Ha$ is \emph{feasible} if there exists a secret consistent with it.

An observation crucial in our proofs is the fact that a vector $(V_1,
\ldots, V_n)$ of subsets of $[n]$, together with a top score query $(x^*,s^*)$,
captures the total knowledge provided by a \emph{guessing history} 
$\Ha=(x^i, s^i)_{i=1}^t$ about the set of secrets consistent with $\Ha$. 
We will call $V_j$ the \emph{candidate set} for position $j$; 
$V_j$ will contain all indices $i \in [n]$ for which the following simple rules (1) to (3)
do not rule out that $\pi(j)$ equals $i$. Put differently, $V_j$ contains the set of values that are still possible images for $\pi(j)$, given the previous queries.

\begin{theorem}
\label{thm:knowledge}
Let $t \in \N$, and let $\Ha=(x^i, s^i)_{i=1}^t$ be a guessing history. 
Construct the candidate sets $V_1, \ldots, V_n \subseteq [n]$ according to the following
rules: 
\begin{compactenum}[\hspace{\parindent}(1)]
\item If there are $h$ and $\ell$ with $j \le s^h \le s^{\ell}$ and $x_i^h \not= x_i^{\ell}$,
then $i \not\in V_j$.
\item If there are $h$ and ${\ell}$ with $s=s^h = s^{\ell}$ and $x_i^h \not= x_i^{\ell}$,
then $i \not\in V_{s+1}$. 
\item If there are $h$ and ${\ell}$ with $s^h < s^{\ell}$ and $x_i^h = x_i^{\ell}$,
then $i \not\in V_{s^h+1}$. 
\item If $i$ is not excluded by one of the rules above, then $i \in
  V_j$. 
\end{compactenum}
Furthermore, let $s^*:=\max\{ s^1, \ldots, s^t\}$
and let 
$x^*=x^j$ for some $j$ with $s^j=s^*$.

Then a pair $(z,\pi)$ is consistent with $\Ha$ if and only if
\begin{compactenum}[\hspace{\parindent}(a)]
\item $f_{z,\pi}(x^*)=s^*$ and 
\item $\pi(j) \in V_j$ for all $j \in [n]$.
\end{compactenum}
\end{theorem}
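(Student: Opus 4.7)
My plan is to prove the two implications of the theorem separately. The ``only if'' direction (consistency $\Rightarrow$ (a) and (b)) is a direct soundness check of the rules, whereas the ``if'' direction (completeness of the candidate sets together with the top-score constraint) is the substantive part.

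For the forward direction, I assume $(z,\pi)$ is consistent with $\Ha$. Statement (a) is immediate from the choice of $x^*$ as a query in $\Ha$ with score $s^*$. For (b), I would fix $j$, set $i := \pi(j)$, and verify that none of rules (1)--(3) excludes $i$ from $V_j$. Each rule reduces to a contradiction once its score hypotheses are translated into statements about $z$: under rule (1), $j \le s^h$ and $j \le s^\ell$ yield $x^h_i = z_i = x^\ell_i$, contradicting $x^h_i \ne x^\ell_i$; under rule (2), a common score $s$ combined with $\pi(s+1) = i$ yields $x^h_i \ne z_i \ne x^\ell_i$, again contradicting $x^h_i \ne x^\ell_i$; under rule (3), $s^\ell > s^h$ together with $\pi(s^h + 1) = i$ yields $x^\ell_i = z_i \ne x^h_i$, contradicting $x^h_i = x^\ell_i$.

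For the converse, I assume (a) and (b), fix an arbitrary history entry $(x^m, s^m)$, and aim to prove $f_{z,\pi}(x^m) = s^m$ by establishing $z_{\pi(k)} = x^m_{\pi(k)}$ for every $k \le s^m$ (lower bound) and, if $s^m < n$, also $z_{\pi(s^m+1)} \ne x^m_{\pi(s^m+1)}$ (upper bound). For the lower bound, setting $i := \pi(k) \in V_k$, condition (a) yields $z_i = x^*_i$ because $k \le s^m \le s^*$; if $x^m_i \ne x^*_i$ then rule (1) applied to $(h,\ell) = (m,*)$ with $j = k$ excludes $i$ from $V_k$, a contradiction, so $x^m_i = x^*_i = z_i$. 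For the upper bound I set $i := \pi(s^m + 1) \in V_{s^m+1}$ and split on $s^m$ versus $s^*$. If $s^m < s^*$, then (a) gives $z_i = x^*_i$, and rule (3) applied to $(m,*)$ forces $x^m_i \ne x^*_i = z_i$. If $s^m = s^*$, then (a) gives $z_i \ne x^*_i$, and rule (2) applied to $(m,*)$ at $s := s^m$ forces $x^m_i = x^*_i \ne z_i$. The case $s^m > s^*$ is impossible by the maximality of $s^*$.

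The main obstacle I expect is organizing this case split cleanly so that each of rules (1)--(3) is invoked in exactly the right subcase: the three rules are designed precisely to cover the three configurations (lower-bound verification against a higher-scoring witness, top-score tie at level $s^*$, and a strictly sub-top score witness), and once this correspondence is fixed each subcase reduces to the contrapositive of the corresponding rule.
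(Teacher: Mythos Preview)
Your proposal is correct and follows essentially the same approach as the paper's proof: both directions hinge on comparing an arbitrary query against the top-score query $x^*$ and invoking rules (1)--(3) in exactly the three configurations you identify. The only cosmetic differences are that the paper organizes the ``if'' direction as a contradiction on $f_{z,\pi}(x^h) \ne s^h$ (splitting into $f<s$ and $f>s$) rather than as separate lower/upper bounds, and it dismisses the ``only if'' direction as trivial whereas you spell it out.
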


\begin{proof} 
Let $(z, \pi)$ satisfy conditions (a) and (b).
We show that $(z, \pi)$ is consistent with~$\Ha$.
To this end, let $h \in [t]$, let $x = x^h$, $s = s^h$, and $f :=f_{z,\pi}(x)$.
We need to show $f = s$.

Assume $f < s$. 
Then $z_{\pi(f+1)} \neq x_{\pi(f+1)}$.
Since $f+1 \leq s^*$, this together with~(a) implies 
$x_{\pi(f+1)} \neq x^*_{\pi(f+1)}$.
Rule (1) yields 
$\pi(f+1) \notin V_{f+1}$; a contradiction to (b).

Similarly, if we assume
$f > s$, then 
$x_{\pi(s+1)} = z_{\pi(s+1)}$. 
We distinguish two cases. 
If $s < s^*$, then by 
condition (a) we have
$x_{\pi(s+1)} = x^*_{\pi(s+1)}$. 
By rule (3) this implies
$\pi(s+1) \notin V_{s+1}$; a contradiction to (b).

On the other hand, if $s = s^*$, then
$x_{\pi(s+1)} 
= z_{\pi(s+1)}
\neq x^*_{\pi(s+1)}$ by (a).
Rule (2) implies $\pi(s+1) \notin V_{\pi(s+1)}$, again contradicting (b).

Necessity is trivial.  
\end{proof}

We may also construct the sets $V_j$ incrementally. The following update rules
are direct consequences of Theorem~\ref{thm:knowledge}.
In the beginning, let $V_j:=[n]$, $1 \leq j \leq n$. 
After the first query, record the first query as $x^*$ and
  its score as $s^*$.
For all subsequent queries, do the following:
Let $I$ be the set of indices in which the current query $x$ and the current best query $x^*$ agree. 
Let $s$ be the score of $x$ and let $s^*$ be the score of
$x^*$. \smallskip

\begin{compactenum}[\hspace{\parindent}Rule 1:]
\item If $s < s^*$, then $V_i \assign V_i \cap I$ for $1 \le i \le s$ and $V_{s+1} \assign
V_{s+1} \setminus I$. 
\item If $s = s^*$, then $V_i \assign V_i \cap I$ for $1 \le i \le s^* +
1$.
\item If $s > s^*$, then $V_i \assign V_i \cap I$ for $1 \le i \le s^*$ and
$V_{s^*+1} \assign V_{s^* + 1} \setminus I$. We further replace $s^* \assign s$
and $x^* \assign x$. 
\end{compactenum}\smallskip

It is immediate from the update rules that the sets $V_j$ form a \emph{laminar family}; i.e., for $i < j$ either
$V_i \cap V_j = \emptyset$ or $V_i \subseteq V_j$. 
As a consequence of Theorem~\ref{thm:knowledge} we obtain a polynomial
  time test for the feasibility of histories. It gives additional insight in the meaning of the candidate sets $V_1, \ldots, V_n$.

\begin{theorem} 
\label{thm:consistency}
It is decidable in polynomial time whether a guessing history 
is feasible. 
Furthermore, we can efficiently compute the number of pairs 
$(z, \pi)~\in~\{0,1\}^n~\times~S_n$
consistent with it.
\end{theorem}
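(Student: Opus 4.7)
The plan is to reduce both the feasibility test and the counting problem to (i) counting bijections compatible with a laminar family of candidate sets and (ii) counting the strings $z$ compatible with the top-score pair $(x^*, s^*)$. From $\Ha$ I first compute, in polynomial time, the candidate sets $V_1, \ldots, V_n$ and the pair $(x^*, s^*)$ using the incremental update rules stated just before the theorem; by construction $\{V_1, \ldots, V_n\}$ is a laminar family. By Theorem~\ref{thm:knowledge}, a pair $(z, \pi)$ is consistent with $\Ha$ if and only if $\pi(j) \in V_j$ for every $j \in [n]$ and $f_{z,\pi}(x^*) = s^*$.

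Given a permutation $\pi$, the condition $f_{z,\pi}(x^*) = s^*$ forces $z_{\pi(j)} = x^*_{\pi(j)}$ for every $j \le s^*$ and, whenever $s^* < n$, $z_{\pi(s^*+1)} = 1 - x^*_{\pi(s^*+1)}$, while leaving the remaining $n - s^* - 1$ entries of $z$ entirely free. Hence the number of admissible $z$ equals $2^{\max(0,\, n - s^* - 1)}$, independently of the particular $\pi$. It therefore suffices to count the bijections $\pi \in S_n$ with $\pi(j) \in V_j$ for all $j$ and multiply by this factor; in particular, $\Ha$ is feasible if and only if at least one such $\pi$ exists.

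To count such bijections, let $\F$ denote the set of distinct values among $V_1, \ldots, V_n$, and for each $W \in \F$ set $k_W := |\{j : V_j = W\}|$ and $K_W := |\{j : V_j \subseteq W\}|$. Build the inclusion forest of $\F$ in polynomial time. By a bottom-up argument, the positions whose candidate set is contained in $W$ always consume exactly $K_W$ distinct elements of $W$, regardless of the particular assignment; consequently, once all descendant positions have been placed, the $k_W$ positions with $V_j = W$ have in turn $|W| - (K_W - k_W),\, |W| - (K_W - k_W) - 1,\, \ldots,\, |W| - K_W + 1$ choices. Multiplying these factors over the forest yields the closed form
\[
\#\{\pi \in S_n : \pi(j) \in V_j \text{ for all } j\} \;=\; \prod_{W \in \F} \frac{(|W| - K_W + k_W)!}{(|W| - K_W)!},
\]
which is positive iff Hall's condition $K_W \le |W|$ holds for every $W \in \F$. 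Both the evaluation and the Hall check are clearly polynomial time, and combined with the factor $2^{\max(0,\, n - s^* - 1)}$ from the previous paragraph this produces the required count; positivity of the count is equivalent to feasibility.

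The main obstacle is justifying the bottom-up product formula, namely ensuring that the number of remaining elements of $W$ after processing all of its descendant subtrees depends only on the sizes $|W'|$ and $K_{W'}$ of those subtrees and not on the particular bijection chosen inside them. Laminarity is essential here: for any two siblings $W_1, W_2$ in the forest one has $W_1 \cap W_2 = \emptyset$, so the choices made inside different sibling subtrees are combinatorially independent, and each element of $W$ lies in at most one child of $W$. Once this partitioning is established, a straightforward induction on the depth of the laminar tree yields the formula, and all remaining steps—constructing the candidate sets from $\Ha$, building the inclusion forest, verifying Hall's inequalities, and multiplying the factorial factors with $2^{\max(0,\, n - s^* - 1)}$—are routine polynomial-time bookkeeping.
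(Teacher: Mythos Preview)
Your proof is correct and follows essentially the same approach as the paper: reduce to Theorem~\ref{thm:knowledge}, observe that the number of admissible $z$ per admissible $\pi$ is the fixed power of two $2^{\max(0,\,n-s^*-1)}$, and then exploit laminarity of the $V_j$'s to count admissible permutations by a product formula. The paper's formula $\prod_{i}(|V_i|-|\{j<i:V_j\subseteq V_i\}|)$, obtained by fixing $\pi(1),\ldots,\pi(n)$ greedily in index order, and your formula $\prod_{W\in\F}(|W|-K_W+k_W)!/(|W|-K_W)!$, obtained by grouping indices by their common candidate set and processing the inclusion forest bottom-up, are two presentations of the same product once one notes that laminarity (in the indexed form $V_i\subseteq V_j$ for $i<j$ whenever they intersect) forces all strict-subset indices to precede the indices with $V_j=W$.

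One organizational difference: the paper first decides feasibility via a generic bipartite perfect-matching test and then separately constructs a compatible $z$ by explicit rules (essentially re-deriving one direction of Theorem~\ref{thm:knowledge}), whereas you invoke Theorem~\ref{thm:knowledge} directly and read off feasibility as positivity of the count, checking Hall's condition only on the sets $W\in\F$. Your route is a bit more economical; the paper's route has the virtue of making explicit why no contradiction arises when building $z$. Either way the content is the same.
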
 
\begin{proof}
We first show that feasibility can be checked in polynomial time.
Let $\Ha=(x^i,s^i)_{i=1}^t$ be given. 
Construct the sets $V_1, \ldots, V_n$ as described in Theorem~\ref{thm:knowledge}. 
Now construct a bipartite graph $G(V_1,\ldots,V_n)$ with
node set $[n]$ on both sides. Connect $j$ to all nodes in $V_j$ on the other
side. Permutations $\pi$ with $\pi(j) \in V_j$ for all~$j$ are in one-to-one
correspondence to 
perfect matchings in this graph. We recall that perfect matchings can be computed efficiently in bipartite graphs, e.g., using the Ford-Fulkerson algorithm, which essentially treats the matching problem as a network flow problem. If there is no perfect matching, the history
in infeasible. Otherwise, let $\pi$ be any permutation with $\pi(j) \in V_j$ for all $j$. We next construct
$z$. We use the obvious rules:\smallskip

\begin{compactenum}[\hspace{\parindent}(a)]
\item If $i = \pi(j)$ and $j \le s^h$ for some $h \in [t]$ then set $z_i
:= x_i^h$. 
\item  If $i = \pi(j)$ and $j = s^h + 1$ for some $h \in [t]$ then set $z_i := 1 - x_i^h$. 
\item If $z_i$ is not defined by one of the rules above, set it to an arbitrary
value.
\end{compactenum}\smallskip

\noindent We need to show that these rules do not lead to a contradiction. Assume
otherwise. There are three ways, in which we could get into a contradiction.
There is some $i \in [n]$ and some $x^h, x^\ell \in \{0,1\}^n$ \smallskip

\begin{compactenum}[\hspace{\parindent}(1)]
\item setting $z_i$ to opposite values by rule (a)
\item setting $z_i$ to opposite values by rule (b)
\item setting $z_i$ to opposite values by rules (b)
applied to $x^h$ and rule (a) applied to $x^{\ell}$. 
\end{compactenum}\smallskip

\noindent In each case, we readily derive a contradiction.
In the first case, we have 
$j \le s^h$, $j \le s^{\ell}$ and $x_i^h \not= x_i^{\ell}$. 
Thus $\pi(j) = i \not\in V_j$ by rule (1). 
In the
second case, we have $j = s^h+1 = s^{\ell}+1$ and $x_i^h \not= x^{\ell}_i$. Thus $i \not\in V_j$ by (2). 
In the third
case, we have $j = s^h + 1$, $j \le s^{\ell}$, and $x_i^h = x_i^{\ell}$. Thus $i \not\in V_j$ by (3). 

Finally, the pair $(z,\pi)$ defined in this way is clearly consistent with the
history. \medskip

Next we show how to efficiently compute the number of consistent pairs. We
recall Hall's condition for the existence of a perfect matching in a bipartite
graph. A perfect matching exists if and only if $\abs{ \bigcup_{j \in J} V_j} \ge
\abs{J}$ for every $J \subseteq [n]$. 
According to Theorem~\ref{thm:knowledge}, the guessing history $\Ha$ can be equivalently described by a state
$(V_1,\ldots, V_{s^* + 1}, x^*,s^*)$. 
How many pairs $(z,\pi)$ are compatible with this state?

Once we have chosen $\pi$, there are exactly $2^{n - (s^* + 1)}$ different
choices for $z$ if $s^* < n$ and exactly one choice if $s^* = n$. The
permutations can be chosen in a greedy fashion. We fix $\pi(1), \ldots,
\pi(n)$ in this order. The number of choices for $\pi(i)$ equals 
$\abs{V_i}$ minus the number of $\pi(j)$, $j < i$, lying in $V_i$. If $V_j$ is
disjoint from $V_i$, $\pi(j)$ never lies in $V_i$ and if $V_j$ is contained in
$V_i$, $\pi(j)$ is always contained in $V_i$. Thus the number of permutations
is equal to
\[                    
\prod_{1 \le i \le n} \left(\abs{V_i} - \abs{\set{j < i}{V_j
\subseteq V_i}}\right)\,.\]
It is easy to see that the greedy strategy does not violate Hall's condition.
\end{proof}

We note that it is important in the proof of Theorem~\ref{thm:consistency} that the sets $V_j$ form a laminar family; 
counting the number of perfect matchings in a general bipartite graph is
\#P-complete~\cite{Valiant79}. 

From the proof can also derive which values in $V_i$ are actually possible as value for $\pi(i)$. This will be described in the next paragraph, which a reader only interested in the main results can skip without loss. 
A value $\ell \in V_i$ is feasible if there is a perfect matching in the graph
$G(V_1,\ldots,V_n)$ containing the edge $(i,\ell)$. The existence of such a
matching can be decided in polynomial time; we only need to test for a perfect
matching in the graph $G \setminus \sset{i,\ell}$. Hall's condition says that
there is no such perfect matching if there is a set $J \subseteq [n] \setminus
\{i\}$ such that $\abs{\bigcup_{j \in J} V_j \setminus \{\ell\}} < \abs{J}$. Since $G$
contains a perfect matching (assuming a consistent history), this implies 
 $\abs{\bigcup_{j \in J} V_j} = \abs{J}$; i.e.,
  $J$ is tight for Hall's condition. We have now shown: Let $\ell \in V_i$. Then
  $\ell$ is infeasible for $\pi(i)$ if and only if there is a tight set $J$ with $i
  \not\in J$ and $\ell \in \bigcup_{j \in J} V_j$. Since the $V_i$ form a laminar
  family, minimal tight sets have a special form; they consist of an $i$ and all
  $j$ such that $V_j$ is contained in $V_i$. In the counting formula for the
  number of permutations such $i$ are characterized by $\abs{V_i} - \abs{\set{j < i}{V_j
\subseteq V_i}}= 1$. In this situation, the elements of $V_i$ are infeasible
for all $\pi(j)$ with $j > i$. We may subtract $V_i$ from each $V_j$ with $j >
i$. 

If Hall's condition is tight for some $J$, i.e., $\abs{\bigcup_{j \in J} V_j} =
\abs{J}$, we can easily learn how $\pi$ operates on $J$. 
We have $V_j = [n]$ for
$j > s^*+1$ and hence the largest index in $J$ is at most $s^* + 1$. 
Perturb $x^*$ by flipping each bit in $\bigcup_{j \in
J} V_j$ exactly once. The scores determine the permutation.

\section{Deterministic Complexity}
\label{sec:deterministic}

We settle the deterministic query complexity of the $\ourproblem$
problem. The upper and lower bound match up to a small constant factor. Specifically, we prove

\begin{theorem}
\label{thm:deterministic}
The deterministic query complexity of the $\ourproblem$ problem  with $n$ positions is $\Theta(n \log n)$.
\end{theorem}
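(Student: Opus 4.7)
The plan is to prove matching bounds of $O(n \log n)$ and $\Omega(n \log n)$.

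For the upper bound, I would exhibit an explicit deterministic strategy based on binary search within the candidate sets $V_j$ from Theorem~\ref{thm:knowledge}. The strategy maintains the current best query $x^*$ and its score $s^* = k$, and spends $\lceil \log_2 n \rceil$ queries per position as follows. To pin down $\pi(k+1)$, the algorithm picks a subset $H \subseteq V_{k+1}$ of size $\lceil |V_{k+1}|/2 \rceil$ and queries $x^* \oplus e^n_H$. Because $x^*_{\pi(k+1)} \neq z_{\pi(k+1)}$, the update rules from Theorem~\ref{thm:knowledge} guarantee that $V_{k+1}$ shrinks to either $H$ (if the returned score is at least $k+1$, via Rule~3) or to $V_{k+1} \setminus H$ (if the score remains $k$, via Rule~2), so $|V_{k+1}|$ is halved regardless of the response. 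Hence $\lceil \log_2 n \rceil$ queries suffice to determine $\pi(k+1)$ together with $z_{\pi(k+1)}$, and summing over the $n$ positions gives $O(n \log n)$; an occasional ``lucky jump'' that advances $s^*$ by more than one only reduces the remaining work.

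For the lower bound, I would give an adversary argument. The adversary maintains the state $(V_1, \ldots, V_n, x^*, s^*)$ from Theorem~\ref{thm:knowledge} and, for each query~$x$, responds with the smallest score~$s$ for which the updated state is still feasible (efficiently checkable via Theorem~\ref{thm:consistency}). Under this policy, advancing $s^*$ from $k$ to $k+1$ is only possible once the agreement set $I = \{i : x_i = x^*_i\}$ of the query with $x^*$ satisfies $V_j \subseteq I$ for $j \le k$ and $V_{k+1} \cap I = \emptyset$; by the laminar structure of the $V_j$'s this forces $V_{k+1}$ to be disjoint from $V_1 \cup \cdots \cup V_k$. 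While this is not yet the case, each query at most halves $|V_{k+1}|$, because the adversary picks its response among the consistent choices to retain the larger of $V_{k+1} \cap I$ and $V_{k+1} \setminus I$ under the corresponding update. Since $V_{k+1}$ starts at size at most $n - k$, this costs $\Omega(\log(n-k))$ queries per position, and summing $\sum_{k=0}^{n-1} \log(n-k) = \log n! = \Theta(n \log n)$ gives the lower bound.

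The main technical obstacle is formalizing the halving claim: the adversary has up to $n+1$ possible consistent responses, each triggering one of the three update patterns in Theorem~\ref{thm:knowledge}. I would package the analysis using a potential $\Phi = \sum_{j=1}^n \lceil \log_2 |V_j| \rceil$, which starts at roughly $n \log n$ and must reach $0$ before the secret is uniquely identified. The key lemma is that, for any query the algorithm poses, the adversary can always select a consistent response under which $\Phi$ drops by at most $O(1)$; this reduces to a case analysis over Rules~1--3 together with the laminar structure of the candidate sets, using that whenever $s^* \leq j$ only $V_j, \ldots, V_{s^*+1}$ can be modified and that the adversary's choice between ``stay'' and ``advance'' responses always keeps the larger side of the split.
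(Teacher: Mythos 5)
Your upper bound is correct and is essentially the paper's: a position-by-position binary search on the candidate set $V_{k+1}$, spending $O(\log n)$ queries per position. The gap is in the lower bound.

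First, your adversary is described inconsistently, and the version you actually name --- ``respond with the smallest score for which the updated state is still feasible'' --- is broken. From the initial state ($s^*=0$, $V_1=[n]$, $x^*=0^n$), let the algorithm flip all positions except one element $p$ of $V_1$. The response $0$ is feasible, so your adversary gives it; but by Rule~1/2 this sets $V_1 \assign V_1\cap I=\{p\}$, revealing $\pi(1)$ (and $z_{\pi(1)}$) in a single query. Iterating gives an $O(n)$ algorithm against this adversary. The ``keep the larger half of $V_{k+1}$'' policy you also mention is the right instinct, but then the score is sometimes forced to advance, and under Rule~2 a response equal to $s^*$ shrinks \emph{all} of $V_1,\dots,V_{s^*+1}$ simultaneously --- which is exactly the interaction your per-query accounting must control and does not.

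Second, your key lemma --- for every query the adversary has a consistent response under which $\Phi=\sum_j\log|V_j|$ drops by $O(1)$ --- is false at reachable states. Suppose $s^*=k$ and $V_1,\dots,V_{k+1}$ all have size at least $\sqrt{k}$, and the algorithm flips a $2^{-D/2}$-fraction of each of these $k+1$ sets. A response $s<k$ (or $s>k$) collapses $V_{s+1}$ (resp.\ $V_{k+1}$) to its flipped part, costing at least $D/2$ in potential; the response $s=k$ shrinks all $k+1$ sets by a $(1-2^{-D/2})$-factor, costing about $(k+1)2^{-D/2}$. Taking $D=\log k$ makes every consistent answer cost $\Omega(\log k)$. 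This parallel attack is not pathological --- it is precisely the mechanism behind the paper's randomized $O(n\log\log n)$ upper bound --- so a per-query $O(1)$ bound cannot hold unconditionally; one must additionally argue that the adversary prevents such states, i.e.\ keeps the score (and hence the number of non-trivial candidate sets in play) small. That is what the paper's adversary does: it answers only $0$ or $1$ within each phase so that only $V_1$ and $V_2$ are ever affected, halves $V_1$ per query, charges $\Omega(\log n)$ queries for each \emph{pair} of positions revealed, and stops the accounting once the score reaches $n/2$. (A further, more minor, issue: $\Phi=0$ is not the right termination condition, since by Theorem~\ref{thm:consistency} the secret can be uniquely determined while some $V_j$ are still non-singletons, the permutation count being $\prod_i(|V_i|-|\{j<i : V_j\subseteq V_i\}|)$.) So the skeleton is reasonable, but the central claim your argument rests on needs to be replaced by an adversary that actively restricts which candidate sets a query can touch.
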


\begin{proof}
The \emph{upper bound} is achieved by an algorithm that resembles binary search
and iteratively identifies $\pi(1), \ldots, \pi(n)$ and the corresponding bit
values $z_{\pi(1)}, \ldots, z_{\pi(n)}$: We start by setting the set $V$ of
candidates for $\pi(1)$ to $[n]$ and by determinining a string $x$ with score
0; either the all-zero-string or the all-one-string will  work. We iteratively
reduce the size of $V$ keeping the invariant that $\pi(1) \in V$. We select an
arbitrary subset $F$ of $V$ of size $\abs{V}/2$ and create $y$ from $x$ by
flipping the bits in $F$. If $f_{z,\pi}(y) = 0$, $\pi(1) \not\in F$, if If
$f_{z,\pi}(y) > 0$, $\pi(1) \in F$. In either case, we essentially halve the
size of the candidate set. Continuing in this way, we determine $\pi(1)$ in 
$O(\log n)$ queries. Once $\pi(1)$ and
$z_{\pi(1)}$ are known, we iterate this strategy on the remaining bit positions
to determine $\pi(2)$ and $z_{\pi(2)}$, and so on, yielding an $O(n \log n)$
query strategy for identifying the secret. The details are 
given in Algorithm~\ref{alg:nlogn}. 

\begin{algorithm2e}[t]
 \textbf{Initialization:} 
 $x \assign (0,\ldots,0)$\;
\For{$i=1,...,n$}{
\tcp{$f(x) \ge i - 1$ and $\pi(1),\ldots,\pi(i-1)$ are already determined}
	$\pi(i) \assign {\texttt{BinSearch}}(x, i, [n] \setminus \sset{\pi(1),\ldots,\pi(i-1)})$\;
               Update $x$ by flipping $\pi(i)$\;
}\medskip
\tcp{where $\texttt{BinSearch}$ is the following function.}
$\texttt{BinSearch}(x,i,V)$\\
\lIf{$f(x) > i - 1$}{update $x$ by flipping all bits in $V$\;}
\While{$|V|> 1$}{
\tcp{$\pi(i) \in V$, $\pi(1),\ldots,\pi(i-1) \not\in V$,  and $f(x) = i-1$}
 Select a subset $F \subseteq V$ of size $|V|/2$\; 
 Create $y$ from $x$ by flipping all bits in $F$ and query $f(y)$\;
 \lIf{$f(y) = i - 1$}
 {$V \assign V\backslash F$\;}
 \lElse{$V \assign F$\;}
}
\Return the element in $V$\;
\caption{A deterministic $O(n \log n)$ strategy for the $\ourproblem$
  problem. We write $f$ instead of $f_{z,\pi}$}
\label{alg:nlogn}
\end{algorithm2e}

The \emph{lower bound} is proved by examining the decision tree of the
deterministic query scheme and exhibiting an input for which the number
of queries asked is high. 
More precisely, we show that for every deterministic strategy, there exists an input $(z,\pi)$ such that after $\Omega(n \log n)$ queries the maximal score ever returned is at most $n/2$. This is done by a simple adversarial argument:
First consider the
root node $r$ of the decision tree. 
Let $x^1$ be the first query. We proceed to the child corresponding to score 1. According
to the rules from the preceding section $V_1$ to $V_n$ are initialized to
$[n]$. Let $x$ be the next query asked by the
algorithm and let $I$ be the set of indices in which $x$ and $x^1$ agree. 
\begin{compactenum}[\hspace{\parindent}(1)]
\item If we would proceed to the child corresponding to score $0$, then $V_1$
  would become $V_1
  \setminus I$ and $V_2$ would not not change according to Rule 1. 
\item If we would proceed to the child corresponding to score $1$, then $V_1$
  would become $V_1 \cap I$ 
and $V_2$ would become $V_2 \cap I$ according to Rule 2. 
\end{compactenum}
We proceed to the child where the size of $V_1$ at most
halves. Observe that $V_1 \subseteq V_2$ always, and the maximum score is $1$. Moreover, $V_3$ to $V_n$ are not affected. 

We continue in this way until $\abs{V_1} = 2$. Let $v$ be the vertex of the
decision tree reached. Query $x^* = x^1$ is still the query with maximum
score. We choose $i_1 \in V_1$ and
$i_2 \in V_2$ arbitrarily and consider the 
subset of all inputs for which $i_1 =\pi(1)$, $i_2 =\pi(2)$,
$z_{i_1}=x^*_{i_1}$, and $z_{i_2}=1-x^*_{i_2}$. For all such inputs, the
query path followed in the decision tree descends from the root to the
node $v$. For this collection of inputs, observe that there is one
input for every assignment of values to $\pi(3),\dots,\pi(n)$
different from $i_1$ and $i_2$, and for every assignment of $0/1$ values to
$z_{\pi(3)},\dots,z_{\pi(n)}$. Hence we can recurse on this subset of
inputs starting at $v$ ignoring $V_1,V_2,\pi(1),\pi(2),z_{\pi(1)}$,
and $z_{\pi(2)}$. The setup is identical to what we started out
with at the root, with the problem size decreased by 2. 
We proceed this way, forcing $\Omega(\log n)$ queries for every two
positions revealed, until we have returned a score of $n/2$ for the first time. At this point, we have forced at least $n/4 \cdot \Omega(\log n) = \Omega(n \log n)$ queries. 
\end{proof}

\section{The Randomized Strategy}
\label{sec:upper}
\label{SEC:UPPER}

We now show that the randomized query complexity is only $O(n \log\log
n)$. The randomized strategy  overcomes the sequential learning
process of the binary search strategy (typically revealing a constant amount of
information per query) and instead has a typical information gain of
$\Theta(\log n /\log\log n)$ bits per query. In the language of the candidate
sets $V_i$, we manage to reduce the sizes of many of these sets in parallel, that is,
we gain information on several values $\pi(i)$ despite the seemingly sequential way the function 
$f_{z,\pi}$ offers information. 
The key to this is using partial information
given by the $V_i$ (that is, information that does not determine $\pi_i$, but
only restricts it) to guess with good probability an $x$ with
$f_{z,\pi}(x)>s^*$.

\begin{theorem}
\label{thm:upper}
The randomized query complexity of the $\ourproblem$ problem with $n$ positions is $O(n \log\log n)$.
\end{theorem}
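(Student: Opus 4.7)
The plan is to exhibit an explicit randomized strategy and show its expected cost is $O(n \log\log n)$ on every input $(z,\pi)$. The strategy maintains the candidate-set state $(V_1,\ldots,V_n,x^*,s^*)$ from Section~\ref{sec:definitions} and updates it after each query via the incremental rules given there. The guiding principle is that each query should yield $\Omega(\log n / \log\log n)$ bits of information about $\pi$ in expectation, so that the $\Theta(n \log n)$ bits needed to determine $\pi$ can be extracted within the target budget.

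The algorithm runs in outer iterations, each advancing $s^*$ by at least one. When the current candidate set $V_{s^*+1}$ is still large, it performs a \emph{halving query}: flip a uniformly random subset $F \subseteq V_{s^*+1}$ of size $\lfloor |V_{s^*+1}|/2 \rfloor$ in $x^*$ and ask. The returned score, via Rules~1--3, splits $V_{s^*+1}$ into two halves, and by laminarity the same refinement cascades into $V_{s^*+2}, V_{s^*+3}, \ldots$. So a halving query does double duty: it narrows $\pi(s^*+1)$ while also shrinking several later candidate sets simultaneously. Once $|V_{s^*+1}|$ falls below a threshold $k = \Theta(\log n / \log\log n)$, the algorithm switches to \emph{guessing queries}: sample $i \in V_{s^*+1}$ according to the posterior distribution over consistent $(z,\pi)$ (which is tractable by Theorem~\ref{thm:consistency}) and query $x^* \oplus e^n_i$. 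With probability at least $1/k$ the guess correctly identifies $\pi(s^*+1)$; and because the posterior assigns non-negligible weight to a consistent completion of several subsequent positions, a successful guess typically jumps $s^*$ by $\Theta(\log n / \log\log n)$ positions at once.

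The analysis would be a potential argument. I would define $\Phi := \sum_{j>s^*} \log |V_j|$, which starts at most $n \log n$ and must be driven to $0$. The two claims to establish are: (i) a halving query decreases $\Phi$ by $\Omega(\log n / \log\log n)$ in expectation, once one accounts for the cascaded reductions of later $V_j$; and (ii) in the guessing regime the expected number of queries per unit advance in $s^*$ is $O(\log\log n)$, because each guess advances $s^*$ with probability $\Omega(1/k)$ and, upon success, advances by $\Omega(\log n / \log\log n)$ on average. Combining these yields the desired $O(n \log\log n)$ bound.

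The main obstacle will be the amortized analysis that links ``bits learned about later $V_j$'s'' to genuine progress of $s^*$. Concretely, one must quantify how much a halving query propagates through the laminar family $(V_j)$ on average and show that, in the guessing phase, the posterior-weighted distribution over $\pi(s^*+1)$ concentrates sharply enough that guesses succeed often. Both of these steps rely on the structural results of Section~\ref{sec:definitions}, in particular on the greedy product formula for the number of permutations consistent with $(V_1,\ldots,V_n)$ derived inside the proof of Theorem~\ref{thm:consistency}, which gives explicit control over the posterior from which guesses are drawn.
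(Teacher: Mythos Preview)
Your proposal rests on a claim that is false: halving queries do \emph{not} cascade to $V_{s^*+2}, V_{s^*+3}, \ldots$. Inspect the incremental update rules stated after Theorem~\ref{thm:knowledge}: whatever score a query returns, only $V_1,\ldots,V_{s^*+1}$ are touched (here $s^*$ is the \emph{old} maximum); every $V_j$ with $j>s^*+1$ remains equal to $[n]$. Laminarity runs the wrong way for your argument---for $i<j$ it gives $V_i\subseteq V_j$ or $V_i\cap V_j=\emptyset$, so shrinking the inner set $V_{s^*+1}$ places no constraint on the outer set $V_{s^*+2}$. Claim~(i) therefore has no basis: a halving query reveals one bit about $\pi(s^*+1)$ and nothing about any later position.

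Claim~(ii) fails for the same reason. If the guess $i=\pi(s^*+1)$ is correct, the query $x^*\oplus e^n_i$ advances $s^*$ only to the first $j>s^*+1$ with $x^*_{\pi(j)}\ne z_{\pi(j)}$. Since no information about $\pi(j)$ for $j>s^*+1$ has been accumulated (those $V_j$ are still $[n]$), there is no mechanism forcing $x^*$ to agree with $z$ at those later positions on a worst-case input, and the expected jump is $O(1)$, not $\Theta(\log n/\log\log n)$. Sampling from the posterior over $V_{s^*+1}$ alone cannot manufacture agreement at positions about which nothing is known. Your outer iteration thus spends $\log(n/k)$ halving queries plus an expected $\Theta(k)$ guesses to advance $s^*$ by $O(1)$, giving $\Theta(n\log n)$ total---no better than the deterministic binary search of Section~\ref{sec:deterministic}.

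The missing idea is that parallelism must be engineered on the \emph{already-passed} candidate sets $V_j$, $j\le s$, not on the future ones. The paper first pushes $s$ forward while leaving each passed $V_j$ only moderately small and pairwise disjoint; it then runs a routine (\texttt{ReductionStep}) that, from a string $x$ with $f(x)\ge\max J$, flips a random $1/k$-fraction of each of $k$ disjoint sets $V_j$, $j\in J$, simultaneously. The returned score either names the unique $V_j$ that was hit (shrinking it by a factor $k$) or certifies an ``off-trial'' (shrinking every $V_j$ by a factor $1-1/k$); in $O(k)$ queries all $k$ sets shrink by a factor $k$. Organising these parallel reductions into a $\Theta(\log\log n)$-level hierarchy with geometrically growing capacities is what produces the $O(n\log\log n)$ bound.
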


The strategy has two parts. 
In the first part, we identify the positions $\pi(1), \ldots, \pi(q)$
and the corresponding bit values $z_{\pi(1)}, \ldots, z_{\pi(q)}$ for some $q
\in n - \Theta(n/\log n)$ with $O(n \log \log n)$ queries.
In the second part, we find the remaining $n-q \in \Theta(n/\log n)$ positions
and entries using the binary search algorithm with $O(\log n)$ queries per
position.

\subsection{A High Level View of the First Part}
\label{subsec:outline:main}

We give a high level view of the first part of our randomized strategy. Here and in the
following we denote by $s^*$ the current best score,  
and by $x^*$ we denote a corresponding query; i.e., $f_{z,\pi}(x^*)=s^*$. 
For brevity, we write~$f$ for~$f_{z,\pi}$.

The goal of any strategy must be to increase $s^*$ and to gain more information
about $\pi$ by reducing the sets $V_1, \ldots, V_{s^*+1}$. 
Our strategy carefully balances the two subgoals. If $\abs{\bigcup_{i \le s^*+1} V_i}$ is ``large'', it concentrates on reducing sets,
if $\abs{\bigcup_{i \le s^*+1} V_i}$ is ``small'', it concentrates on increasing $s^*$. The latter will simulatenously reduce $V_{s^*+1}$. 

We arrange the candidate sets $V_1$ to $V_n$ into
$t+2$ \emph{levels} 0 to $t+1$, where $t = \Theta(\log \log n)$. 
Initially, all candidate sets are on level $0$, and we have $V_i=[n]$ for all $i \in [n]$.
The sets in level $i$ have larger index than the sets in level
$i+1$. Level $t+1$ contains an initial segment of candidate sets, and all candidate sets on level $t+1$
are singletons, i.e., we have identified the corresponding $\pi$-value. 
On level $i$, $1 \leq i \leq t$, we can have 
up to $\alpha_i$ sets. We also say that the \emph{capacity} of level $i$ is $\alpha_i$. 
The size of any set on level $i$ is at most $n/\alpha_i^d$, where $d$ is any
constant greater than or equal to $4$. We choose $\alpha_1 = \log n$, $\alpha_i =
\alpha_{i-1}^2$ for $1 \le i \le t$ and $t$ maximal such that $\alpha_t^d \le
n$. 
Depending on the status (i.e., the fill rate) of these levels, either we try to
increase $s^*$, or we aim at reducing the sizes of the candidate sets. 

The algorithm maintains a counter $s \le s^*$ and strings $x,y \in \sset{0,1}^n$
with $f(x) = s < f(y)$. The following invariants hold for the candidate sets $V_1$ to
$V_n$:\smallskip

\begin{compactenum}[\hspace{\parindent}(1)]
\item $\pi(j) \in V_j$ for all $j$. \label{pi(j) in Vj}
\item The sets $V_j$, $j \le s$, are pairwise disjoint. \label{pairwise disjoint}
\item $V_j = [n]$ for $j > s$.
\item $V_j \setminus \{\pi(j)\}$ is random. More precisely, there is a set
  $V_j^* \subseteq [n]$ such that $\pi(j) \in V_j$ and $V_j \setminus \sset{\pi(j)}$ is a uniformly random
  subset of $V_j^* \setminus \sset{\pi(j)}$ of size
  $\abs{V_j} - 1$. \label{random subset}
\end{compactenum}\smallskip

\begin{algorithm2e}[h!]
  \textbf{Input:} 
	Number of levels $t$. 
	Capacities $\alpha_1, \ldots, \alpha_t \in \N$ of the levels $1, \ldots, t$.
	Score $q \in n - \Theta(n/\log n)$ that is to be achieved in the first phase.
	Positive integer $d \in \N$.\\
  \textbf{Main Procedure}\\
  $V_1, \ldots, V_n \assign [n]$\,; \tcp{$V_i$ is the set of candidates for $\pi(i)$}
  $s \assign 0$\,; \tcp{$s$ counts the number of successful iterations}
  $x \assign 0^n$; $y \assign 1^n$; $J \assign \emptyset$\,; 
  \lIf{$f(x) > 0$} {swap $x$ and $y$\,; \tcp{$f(x) = s < f(y)$ and $J = \sset{1,\ldots,s}$}}
  \While{$|J| < q$}{%
     \tcp{$J = [s]$, $V_j = \sset{\pi(j)}$ for $j \in J$, $f(x) = s < f(y)$, and
       $\pi(s+1) \in [n] \setminus \bigcup_{j \le s} V_j$ }
     $J' \assign \Advance(t)$\,; \tcp{$J' \not= \emptyset$}
     Reduce the size of the sets $V_j$ with $j \in J'$ to $1$ by calling
		 $\SizeReduction(\alpha_t,J',1,x)$\label{line:subroutinetwo1}\;
     $J \assign J \cup J'$\;
}
  Part 2: Identify values $\pi(n-q+1), \ldots, \pi(n)$ and
  corresponding bits using $\texttt{BinSearch}$\; \medskip
\tcp{where $\Advance$ is the following function.}
  $\Advance(\text{level } \ell)$\\
 \tcp{$\pi(s+1) \not\in \bigcup_{j=1}^s V_j$, $f(x) = s < f(y)$ and invariants (\ref{pi(j) in Vj}) to (\ref{random subset})
    hold. }
\tcp{returns a set $J$ of up to $\alpha_\ell$ indices such that $\abs{V_j} \le n/\alpha_\ell^d$ for all $j \in J$}
  $J \assign \emptyset$\; \label{line:begininnermost}
  \While{$\abs{J} \le \alpha_\ell$}{
  \tcp{$\pi(s+1) \not\in \bigcup_{j=1}^s V_j$, $f(x) = s < f(y)$ and invariants (\ref{pi(j) in Vj}) to (\ref{random subset})
    hold. $\abs{V_j} \le n/\alpha_\ell^d$ for $j \in J$.}
  \eIf{$\ell = 1$}{
   $V_{s+1}^* \assign [n] \setminus \bigcup_{j=1}^s V_j$ \; \label{line:V*}
   $V_{s+1}  \assign \RandBinSearch(x, s+1, V_{s+1}^*, n/\alpha_1^d)$\,; \tcp{Reduce
                $|V_{s+1 }|$ to $n/\alpha_1^d$} \label{querysub1}
	$s \assign s+1$\;
 		$J \assign J \cup \{s\}$\;
 		$x \assign y$\,; \tcp{establishes $f(x) \ge s$} \label{line:endinnermost}
	 	}
{ 
    $J' \assign \Advance(\ell - 1)$\,; \tcp{$J' \not= \emptyset$, and $s = \max
      J' \le f(x)$}
     Reduce the sets $V_j$, $j \in J'$, to size $n/\alpha_\ell^d$ using
     $\SizeReduction(\alpha_{\ell-1},J',n/\alpha_{\ell}^d,x)$\label{line:subroutinetwo2}\;
    $J \assign J \cup J'$\;

}

Create $y$ from $x$ by flipping all bits in $[n]\backslash
  \bigcup_{j=1}^s{V_j}$ and query $f(y)$ \;\label{query1}
 \If{($f(x) > s$ \normalfont{and} $f(y) > s$) \normalfont{or} ($f(x) = s$
   \normalfont{and} $f(y) = s$)}{{\bf break}\,; \tcp{$\pi(s+1) \in \bigcup_{j=1}^s
     V_j$; failure on level $\ell$}}\label{query1a}
   \lIf{$f(x)>s$}{swap $x$ and $y$\,; \tcp{ $\pi(s+1) \not\in \bigcup_{j=1}^s V_j$
       and $f(x) = s < f(y)$}}     
}
return $J$\;
\caption{The $O(n \log\log n)$ strategy for the $\ourproblem$ problem with $n$ positions.}
\label{alg:upper}
\end{algorithm2e} 

Our first goal is to increase $s^*$ to $\log n$ and to move the sets $V_1,
\ldots, V_{\log n}$ to the first level, i.e., to decrease their size to
$n/\alpha_1^d = n/\log^d n$. This is done sequentially. 
We
start by querying $f(x)$ and $f(y)$, where $x$ is arbitrary and $y = x \oplus
1^n$ is the bitwise complement of $x$. By swapping $x$ and $y$ if needed, we may assume $f(x) = 0 < f(y)$. We
now run a randomized binary search for finding $\pi(1)$. We choose uniformly at
random a subset $F_1 \subseteq V_1$ ($V_1=[n]$ in the beginning) of size $|F_1| = |V_1|/2$. 
We query $f(y')$ where $y'$ is obtained from $x$ by flipping the bits in
$F_1$. 
If $f(y') = 0$, we set $V_1 \assign V_1\setminus F_1$; we set $V_1 \assign F_1$
otherwise. This ensures $\pi(1) \in V_1$
and invariant (\ref{random subset}). 
We stop this binary search once $\pi(2)
\not\in V_1$ is sufficiently likely; the analysis will show that $\Pr[\pi(2)
\in V_1] \leq 1/\log^d n$ (and hence $|V_1| \leq n/\log^d n$) for some large
enough constant $d$ is a good choice.

We next try to increase $s$ to a value larger than one and to simultaneously decrease the size of
$V_2$. Let $\{x,y\} = \{y,y \oplus 1_{[n]\setminus V_1}\}$. If $\pi(2) \not\in V_1$, one of
$f(x)$ and $f(y)$ is one and the other is larger than one. 
Swapping $x$ and
$y$ if necessary, we may assume $f(x) = 1 < f(y)$. We use randomized
binary search to reduce the size of $V_2$ to $n/\log^d n$. The randomized
binary search is similar to before. 
Initially, $V_2$ is equal to $V_2^* = [n] - V_1$. 
At each step we chose a subset $F_2 \subseteq V_2$ of size $|V_2|/2$ and we create $y'$ from $x$ by flipping the bits in positions $F_2$. 
If $f(y') > 1$ we update $V_2$ to $F_2$ and we update $V_2$ to $V_2 \setminus F_2$ otherwise. We stop once $|V_2| \leq n/\log^d n$.

At this point we have $|V_1|, |V_2| \leq n/\log^d n$ and $V_1 \cap V_2 = \emptyset$. 
  We hope that $\pi(3) \notin V_1 \cup V_2$, in which case we can increase
  $s$ to three and move set $V_3$ from level $0$ to level $1$ by random
  binary search 
  (the case $\pi(3) \in V_1 \cup V_2$ is called a \emph{failure} and will be treated separately at the end of this overview).

At some point the probability that $\pi(i) \notin V_1 \cup \ldots \cup V_{i-1}$
drops below a certain threshold and we cannot ensure to make progress anymore
by simply querying $y \oplus ([n]\backslash (V_1 \cup \ldots \cup
V_{i-1}))$. 
This situation is reached when $i = \log n$ and
  hence we
abandon the previously described strategy once $s= \log n$. At
this point, we move our focus from increasing $s$ to
reducing the size of the candidate sets $V_1, \ldots, V_{s}$, thus adding
them to the second level. More precisely, we reduce their sizes to at most
$n/\log^{2d} n = n/\alpha_2^d$. This reduction is carried out by
$\SizeReduction$, which we describe in Section~\ref{subsec:subroutine2}.  
It reduces the sizes of the up to $\alpha_{\ell - 1}$
  candidate sets from 
  some value $\leq n/\alpha_{\ell - 1}^d$ to 
  the target size $n/{\alpha_{\ell}^d}$ of level $\ell$ with 
an expected number of 
$O(1)\alpha_{\ell - 1} d (\log (\alpha_{\ell})-\log (\alpha_{\ell -
   1}))/\log (\alpha_{\ell-1})$ queries. 

Once the sizes $|V_1|, \ldots, |V_{s}|$ have been reduced to at most
$n/\log^{2d} n$, we move our focus back to increasing $s$. The probability
that $\pi(s+1) \in V_1 \cup \ldots \cup V_{s}$ will now be small enough 
(details below), and we proceed as before by
flipping in $x$ the entries in the positions $[n] \setminus (V_1 \cup \ldots \cup V_{s})$ and reducing the size
of $V_{s+1}$ to $n/\log^d n$.  
Again we iterate this process until the first level is filled; i.e., until we have $s=2 \log n$. 
As we did with $V_1, \ldots, V_{\log n}$, we reduce the sizes of 
$V_{\log n+1}, \ldots, V_{2\log n}$ to $n/\log^{2d} n = n/\alpha_2^d$, thus adding them to the second level.
We iterate this process of moving $\log n$ sets from level $0$ to level $1$ and
then moving them to the second level until $\log^2 n = \alpha_2$ sets have been added to the second level. 
At this point the second level has reached its capacity and we proceed by
reducing the sizes of $V_1, \ldots, V_{\log^2 n}$ to at most $n/\log^{4d} n = n/\alpha_3^d$, thus adding them to the \emph{third level.}

In total we have $t=O(\log \log n)$ levels. 
For $1 \leq i \leq t$, the $i$th level has a capacity of 
$\alpha_i:= \log^{2^{i-1}} n$ sets, each of which is required to be of size at most
$n/\alpha_i^d$.
Once level $i$ has reached its capacity, we reduce the size of the sets on the $i$th level to at most 
$n/\alpha_{i+1}^d$, thus moving them from level $i$ to level $i+1$. 
When $\alpha_t$ sets $V_{i+1}, \ldots, V_{i+\alpha_t}$ have been added to the last level, level $t$, we finally reduce their sizes to one.
This corresponds to determining 
$\pi(i+j)$ for each $j \in [\alpha_t]$.

\paragraph{Failures:} 
We say that a failure happens if we want to move some set $V_{s+1}$ from level 0 to level 1, but
$\pi(s+1) \in V_1 \cup \ldots \cup V_s$. In case of a failure, we immediately
stop our attempt of increasing $s$. Rather, we \emph{abort} the first level and
move all sets on the first level to the second one. 
As before, this is done by calls to $\SizeReduction$ which reduce the size of the sets from at most $n/\log^d n$ to at most $n/\log^{2d} n$.
We test whether we now have $\pi(s+1) \not\in V_1 \cup \ldots \cup V_s$.
Should we still have $\pi(s+1) \in V_1 \cup \ldots \cup V_s$, we continue by
moving all level $2$ sets to level $3$, and so on, until we finally have
$\pi(s+1) \not\in V_1 \cup \ldots \cup V_s$. 
At this point, we proceed again by moving sets from level $0$ to level $1$,
starting of course with set $V_{s+1}$. The condition $\pi(s+1) \not\in V_1 \cup
\ldots \cup V_s$
will certainly be fulfilled once we have moved $V_1$ to $V_s$ to level $t+1$,
i.e., have reduced them to singletons.

\paragraph{Part 2:} In the second part of Algorithm~\ref{alg:upper} we
determine the last $\Theta(n/\log n)$ entries of~$z$ and $\pi$.  This can be done as follows.
When we leave the first phase of Algorithm~\ref{alg:upper}, we have
$|V_1|=\ldots=|V_q|=1$ and $f(x)\geq q$. We can now proceed as in deterministic
algorithm (Algorithm~\ref{alg:nlogn}) and identify each of the remaining entries with
$O(\log n)$ queries. Thus the total number of queries in Part 2 is linear. 

Our strategy is formalized by Algorithm~\ref{alg:upper}. In what follows, we
first present the two subroutines, $\subroutineone$ and $\SizeReduction$. In
Section~\ref{subsec:proofupper}, we present the full proof of
Theorem~\ref{thm:upper}.

\subsection{Random Binary Search}
\label{subsec:subroutine1}

$\RandBinSearch$ is called by the function $\Advance(1)$. It reduces the size
of a candidate set from some value $v \le n$ to some value $\ell<v$ in $\log v - \log \ell$ queries.  

\begin{algorithm2e}[h!]
\textbf{Input:}
 A position $s$, a string $x \in \{0,1\}^n$ with $f(x) = s$, and a set $V$
 with $\pi(s+1) \in V$ and $\pi(1),\ldots,\pi(s) \not\in V$, and a target size $\ell \in \N$.\\
\While{$|V|>\ell$}{
\tcp{$\pi(s+1) \in V$, $\pi(1),\ldots,\pi(s) \not\in V$,  and $f(x) = s$}
 Uniformly at random select a subset $F \subseteq V$ of size $|V|/2$\; \label{line:uniform}
 Create $y'$ from $x$ by flipping all bits in $F$ and query $f(y')$\;\label{lineyprime1}
 \lIf{$f(y') = s$}
 {$V \assign V\backslash F$\;}
 \lElse{$V \assign F$\;}
}
\textbf{Output:} Set $V$ of size at most $\ell$. 
\caption{A call $\subroutineone(x,s+1,V, \ell)$ reduces the size of the candidate
  set $V$ for $V_{s+1}$ from $v$ to $\ell$ in $\log v - \log \ell$ queries.}
\label{alg:subroutine1}
\end{algorithm2e} 

\begin{lemma}
\label{lem:binary}
Let $x \in \{0,1\}^n$ with $f(x) = s$ and let $V$ be any set with $\pi(s+1) \in
V$ and $\pi(j) \not\in V$ for $j \le s$. Let $v:=|V|$ and $\ell \in \N$ with $\ell < v$.
Algorithm~\ref{alg:subroutine1} reduces the size of $V$ to $\ell$ using at most $\lceil \log v - \log \ell \rceil$ queries.
\end{lemma}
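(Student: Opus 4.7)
The plan is a direct correctness-and-counting argument that relies on a single observation about how $f$ behaves when we flip bits at positions in $V$.

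First, I would verify the loop invariant stated in the algorithm, namely that at the start of each iteration $f(x)=s$, $\pi(s+1)\in V$, and $\pi(j)\notin V$ for $j\le s$. The initial step is the hypothesis. For the inductive step, I would show the following \emph{key observation}: if $y'$ is obtained from $x$ by flipping exactly the bits in $F\subseteq V$, then
\[
  f(y')=s \iff \pi(s+1)\notin F, \qquad f(y')>s \iff \pi(s+1)\in F.
\]
This follows because $F\subseteq V$ and $\pi(1),\ldots,\pi(s)\notin V$, so the bits at positions $\pi(1),\ldots,\pi(s)$ are unchanged; hence $y'_{\pi(j)}=x_{\pi(j)}=z_{\pi(j)}$ for $j\le s$, giving $f(y')\ge s$. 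Moreover $x_{\pi(s+1)}\neq z_{\pi(s+1)}$ (since $f(x)=s$), so flipping position $\pi(s+1)$ (which happens exactly when $\pi(s+1)\in F$) produces agreement there and makes $f(y')\ge s+1$, while otherwise $y'_{\pi(s+1)}=x_{\pi(s+1)}\neq z_{\pi(s+1)}$ and $f(y')=s$.

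Using this observation, the update rule of the loop sets $V$ either to $V\setminus F$ (when $f(y')=s$, hence $\pi(s+1)\in V\setminus F$) or to $F$ (when $f(y')>s$, hence $\pi(s+1)\in F$). In both cases, the invariant $\pi(s+1)\in V$ is preserved; the disjointness from $\{\pi(1),\ldots,\pi(s)\}$ is automatic since the new $V$ is a subset of the old one; and $x$ is not modified so $f(x)=s$ still holds.

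For the bound on the number of queries, each iteration makes exactly one query and produces a new $V$ of size either $|V|-\lceil|V|/2\rceil=\lfloor|V|/2\rfloor$ or $\lceil|V|/2\rceil$; in either case the new size is at most $\lceil|V|/2\rceil$. An easy induction then gives that after $k$ iterations $|V|\le \lceil v/2^k\rceil$, so as soon as $k\ge \log v-\log\ell$ we obtain $|V|\le \ell$, yielding the claimed $\lceil \log v-\log \ell\rceil$ upper bound on the number of queries. There is no real obstacle here; the only mildly subtle point is keeping the ceiling bookkeeping honest when $|V|$ is odd, which the bound $|V|\mapsto \lceil|V|/2\rceil$ handles uniformly.
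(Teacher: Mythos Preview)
Your proof is correct and follows essentially the same approach as the paper's own proof: both establish the dichotomy $f(y')=s \Leftrightarrow \pi(s+1)\notin F$ from the facts that the first $s$ positions are untouched and $x_{\pi(s+1)}\neq z_{\pi(s+1)}$, and then conclude the query bound from the halving of $|V|$. You are simply more explicit about the loop invariant and the ceiling bookkeeping, which the paper handles in one sentence (``the size of the set $V$ halves in each iteration'').
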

\begin{proof} Since $f(x) = s$, we have $x_{\pi(i)} =z_{\pi(i)}$
  for all $i \in [s]$ and $x_{\pi(s+1)} \not= z_{\pi(s+1)}$. Also $\pi(s+1) \in V$ and $\pi(j) \not\in V$ for $j \le
  s$. 
Therefore, either we have $f(y')>s$ in line~\ref{lineyprime1} or we have
$f(y')=s$. In the former case, the bit $\pi(s+1)$ was flipped, and hence
$\pi(s +1) \in F$ must hold. 
In the latter case the bit in position $\pi(s +1)$ bit was not flipped and we infer $\pi(s+1) \not\in F$.

The runtime bound follows from  the fact that the size of the set $V$ halves in each iteration.
\end{proof}




We call $\subroutineone$ in Algorithm~\ref{alg:upper} (line~\ref{querysub1}) to reduce the size of $V_{s+1}$ to $n/\alpha_1^d$, or, put differently, to reduce the number of candidates for $\pi(s+1)$ to $n/\alpha_1^d$.
As the initial size of $V_{s+1}$ is at most $n$, this requires at most $d \log \alpha_1$ queries by Lemma~\ref{lem:binary}.

\begin{lemma}
\label{lem:queries1}
A call of $\Advance(1)$ in Algorithm~\ref{alg:upper} requires at most $\alpha_1 + \alpha_1 d \log \alpha_1$ queries. 
\end{lemma}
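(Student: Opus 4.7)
The plan is to bound the total number of queries made by a single call to $\Advance(1)$ by separately bounding (i) the cost of one iteration of its outer while loop and (ii) the number of such iterations, and then multiplying the two.

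For the per-iteration cost, note that when $\ell=1$ the body of the loop performs exactly two types of queries. First, on line~\ref{querysub1} the routine $\RandBinSearch$ is called to shrink the freshly initialized set $V_{s+1}^{*} = [n]\setminus \bigcup_{j\le s} V_j$, which has size at most $n$, down to size $n/\alpha_1^{d}$. By Lemma~\ref{lem:binary} this uses at most $\lceil \log n - \log(n/\alpha_1^{d})\rceil = d\log \alpha_1$ queries. Second, after the if/else block line~\ref{query1} makes a single query of $f(y)$ in order to test whether $\pi(s+1)\in\bigcup_{j\le s}V_j$. Thus each iteration of the outer while loop costs at most $d\log \alpha_1 + 1$ queries.

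For the iteration count, observe that inside the $\ell=1$ branch the set $J$ grows by exactly one element per iteration, because the only modification is $J\assign J\cup\{s\}$ with $s$ being the freshly incremented counter. Since $J$ starts empty and the loop terminates either via the failure \textbf{break} on line~\ref{query1a} or when the guard $|J|\le \alpha_1$ ceases to hold, the outer loop executes at most $\alpha_1$ iterations. Multiplying yields at most $\alpha_1(d\log\alpha_1 + 1) = \alpha_1 + \alpha_1 d\log \alpha_1$ queries, as claimed.

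There is essentially no hard step here; the only minor subtlety is making sure not to double count. In particular, one should verify that inside the $\ell=1$ branch the $\RandBinSearch$ call is the sole source of queries in the if-branch (since all other statements are bookkeeping), and that the trailing $f(y)$ test on line~\ref{query1} contributes exactly one query regardless of whether the iteration terminates with a \textbf{break} or a swap of $x$ and $y$. Both observations are immediate from the pseudocode.
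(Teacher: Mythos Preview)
Your proof is correct and follows essentially the same approach as the paper's own proof: both identify that the only queries in $\Advance(1)$ arise from line~\ref{query1} (one query per iteration) and from the $\RandBinSearch$ call on line~\ref{querysub1} (at most $d\log\alpha_1$ queries per iteration, via Lemma~\ref{lem:binary}), and both bound the number of iterations of the outer loop by $\alpha_1$. The only cosmetic difference is that you organize the count as ``cost per iteration times number of iterations'' whereas the paper tallies the two query sources separately; the arithmetic is identical.
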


\begin{proof}
The two occasions where queries are made in Algorithm~\ref{alg:upper} are in line~\ref{query1} and in line~\ref{querysub1}. Line~\ref{query1} is executed at most $\alpha_1$ times, each time causing exactly one query.
Each call to $\subroutineone$ in
line~\ref{querysub1} causes at most $d \log \alpha_1$ queries, and
$\subroutineone$ is called at most $\alpha_1$ times.
\end{proof}

\subsection{Size Reduction}
\label{subsec:subroutine2}

We describe the second subroutine of Algorithm~\ref{alg:upper}, $\SizeReduction$. This routine
  is used to reduce the sizes of the up to $\alpha_{\ell - 1}$
  candidate sets returned by a recursive call $\Advance(\ell - 1)$ from 
  some value $\leq n/\alpha_{\ell - 1}^d$ to 
   at most the target size of level $\ell$, which is $n/{\alpha_{\ell}^d}$.
   As we shall see below, this requires an expected number of $O(1)\alpha_{\ell - 1} d (\log \alpha_{\ell}-\log \alpha_{\ell -
   1})/\log \alpha_{\ell-1}$ queries. 
The pseudo-code of $\SizeReduction$ is given in
Algorithm~\ref{alg:SizeReduction}. It repeatedly calls a subroutine
$\ReductionStep$ that reduces the 
sizes of at most $k$ candidate sets to a $k$th fraction of their original size
using at most $O(k)$ queries, where $k$ is a parameter. We use $\ReductionStep$
with parameter $k = \alpha_{\ell - 1}$ repeatedly to achieve the full reduction
of the sizes to at most $n/\alpha_{\ell}^d$.

\begin{algorithm2e}[h]
\textbf{Input:} Positive integer $k \in \N$, a set $J \subseteq [n]$ with
$\abs{J} \le k$, $s = \max{J}$, a 
target size $m \in \N$, and a string $x \in \{0,1\}^n$ such that $f(x)\geq \max J$, and invariants (\ref{pi(j) in Vj}) to (\ref{random
  subset}) hold\\	
Let $\alpha$ be such that $n/\alpha^d = \max_{j \in J} \abs{V_j}$ and let 
$\beta$ be such that $n/\beta^d = m$\,;\\
\For{$i = 1,\ldots,d(\log \beta - \log \alpha)/\log k$}
{ $\ReductionStep(k,J,n/(k^i \alpha^d),x)$\,;\\ }
\textbf{Output:} Sets $V_j$ with $|V_{j}| \leq m$ for all $j \in J$.
\caption{A call $\SizeReduction(k, J,m, x)$ reduces
  the size of at most $k$ sets $V_j$, $j\in J$, to size at most $m$. We use it
  twice in our main strategy. In line~\ref{line:subroutinetwo2}, we call 
$\SizeReduction(\alpha_{\ell-1}, J', n/\alpha_{\ell}^d, x)$ to reduce the size
of each $V_j$, $j \in J'$ to $n/\alpha_{\ell}^d$.  In line~\ref{line:subroutinetwo1}, we call
$\SizeReduction(\alpha_t, J', 1, x)$ to reduce the size of each $V_j$, $j \in
J'$, to one.}
\label{alg:SizeReduction}
\end{algorithm2e}

$\ReductionStep$ is given a set $J$ of at most $k$ indices and a string $x$ with
  $f(x)\geq \max J$. The goal is to reduce the size
of each candidate set $V_j$, $j \in J$, below a target size $m$ where $m \ge
\abs{V_j}/k$ for all $j \in J$. The routine works in phases of several iterations each. 
Let $J$ be the set
of indices of the candidate sets that are still above the target size at the
beginning of an iteration. For each $j \in J$, we 
randomly choose a subset $F_{j} \subseteq V_{j}$ of size $|V_{j}|/k$. 
We create a new bit-string $y'$ from $x$ 
by flipping the entries in positions $\bigcup_{j \in J}F_{j}$. Since the sets
$V_j$, $j \le s = \max J$, are pairwise disjoint, we have either 
$f(y') \geq \max J$ or 
$f(y')= j - 1$ for some $j \in J$.
%
In the first case, i.e., if $f(y') \geq \max J$, none of the sets $V_{j}$ was
  \emph{hit}, and for all $j \in J$ we can remove the subset $F_j$ from the candidate set $V_j$.
We call such queries ``\emph{off-trials}''.
An off-trial reduces the size of all sets $V_{j}$, $j \in J$, to a
$(1-1/k)$th fraction of their original size.
If, on the other hand, we have $f(y')=j-1$ for some $j \in J$, we can replace
  $V_j$ by set $F_j$ as $\pi(j) \in F_{j}$ must hold.
Since $|F_{j}| = |V_{j}|/k \leq m$ by assumption, this set has now been reduced
to its target size and we can remove it from $J$.

We continue in this way until at least half of the indices are removed
  from $J$ and at least $ck$ off-trials occurred, for some constant $c$ satisfying $(1-1/k)^{ck} \leq 1/2$. 
  We then proceed to the next phase.
  Consider any $j$ that is still
  in $J$. The size of $V_j$ was reduced by a factor $(1 - 1/k)$ at least $ck$
  times. Thus its size was reduced to at most half its original size. 
  We may thus halve $k$ without destroying the invariant $m \ge \abs{V_j}/k$ for $j \in
  J$. The effect of halving $k$ is that the relative size of the sets $F_j$
  will be doubled for the sets $V_j$ that still take part in the reduction process.

\begin{algorithm2e}[h!]
\textbf{Input:} Positive integer $k \in \N$, a set $J \subseteq [n]$ with
$\abs{J} \le k$, a target size $m \in \N$ with $|V_j| \leq k m$ for all $j
\in J$, and a string $x \in \{0,1\}^n$ with $f(x)\geq \max J$. Invariants
(\ref{pi(j) in Vj}) to (\ref{random subset}) hold.\\	
 \lFor{$j \in J$}
 {\lIf{$|V_j |\leq m$}{delete $j$ from $J$\,; // $V_{j}$ is already small enough}}\\
\While{$J \not= \emptyset$}{
 $o \assign 0$\,; // counts the number of off-trials\\
 $\ell = \abs{J}$\,; // $|V_j| \leq k m$ for all $j \in J$\\
\Repeat{$o\geq c\cdot k$ and $\abs{J} \le \ell/2$ \normalfont{// $c$ is chosen such that $(1 - 1/k)^{ck} \le 1/2$}\label{line:condition2}}{
 \lFor{$j \in J$}
 {Uniformly at random choose a subset $F_{j} \subseteq V_{j}$ of size $|V_{j}|/k$\;}
 Create $y'$ from $x$ by flipping in $x$ the entries in positions $\bigcup_{j \in J}{F_{j}}$ and query $f(y')$\; \label{line:yprime}
 \eIf{$f(y')\geq \max J$}{
 			$o \assign o+1$\,; // ``off''-trial\\
 			\lFor{$j\in J$}{$V_{j} \assign V_{j} \backslash F_{j}$\;}
 			}
 			{
 			$V_{f(y')+1} \assign F_{f(y')+1}$\,; // set $V_{f(y')+1}$ is hit\\
 			\lFor{$j \in J $} \lIf{$j
                          \le f(y')$}{{$V_{j} \assign V_{j} \backslash F_{j}$\;}
 			}}
  \lFor{$j \in J$}
 {\lIf{$|V_{j}|\leq m$}{delete $j$ from $J$\;} }
}
$k \assign k/2$\;
}
\textbf{Output:} Sets $V_j$ with $|V_{j}| \leq m$ for all $j \in J$.
\caption{Calling $\ReductionStep(k, J,m, x)$ reduces the size of at most $k$ sets $V_j$, $j\in J$, to
    a $k$th fraction of their original size using only $O(k)$ queries.}
\label{alg:ReductionStep}
\end{algorithm2e} 

\begin{lemma}
\label{lem:reduce}
Let $k \in \N$ and let $J \subseteq [n]$ be a set of at most $k$ indices with
$s = \max J$. Assume that invariants (\ref{pi(j) in Vj}) and (\ref{random
  subset}) hold. 
Let $x \in \{0,1\}^n$ be such that $f(x)\geq
\max J$ and let $m \in \N$ be such that 
$m\geq  |V_j|/k$ for all $j \in J$.
In expectation it takes $O(k)$ queries until 
$\normalfont{\ReductionStep}(k,J,m,x)$ has reduced the size of $V_{j}$ to at most $m$
for each $j \in J$. 
\end{lemma}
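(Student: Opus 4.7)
The plan is to bound the expected number of queries phase by phase and then sum a geometric series over the phases. The only source of queries is the computation of $f(y')$ inside the inner \textbf{Repeat} loop, so I must bound the expected number of inner iterations across all phases of one call to $\ReductionStep$. The key probabilistic estimate is the following: in each inner iteration, $F_j$ is chosen uniformly and independently for each $j \in J$ as a subset of $V_j$ of size $|V_j|/k$, so $\Pr[\pi(j) \in F_j] = 1/k$ regardless of the prior state. By invariant (\ref{pairwise disjoint}), the sets $V_j$ for $j \in J$ are pairwise disjoint, so $f(y') \geq \max J$ (an off-trial) holds exactly when $\pi(j) \notin F_j$ for every $j \in J$; otherwise $f(y') = j^* - 1$, where $j^*$ is the smallest $j \in J$ with $\pi(j) \in F_j$. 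Hence each iteration is an off-trial with probability $(1 - 1/k)^{|J|} \geq (1 - 1/k)^k \geq 1/4$, using the invariant $|J| \leq k$ (maintained below) and $k \geq 2$, which holds in the calls from Algorithm~\ref{alg:upper}.

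Next I analyze a single phase, in which $k$ is fixed and $|J|$ starts at some $\ell \leq k$. The phase terminates when (a) $o \geq c k$ and (b) $|J| \leq \ell/2$; let $T_a$ and $T_b$ be the iteration indices at which (a) and (b) first hold. For (a): each iteration is an off-trial with probability at least $1/4$, so $\E[T_a] \leq 4 c k = O(k)$. For (b): whenever $|J| = j \leq k$ the hit probability per iteration is $1 - (1-1/k)^j \geq j/(2k)$ (by the concavity bound $1 - e^{-x} \geq x/2$ on $[0,1]$), so the expected waiting time until the next index leaves $J$ is at most $2k/j$; summing over $j = \ell/2+1, \ldots, \ell$ yields $\E[T_b] \leq \sum_{j = \ell/2+1}^{\ell} 2k/j = O(k)$. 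The phase ends at $\max(T_a, T_b)$, and $\E[\max(T_a, T_b)] \leq \E[T_a] + \E[T_b] = O(k)$.

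Finally I verify that the invariants needed between phases are preserved. Any index $j$ still in $J$ at the end of a phase has witnessed at least $c k$ off-trials, each multiplying $|V_j|$ by $(1 - 1/k)$, so $|V_j|$ shrinks by a factor $(1 - 1/k)^{ck} \leq 1/2$ by the choice of $c$. Since $k$ is then halved and $|J|$ is at least halved, both $|V_j| \leq k m$ and $|J| \leq k$ survive the transition. Hence there are at most $O(\log k)$ phases with successive $k$-values $k, k/2, k/4, \ldots$, and the total expected query count telescopes to $\sum_{i \geq 0} O(k / 2^i) = O(k)$. The main technical subtlety I anticipate is justifying the probabilistic independence of the hit events across $j \in J$, which follows from the independence of the uniform choices $F_j$, and handling the dual termination conditions (a) and (b), which is covered by the $\E[\max(T_a,T_b)] \leq \E[T_a] + \E[T_b]$ bound.
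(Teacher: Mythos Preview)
Your proof is correct and follows essentially the same approach as the paper's own proof: bound each phase by $O(k)$ expected iterations via separate estimates for the off-trial count and the number of hits, then sum the geometric series over phases with halving $k$. The only cosmetic differences are your explicit use of $\E[\max(T_a,T_b)]\le\E[T_a]+\E[T_b]$ and the slightly sharper constants $(1-1/k)^k\ge 1/4$ and the per-$j$ summation $\sum_{j=\ell/2+1}^{\ell}2k/j$, where the paper uses $(1-1/k)^k\ge(2e)^{-1}$ and a single uniform bound $\ell/(2k)$ on the hit probability while $|J|\ge\ell/2$.
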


\begin{proof}
Let $c$ be some constant.
We show below that---for a suitable choice of $c$---after an expected number of
at most $ck$ queries both conditions in line~\ref{line:condition2} are
satisfied.  
Assuming this to hold, we can bound the total expected number of queries
until the size of each of the sets $V_{j}$ has been reduced to $m$ by 
\begin{align*}
\sum_{h=0}^{\log k}{c k/2^h} < 2ck\,,
\end{align*}
as desired.

In each iteration of the repeat-loop we either hit an index in $J$ and hence
remove it from $J$ or we have an off-trial. The probability of an off-trial is at
least $(1 - 1/k)^k$ since $\abs{J} \le k$ always. Thus the probability of an
off-trial is at least $(2e)^{-1}$ and hence the condition $o \ge ck$ holds after
an expected number of $O(k)$ iterations. 

As long as $\abs{J} \ge \ell/2$, the probability of an off-trial is at most $(1
- 1/k)^{\ell/2}$ and hence the probability that a set is hit is at least $1 - (1
- 1/k)^{\ell/2}$. Since $\ln(1 - 1/k) \le -1/k$ we have $(1 - 1/k)^{\ell/2} =
\exp(\ell/2 \ln(1 - 1/k)) \le \exp(-\ell/(2k))$ and hence $1 - (1
- 1/k)^{\ell/2} \ge 1 - \exp(-\ell/(2k)) \ge \ell/(2k)$. Thus the expected number
of iterations to achieve $\ell/2$ hits is $O(k)$. 

If a candidate set $V_j$ is hit in the repeat-loop, its size is reduced to
$\abs{V_j}/k$. By assumption, this is bounded by $m$. If $V_j$ is never hit,
its size is reduced at least $ck$ times by a factor $(1 - 1/k)$. By choice of
$c$, its size at the end of the phase is therefore at most half of its original size. Thus after replacing $k$ by $k/2$
we still have $\abs{V_j}/k \le m$ for $j \in J$. 
\end{proof}

It is now easy to determine the complexity of $\SizeReduction$.
\begin{corollary}
\label{cor:reduce}
Let $k \in \N$, $J$, and $y$ be as in Lemma~\ref{lem:reduce}. 
Let further $d \in \N$ and $\alpha \in \R$ such that 
$\max_{j \in J} |V_{j}| = n/\alpha^d$.
Let $\beta \in \R$ with $\beta > \alpha$. 

Using at most $d (\log \beta - \log \alpha)/\log k$ calls to
Algorithm~\ref{alg:ReductionStep} we can reduce the maximal size $\max_{j \in J} |V_{j}|$ to $n/\beta^d$. 
The overall expected number of queries needed by $\SizeReduction$ to achieve this reduction is 
$O(1) k d (\log \beta - \log \alpha) /\log k$. 
\end{corollary}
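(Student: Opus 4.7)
The plan is to show that the iterated application of $\ReductionStep$ in Algorithm~\ref{alg:SizeReduction} drives the maximum candidate-set size from $n/\alpha^d$ down to $n/\beta^d$, and then to read off the expected query cost from Lemma~\ref{lem:reduce}.

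First I would observe that Lemma~\ref{lem:reduce} says: whenever the precondition $m \ge \max_{j \in J} |V_j|/k$ holds, a single call to $\ReductionStep(k, J, m, x)$ reduces each $|V_j|$ to at most $m$ using an expected number of $O(k)$ queries. Consequently, each call can shrink the maximum size by at most a factor of $k$.

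Next I would argue inductively that after the $i$th iteration of the for-loop in Algorithm~\ref{alg:SizeReduction}, the invariant $\max_{j \in J} |V_j| \le n/(k^i \alpha^d)$ holds: the $i$th call uses target $m_i := n/(k^i \alpha^d)$, and the precondition $m_i \ge \max_{j \in J} |V_j|/k$ is satisfied because, by the induction hypothesis, $\max_{j \in J} |V_j| \le n/(k^{i-1} \alpha^d) = k \cdot m_i$. Hence Lemma~\ref{lem:reduce} applies and the sizes drop to at most $m_i$, establishing the invariant for $i$. It remains to choose $i$ so that $n/(k^i \alpha^d) \le n/\beta^d$, i.e., $i \log k \ge d(\log \beta - \log \alpha)$, which gives the stated bound of $d(\log \beta - \log \alpha)/\log k$ calls.

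Finally, the total expected number of queries follows from linearity of expectation: summing the $O(k)$ bound from Lemma~\ref{lem:reduce} over the $d(\log \beta - \log \alpha)/\log k$ calls yields the claimed $O(1) k d(\log \beta - \log \alpha)/\log k$. There is no real obstacle here beyond verifying the precondition of Lemma~\ref{lem:reduce} at each step, which is exactly what the geometric schedule $m_i = n/(k^i \alpha^d)$ in Algorithm~\ref{alg:SizeReduction} was designed to guarantee.
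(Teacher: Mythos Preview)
Your proof is correct and follows essentially the same approach as the paper: iterate $\ReductionStep$ with the geometric schedule $m_i = n/(k^i\alpha^d)$, verify inductively that each call's precondition $m_i \ge \max_{j\in J}|V_j|/k$ holds so that Lemma~\ref{lem:reduce} applies, and then solve $k^h \ge (\beta/\alpha)^d$ for the number of calls and multiply by the $O(k)$ cost per call. If anything, your version is slightly more careful than the paper's in explicitly spelling out the inductive verification of the precondition.
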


\begin{proof}
The successive calls can be done as follows. We first call $\ReductionStep(k,J,
n/(k \alpha^d), x)$. By Lemma~\ref{lem:reduce} it takes an expected number of $O(k)$
queries until the algorithm terminates. The sets $V_j$, $j \in J$, now have 
size at most $n/(k \alpha^d)$. 
We next call $\ReductionStep(k,J,n/(k^2 \alpha^d), x)$. 
After the $h$th such call we are left with sets of size at most $n/(k^{h} \alpha^d)$. For $h= d (\log \beta - \log \alpha) /\log k$ we have 
$k^h \geq (y/\alpha)^d$. 
The total expected number of queries at this point is $O(1) k d (\log \beta - \log \alpha) /\log k$.
\end{proof}

\subsection{Proof of Theorem~\ref{thm:upper}}
\label{subsec:proofupper}

It remains to show that the first phase of Algorithm~\ref{alg:upper} takes $O(n
\log\log n)$ queries in expectation. 
 
\begin{theorem}
\label{thm:phase1}
Let $q\in n - \Theta(n/\log n)$.
Algorithm~\ref{alg:upper} identifies positions $\pi(1), \ldots,
\pi(q)$ and the corresponding entries $z_{\pi(1)}, \ldots, z_{\pi(q)}$ of $z$
in these positions using an expected number of $O(n \log \log n)$ queries.
\end{theorem}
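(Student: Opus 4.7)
The plan is to decompose the expected query count of the first phase by level $\ell \in [t]$, bounding (i) the expected cost of a single iteration of $\Advance(\ell)$'s while loop, (ii) the expected total number of such iterations, and (iii) the cost of the main loop's final $\SizeReduction$. For (i), an iteration at level $\ell = 1$ invokes $\RandBinSearch$ once (at most $d\log\alpha_1 = O(\log\log n)$ queries by Lemma~\ref{lem:binary}) and then a test query; at level $\ell>1$ it invokes $\Advance(\ell-1)$ recursively and then $\SizeReduction(\alpha_{\ell-1}, J', n/\alpha_\ell^d, x)$, which by Corollary~\ref{cor:reduce} costs an expected $O(\alpha_{\ell-1}\,d\,(\log\alpha_\ell-\log\alpha_{\ell-1})/\log\alpha_{\ell-1})=O(\alpha_{\ell-1})$ queries (using $\alpha_\ell=\alpha_{\ell-1}^2$), plus one test query. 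These are the only queries performed, so counting them suffices.

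For (ii), I would observe that a successful iteration of $\Advance(\ell)$ contributes one index (for $\ell=1$) or $\alpha_{\ell-1}$ indices (for $\ell>1$) that eventually become singletons, so the number of successful iterations at level $\ell$ is at most $q$ for $\ell=1$ and at most $q/\alpha_{\ell-1}$ for $\ell>1$. To bound the failed iterations, I would analyze the probability that the test query in line~\ref{query1} triggers the break in line~\ref{query1a}: a short case check (using $\pi(j)\in V_j$ for $j\leq s$) shows that this is precisely the event $\pi(s+1)\in\bigcup_{j\leq s}V_j$. Invariant~(\ref{random subset}) ensures that $V_j\setminus\{\pi(j)\}$ is a uniformly random size-$(|V_j|-1)$ subset of $V_j^*\setminus\{\pi(j)\}$ with $|V_j^*|=\Omega(n)$, so the conditional probability that $\pi(s+1)\in V_j$ (given $\pi(s+1)\neq\pi(j)$) is $O(|V_j|/n)$. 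Since each level $\ell'\geq 1$ simultaneously holds at most $\alpha_{\ell'}$ sets of size at most $n/\alpha_{\ell'}^d$, the union $\bigcup_{j\leq s}V_j$ has size at most $\sum_{\ell'}\alpha_{\ell'}\cdot n/\alpha_{\ell'}^d = O(n/\alpha_1^{d-1}) = O(n/\log^{d-1}n)$, so a single iteration fails with probability $O(1/\log^{d-1} n)$. For $d\geq 4$ this is $O(1/\log^{3}n)$, making the expected number of failed iterations at each level a negligible fraction of the successful ones.

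Aggregating, level $1$ contributes an expected $O(q\log\log n)=O(n\log\log n)$ queries, and each level $\ell>1$ contributes $O(q/\alpha_{\ell-1})\cdot O(\alpha_{\ell-1})=O(q)$ queries; summed over the $t=O(\log\log n)$ levels this is $O(n\log\log n)$. The $q/\alpha_t$ calls to $\SizeReduction(\alpha_t,J',1,x)$ in the main loop contribute only $O(q)$ queries in total, since reducing from size $n/\alpha_t^d$ to $1$ costs $O(\alpha_t)$ queries per call (the maximality of $t$ yields $n^{1/d}\leq\alpha_t^2$, making the relevant log-ratio $O(1)$). Combining gives the desired $O(n\log\log n)$ bound. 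The main obstacle I expect is establishing and maintaining invariant~(\ref{random subset}) rigorously across the adaptive operations of $\RandBinSearch$ and $\SizeReduction$: both subroutines branch on observed scores, which conditions the posterior distribution of $\pi$, and one must verify that after each such conditioning $V_j\setminus\{\pi(j)\}$ remains uniformly random inside a suitable $V_j^*$ of near-full size—only this preserved randomness supports the $O(|V_j|/n)$ bound that drives the failure-probability calculation.
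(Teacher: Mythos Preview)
Your overall decomposition matches the paper's: bound the ``no-failure'' cost via the level structure, then separately bound the extra cost caused by failures, and verify that invariant~(\ref{random subset}) survives all adaptive branching (the paper does this in Lemma~\ref{lem:random2}). You also correctly flag the preservation of invariant~(\ref{random subset}) as the delicate point.

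However, there is a concrete gap in the failure accounting. Your failure-probability bound $O(1/\log^{d-1}n)$ is level-independent: you sum $\alpha_{\ell'}\cdot n/\alpha_{\ell'}^d$ over all $\ell'\ge 1$ and keep only the dominant $\ell'=1$ term. This is a valid upper bound, but it is too weak. A level-$\ell$ failure forces the caller $\Advance(\ell{+}1)$ (or the main loop, if $\ell=t$) to spend an extra iteration, whose $\SizeReduction$ costs $\Theta(\alpha_\ell)$ queries. Since $\alpha_\ell$ ranges up to $n^{1/d}$, combining $O(q)$ level-$\ell$ tests with a failure probability of only $O(1/\log^{d-1}n)$ gives an expected extra cost of order $q\,\alpha_t/\log^{d-1}n$, which for any constant $d$ is polynomially large and swamps the target $O(n\log\log n)$.

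The paper's Lemma~\ref{lem:failure} exploits that, at the moment of a level-$\ell$ test, every non-singleton $V_j$ with $j\le s$ has already been reduced to size at most $n/\alpha_\ell^d$ (the recursive call to $\Advance(\ell{-}1)$ and its $\SizeReduction$ have completed before the test). Hence the relevant union has size $\sum_{\ell'\ge \ell}\alpha_{\ell'}\cdot n/\alpha_{\ell'}^d = O(n/\alpha_\ell^{d-1})$, giving a level-$\ell$ failure probability of $O\!\big(\tfrac{n}{n-q}\,\alpha_\ell^{-(d-1)}\big)$. With this level-dependent bound and the observation that there are at most $q$ calls to $\Advance(\ell{-}1)$ (hence at most $q$ level-$\ell$ tests), the total expected failure cost becomes $\sum_\ell O(q\log n)\,\alpha_\ell^{-(d-2)}=O(n)$ for $d\ge 4$. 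A related imprecision: your count of ``successful'' level-$\ell$ iterations as $q/\alpha_{\ell-1}$ presumes each inner $\Advance(\ell{-}1)$ call returns a full $J'$ of size $>\alpha_{\ell-1}$; a level-$(\ell{-}1)$ failure makes $J'$ partial, and those iterations must be charged to the level-$(\ell{-}1)$ failure count---which again is only small enough once you have the level-dependent probability.
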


We recall from the beginning of Section~\ref{sec:upper} that Theorem~\ref{thm:upper} follows from Theorem~\ref{thm:phase1} by observing that once the $q\in n - \Theta(n/\log n)$ first positions of $z$ have been identified, the remaining $\Theta(n/\log n)$ values $z_{\pi(q+1)}, \ldots, z_{\pi(n)}$ can be identified by binary search in expected linear time (already a na\"ive implementation of binary search finds each $z_{\pi(j)}$ in $O(\log n)$ queries, as demonstrated in Section~\ref{sec:deterministic}).

To prove Theorem~\ref{thm:phase1}, we bound by $O(n \log \log n)$ the expected number of queries needed if no failure in any call of $\Advance$ happened and by $O(n)$ the expected number of queries caused by failures (the latter is postponed to Section~\ref{sec:failure}).   
	
If no failure in any call of $\Advance$ happened, the expected number of
queries is bounded by
\begin{align}
\nonumber
& \frac{q}{\alpha_t}
\bigg( 
\frac{\alpha_t \log n}{\log \alpha_{t}} + \frac{\alpha_{t}}{\alpha_{t-1}}
\bigg( 
 \frac{\alpha_{t-1} d c (\log \alpha_{t} - \log \alpha_{t-1})}{\log \alpha_{t-1}} \\
&\quad\quad\quad\quad\quad\quad\quad\quad\quad + 
\frac{\alpha_{t-1}}{\alpha_{t-2}}
\bigg(
\ldots
 + \frac{\alpha_{2}}{\alpha_{1}}
 \bigg(
\frac{\alpha_1 d c (\log \alpha_2- \log \alpha_1)}{\log \alpha_{1}} + \alpha_1 d \log \alpha_1
\bigg)
\bigg)
\bigg)
\bigg)\nonumber \\
\leq  
& n d c \left(
\frac{\log n}{\log \alpha_{t}} + \frac{\log \alpha_{t}}{\log \alpha_{t-1}} 
 + \ldots + \frac{\log \alpha_2}{\log \alpha_{1}} + \log \alpha_1 -t
\right)
\,,
\label{eq:overall1:main}
\end{align}
where $c$ is the constant hidden in the $O(1)$-term in Lemma~\ref{lem:reduce}. 
To verify this formula,
observe that 
we fill the $(i-1)$st level $\alpha_i/\alpha_{i-1}$ times before level $i$ has reached its capacity of $\alpha_i$ candidate sets. 
To add $\alpha_{i-1}$ candidate sets from level $i-1$ to level $i$, we 
need to reduce their sizes from $n/\alpha_{i-1}^d$ to $n/\alpha_{i}^d$. 
By Corollary~\ref{cor:reduce} this requires at most $\alpha_{i-1} d c (\log \alpha_{i}-\log \alpha_{i-1})/\log \alpha_{i-1}$ queries.
The additional $\alpha_1 d \log \alpha_1$ term 
accounts for the queries 
needed to move the sets from level $0$ to level $1$; 
i.e., for the randomized binary search algorithm through which we initially reduce the sizes of the sets 
$V_i$ to
$n/\alpha_1^d$---requiring at most $d \log \alpha_1$ queries per call.
Finally, the term $\alpha_t \log n /\log \alpha_{t}$ accounts for the final reduction of the sets $V_i$ to a set containing only one single element (at this stage we shall finally have $V_i=\{\pi(i)\}$). More precisely, this term is $\left(\alpha_t (\log n - d \log \alpha_t)\right)/\log \alpha_{t}$ but we settle for upper bounding this expression by the term given in the formula.

Next we need to bound the number of queries caused by \emph{failures}. 
We show that, on average, not too many failures happen. 
More precisely, we will show in Corollary~\ref{cor:failure} that the expected number of level-$i$ failures is at most $n^2/((n-q)(\alpha_i^{d-1}-1))$. 
By Corollary~\ref{cor:reduce}, each such level-$i$ failure causes an additional
number of at most $1 + \alpha_i d c (\log \alpha_{i+1} -\log \alpha_i)/\log \alpha_i $ queries 
(the $1$ counts for the query through which we discover that 
$\pi(s+1) \in V_1 \cup \ldots \cup V_{s}$).
Thus,
\begin{align}
\label{eq:overallfailure:main}
	\sum_{i=1}^t{\frac{n^2}{(n-q)(\alpha_i^{d-1}-1)}}\left(1 + \frac{\alpha_i d c (\log \alpha_{i+1} -\log \alpha_i)}{\log \alpha_i}\right)
\end{align}
bounds the expected number of additional queries caused by failures. 

Recall the choice of $t$ and the capacities $\alpha_i$.
We have $\alpha_1=\log n$ and $\alpha_i = \alpha_{i-1}^2 = (\log
n)^{2^{i-1}}$; $t$ is maximal such that $\alpha_t^d \le n$. Then $\alpha_t
\ge \sqrt{n}$ and hence $\log \alpha_t = \Omega(\log n)$. 
The parameter $d$ is any constant that is at least $4$. 
With these parameter settings, formula (\ref{eq:overall1:main}) evaluates to 
\begin{align*}
	n d c \left(\frac{\log n}{\log \alpha_t} + 2(t-1) + \log \log n - t\right) 
	= O(n \log \log n)
\end{align*}
and, somewhat wastefully, we can bound formula (\ref{eq:overallfailure:main}) from above by
\begin{align*}
	\frac{n^2 d c}{n-q} \sum_{i=1}^t{\alpha_i^{-(d-3)}}
	= O(n \log n) 
			\sum_{i=0}^{t-1}{\alpha_1^{-(d-3)2^i}}
	< O(n \log n) (\alpha_1^{d-3} -1)^{-1} 
	=O(n)
	\,,
\end{align*}
where the first equation is by construction of the values of $\alpha_i$, the inequality
uses the fact that the geometric sum is dominated by the first term, and the
last equality stems from our choice $\alpha_1=\log n$ and $d \ge 4$.
This shows, once Corollary~\ref{cor:failure} is proven, that the overall expected number of queries sums to 
$O(n \log \log n)+O(n)= O(n \log \log n)$. \qed

\subsection{Failures}\label{sec:failure}

We derive a bound on the expected number of failures. We do so in two
steps. We first bound the worst-case number of calls of $\Advance(\ell)$ for
any fixed $\ell$ and
then bound the probability that any particular call may
fail. 

In order to bound the worst-case number of calls of $\Advance(\ell)$, we
observe that any such call returns a non-empty set and distinct calls return
disjoint sets. Thus there can be at most $q$ calls to $\Advance(\ell)$ for any
$\ell$. Before a call to $\Advance(\ell)$, we have $\pi(s+1) \not\in
\bigcup_{j \le s} V_j$, and hence any call increases $s$ by at least 1. This is
obvious for $\Advance(1)$ and holds for $\ell > 1$, since $\Advance(\ell)$
immediately calls $\Advance(\ell - 1)$. 

We turn to the probabilistic part of the analysis. Key to the failure analysis
is the following observation.

\begin{lemma} 
\label{lem:random2}
Each $V_j$ has the property that $V_j \setminus \{\pi(j)\}$ is
  a random subset of $V_j^* \setminus \{\pi(j)\}$ of size $\abs{V_j} - 1$,
  where $V_j^*$ is as defined in line~\ref{line:V*} of
  Algorithm~\ref{alg:upper}. 
\end{lemma}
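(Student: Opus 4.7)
The plan is to prove Lemma~\ref{lem:random2} by induction on the number of modifications made to $V_j$ during the execution of Algorithm~\ref{alg:upper}. Throughout, the secret $(z,\pi)$ is fixed; the randomness is that of the algorithm's internal coin flips, and the asserted uniform distribution of $V_j\setminus\sset{\pi(j)}$ is understood as a statement about the conditional distribution of $V_j$ given the sequence of query scores observed so far.

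For the base case, the set $V_j$ is born on line~\ref{line:V*} of Algorithm~\ref{alg:upper}, where it is assigned $V_j^* := [n] \setminus \bigcup_{j' < j} V_{j'}$. At this instant $V_j = V_j^*$, so $V_j \setminus \sset{\pi(j)}$ is the unique subset of $V_j^* \setminus \sset{\pi(j)}$ of size $|V_j^*|-1$, and the invariant holds trivially.

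For the inductive step, I observe that every subsequent modification of $V_j$---whether it happens inside $\RandBinSearch$ or inside $\ReductionStep$---has the same shape: a uniformly random subset $F \subseteq V_j$ of a prescribed size $t$ is drawn, one query is asked, and the response reveals exactly whether $\pi(j) \in F$; accordingly, $V_j$ is replaced by $F$ or by $V_j \setminus F$. I would therefore establish once, and re-use at every update, the following elementary symmetry: let $W^*$ be a finite set, let $p \notin W^*$, let $W$ be uniform on the $r$-subsets of $W^*$, and let $F$, conditional on $W$, be uniform on the $t$-subsets of $W \cup \sset{p}$. Then (a) conditional on $p \notin F$, the set $W \setminus F$ is uniform on the $(r-t)$-subsets of $W^*$, and (b) conditional on $p \in F$, the set $F \setminus \sset{p}$ is uniform on the $(t-1)$-subsets of $W^*$. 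Both parts reduce to a one-line counting check: summing the joint probability of a target outcome $B$ over the hidden variable yields a quantity that depends only on $|B|$, not on which elements of $W^*$ populate $B$. Applying this lemma with $W = V_j \setminus \sset{\pi(j)}$, $W^* = V_j^* \setminus \sset{\pi(j)}$, $p = \pi(j)$, and the inductive hypothesis supplying the prior distribution of $W$, both cases produce the updated invariant with the \emph{same} $V_j^*$ as before.

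The only remaining subtlety is that $\ReductionStep$ touches several $V_j$, $j \in J$, during a single iteration, driven by the single scalar feedback $f(y')$. Here I would use invariant~(\ref{pairwise disjoint})---the pairwise disjointness of the $V_j$ for $j \le \max J$---to conclude that the random subsets $F_j$ are mutually independent and that the event $\sset{f(y') = s_0}$ factorises across $j \in J$ into ``$\pi(j) \notin F_j$'' for $j \le s_0$, ``$\pi(j) \in F_j$'' when $j = s_0+1$, and no constraint whatsoever on $F_j$ for $j > s_0 + 1$ (these $V_j$ are not updated). Because both the prior on the $F_j$'s and the conditioning event are coordinate-wise products, the single-set argument from the inductive step applies independently to each $V_j$, closing the induction. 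The main obstacle I anticipate is precisely this factorisation step; once it is in place the rest is routine counting, and invariants~(\ref{pi(j) in Vj}) and~(\ref{pairwise disjoint}) keep every update rule well-defined throughout.
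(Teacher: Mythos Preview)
Your proposal is correct and follows the same inductive approach as the paper's proof: the invariant holds at initialisation since $V_j = V_j^*$, and each update---whether in $\RandBinSearch$ or $\ReductionStep$---replaces $V_j$ by a uniformly chosen $F$ or its complement according to whether $\pi(j)\in F$, preserving uniformity of $V_j\setminus\{\pi(j)\}$ inside $V_j^*\setminus\{\pi(j)\}$. The paper's proof is only three sentences and does not spell out either the symmetry lemma you formulate or the factorisation across indices in $\ReductionStep$, so your version is more detailed, but the underlying argument is identical.
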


\begin{proof}
$V_j$ is initialized to $V_j^*$; thus the claim is true
  initially. In $\subroutineone$, a random subset $F$ of $V_j$ is chosen and $V_j$
  is reduced to $F$ (if $\pi(j) \in F)$ or to $V_j \setminus F$ (if $\pi(j)
  \not\in F)$. In either case, the claim stays true. The same reasoning
  applies to $\ReductionStep$. 
\end{proof}

\begin{lemma}
\label{lem:failure}
Let $q \in n-\Theta(n/\log n)$ be the number of indices $i$ for which we
determine $\pi(i)$ and $z_{\pi(i)}$ in the first phase.  
The probability that any particular call of
  $\Advance(\ell)$ fails is less than $n (n-q)^{-1} (\alpha_\ell^{d-1} - 1)^{-1}$.
\end{lemma}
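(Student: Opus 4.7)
The plan is to bound the failure probability via a union bound over the candidate sets $V_j$ for $j \le s$, where $s$ is the value of the counter at the moment the failure check is performed. A failure in $\Advance(\ell)$ is signalled when the query $y := x \oplus 1_{[n] \setminus \bigcup_{j \le s} V_j}$ returns $f(y) = s$ rather than $f(y) > s$, which is equivalent to $\pi(s+1) \in \bigcup_{j=1}^{s} V_j$. Hence
\[
\Pr[\text{fail}] \;\le\; \sum_{j=1}^s \Pr[\pi(s+1) \in V_j].
\]

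For each individual summand, Lemma~\ref{lem:random2} asserts that $V_j \setminus \{\pi(j)\}$ is uniformly distributed over $(|V_j|-1)$-subsets of $V_j^* \setminus \{\pi(j)\}$, so since $\pi(s+1) \neq \pi(j)$ we obtain $\Pr[\pi(s+1) \in V_j] \le (|V_j|-1)/(|V_j^*|-1) \le |V_j|/|V_j^*|$. I would then lower bound $|V_j^*| \ge n-q$ as follows: $V_j^*$ was fixed to $[n] \setminus \bigcup_{k<j} V_k$ at the time $V_j$ was opened, the $V_k$'s are pairwise disjoint, and a per-level accounting shows that their total size at that moment is bounded by the number of previously identified positions (at most $q$) plus a subdominant $O(n/\log^{d-1} n)$ contribution from the per-level budgets $\alpha_i \cdot n/\alpha_i^d$; for $d \ge 2$ and $q = n-\Theta(n/\log n)$, this correction is absorbed by $n-q$.

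Finally, I exploit the level structure: at the failure check of $\Advance(\ell)$, each set $V_j$ with $j \le s$ lives at some level $i \ge \ell$, because the sets added in the current call have just been reduced to level $\ell$ by $\SizeReduction$, while sets inherited from outer callers were previously moved to strictly higher levels. Level-$(t+1)$ singletons $\{\pi(k)\}$ cannot contain $\pi(s+1)$ (since $\pi$ is a bijection and $k \neq s+1$), and for $\ell \le i \le t$ there are at most $\alpha_i$ sets of size at most $n/\alpha_i^d$. Grouping by level gives
\[
\Pr[\text{fail}] \;\le\; \sum_{i=\ell}^{t} \alpha_i \cdot \frac{n/\alpha_i^d}{n-q} \;=\; \frac{n}{n-q}\sum_{i=\ell}^{t} \frac{1}{\alpha_i^{d-1}}.
\]
The recurrence $\alpha_{i+1}=\alpha_i^2$ makes the inner sum doubly geometric: writing $x := 1/\alpha_\ell^{d-1} < 1$, we have $\sum_{k \ge 0} x^{2^k} \le \sum_{k \ge 1} x^k = x/(1-x) = 1/(\alpha_\ell^{d-1}-1)$, which yields the desired bound. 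The main obstacle I foresee is the clean justification of $|V_j^*| \ge n-q$: one must reason about the algorithm's state at the earlier time $V_j$ was opened (not at the current failure check) and carefully account for all previously opened sets using the per-level capacity constraints, absorbing the subdominant contribution into the $n-q$ term.
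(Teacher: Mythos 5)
Your proposal follows essentially the same route as the paper: a union bound over the candidate sets grouped by level (at most $\alpha_i$ sets of size at most $n/\alpha_i^d$ on each level $i \ge \ell$), the uniformity from Lemma~\ref{lem:random2} to bound each term by $|V_j|$ over roughly $n-q$, and the doubly-geometric sum $\sum_{i\ge\ell}\alpha_i^{-(d-1)} < (\alpha_\ell^{d-1}-1)^{-1}$. The one step you flag as delicate, $|V_j^*|\ge n-q$, is handled no more rigorously in the paper (which simply divides by $n-k$ with $k$ the number of singletons), and the subdominant $O(n/\log^{d-1}n)$ slack you identify only perturbs the bound by a $(1+o(1))$ factor, which is immaterial to how the lemma is used.
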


\begin{proof}
A failure happens if both $x$ and $y = x \oplus 1_{[n]\setminus
  \bigcup_{i=1}^{s}{V_i}}$ have a score of more than $s$ or equal to $s$ in
line~\ref{query1a} of Algorithm~\ref{alg:upper}.  
This is equivalent to $\pi(s+1) \in \bigcup_{i=1}^{s}{V_i}$. 

Let $k$ be the number of indices $i \in [n]$ for which $V_i$ is on the last level; i.e., $k:=|\{ i \in [n] \mid |V_i|=1\}|$ is the number of sets $V_i$ which have been reduced to singletons already. Note that these sets satisfy $V_i=\{\pi(i)\}$. 
Therefore, they cannot contain $\pi(s+1)$ and we do not need to take them into account.

A failure on level $\ell$ occurs only if $\pi(s+1) \in \bigcup_{j=1}^{s}{V_j}$ and the size of each candidate set $V_1, \ldots, V_s$ has been reduced already to at most $n/\alpha_\ell^d$. 
There are at most $\alpha_j$ candidate sets on each level $j\geq \ell$. 
By construction, the size of each candidate set on level $j$ is at most $n/\alpha_j^d$. 
By Lemma~\ref{lem:random2}, the probability that $\pi(s+1) \in \bigcup_{j=1}^{s}{V_j}$ is at most 
\begin{align}
\label{eq:1}
\frac{1}{n-k} \sum_{j=\ell}^t{\frac{n\alpha_j}{\alpha_j^{d}}}
=
\frac{n}{n-k} \sum_{j=\ell}^t{\frac{1}{\alpha_j^{d-1}}}\,.
\end{align}

By definition we have $\alpha_j \geq \alpha_\ell^{(2^{j-\ell})}$ and in particular we have $\alpha_j \geq \alpha_\ell^{{j-\ell}}$. Therefore expression (\ref{eq:1}) can be bounded from above by 
\[
\frac{n}{n-k}
\sum_{j=1}^{t-\ell}{\left(\frac{1}{\alpha_\ell^{d-1}}\right)^j} 
<
\frac{n}{n-k}
\left( \alpha_\ell^{d-1}-1\right)^{-1}
\,.\]
\end{proof}

Since $\Advance(\ell)$ is called at most $q$ times we immediately get the following.
\begin{corollary}
\label{cor:failure} Let $\ell \in [t]$. 
The expected number of level $\ell$ failures is less than 
$$
q n (n-q)^{-1} (\alpha_\ell^{d-1}-1)^{-1} \leq n^2 (n-q)^{-1} \left(\alpha_\ell^{d-1}-1\right)^{-1}
\,.
$$
\end{corollary}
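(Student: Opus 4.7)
The plan is to combine the two ingredients that have already been separately established in Section~\ref{sec:failure}: an upper bound on the number of times $\Advance(\ell)$ can be invoked, and an upper bound on the failure probability of any single such invocation. The corollary then follows from linearity of expectation in one line.

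First I would recall the deterministic bound on calls. As argued in the paragraph preceding Lemma~\ref{lem:random2}, each completed call $\Advance(\ell)$ returns a non-empty set $J'$ of new indices and distinct calls return disjoint sets (since after a successful call the sets in $J'$ are eventually pushed up to the next level and are never returned again). Because we only advance until $|J|$ reaches $q$, the total number of invocations of $\Advance(\ell)$, across the whole run of the first phase, is at most $q$. This is the deterministic input to the calculation.

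Second, Lemma~\ref{lem:failure} gives the per-call failure bound
\[
\Pr[\text{a given call of } \Advance(\ell) \text{ fails}] \;<\; \frac{n}{(n-q)(\alpha_\ell^{d-1}-1)}.
\]
Letting $N_\ell$ denote the number of level-$\ell$ failures during the first phase and $F_k$ the indicator that the $k$-th call of $\Advance(\ell)$ fails (with $F_k=0$ if fewer than $k$ calls are made), linearity of expectation gives
\[
\E[N_\ell] \;=\; \sum_{k\geq 1}\Pr[F_k=1] \;\le\; q\cdot\frac{n}{(n-q)(\alpha_\ell^{d-1}-1)}
\;=\; \frac{qn}{(n-q)(\alpha_\ell^{d-1}-1)},
\]
which is the first asserted inequality. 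The second inequality follows trivially from $q\le n$.

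There is essentially no obstacle here, since both required estimates are already available: the main work was in Lemma~\ref{lem:failure}. The only subtlety worth mentioning explicitly in the write-up is that the per-call bound from Lemma~\ref{lem:failure} holds conditionally on the entire history up to that call (because the argument in Lemma~\ref{lem:random2} only uses that the residual candidate sets are uniformly random subsets of $V_j^*\setminus\{\pi(j)\}$ of the appropriate size, a property that is maintained by all subroutines). Hence the conditional failure probability is uniformly bounded by $n/((n-q)(\alpha_\ell^{d-1}-1))$, and the linearity-of-expectation step is justified without any independence assumption.
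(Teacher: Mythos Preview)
Your proposal is correct and follows exactly the paper's approach: the paper's justification is the single sentence ``Since $\Advance(\ell)$ is called at most $q$ times we immediately get the following,'' which combines the deterministic bound of $q$ calls with the per-call failure probability from Lemma~\ref{lem:failure} via linearity of expectation. Your write-up is simply a more explicit version of this, and your remark that the bound in Lemma~\ref{lem:failure} holds conditionally on the history (so no independence is needed) is a useful clarification the paper leaves implicit.
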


\section{The Lower Bound}
\label{sec:lower}
\label{SEC:LOWER}

We prove a tight lower bound for the randomized query complexity of the $\ourproblem$ problem. The lower bound is stated in the following:

\begin{theorem}
\label{thm:lower}
The randomized query complexity of the $\ourproblem$ problem with $n$ positions is $\Omega(n \log\log n)$.
\end{theorem}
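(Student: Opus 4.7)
My plan is to prove the $\Omega(n \log\log n)$ lower bound via Yao's minimax principle combined with a potential-function argument. Fix the input distribution to be uniform over $\{0,1\}^n \times S_n$; then it suffices to show that any deterministic strategy requires $\Omega(n \log\log n)$ queries in expectation against this distribution. A crucial property of the uniform prior is that, after any history, the posterior over consistent secrets remains uniform, and by Theorem~\ref{thm:knowledge} the conditional state is captured exactly by the laminar family $V_1,\ldots,V_n$ of candidate sets together with the top-scoring pair $(x^*,s^*)$.

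Next I would design a potential function $\Phi$ of this state with three properties: $\Phi_0 = \Theta(n \log\log n)$, $\Phi \ge 0$ with $\Phi = 0$ once the secret is identified, and $\E[\Phi_t - \Phi_{t+1}] = O(1)$ for every query. The natural candidate mimics the $\Theta(\log\log n)$ ``levels'' of the upper-bound algorithm: each candidate set $V_j$ begins at level $0$ (with $V_j=[n]$) and must be driven to level $t+1$ (a singleton), so $\Phi$ sums over $j$ the number of levels still to be crossed. Concretely, one may take $\Phi = \sum_{j=1}^{n} \phi(|V_j|)$ with $\phi$ essentially doubly logarithmic and calibrated to the level sizes $n/\alpha_i^d$, so that $\Phi_0 = \Theta(n \log\log n)$ and $\phi(1)=0$.

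To bound the per-query drop, I would fix the next query $x$ at time $t+1$ and analyse the distribution of its score $s = f_{z,\pi}(x)$ under the current posterior. The update rules of Theorem~\ref{thm:knowledge} determine how each $V_j$ changes when $s$ is revealed: only the sets $V_1,\ldots,V_{\max(s,s^*)+1}$ are affected, and each is replaced by its intersection with, or difference from, the agreement set $I = \{i : x_i = x^*_i\}$. Under the uniform posterior, a symmetry argument in the spirit of Lemma~\ref{lem:random2} shows that each affected $V_j$ shrinks by only a constant factor in expectation, so the drop in $\phi(|V_j|)$ per affected set is controlled by the local shape of $\phi$. Simultaneously, the probability of a high score decays geometrically: to obtain $s \ge s^*+k$ the query must correctly predict $z_{\pi(s^*+1)},\ldots,z_{\pi(s^*+k)}$, and a careful counting over the laminar family yields $\Pr[s \ge s^*+k \mid \Ha_t] \le c^{-k}$ for some absolute constant $c>1$. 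Combining these estimates, the geometric tail of $s$ absorbs the larger drops caused by high-score events and gives $\E[\Phi_t - \Phi_{t+1}] = O(1)$, whence $\E[T] \ge \Phi_0/O(1) = \Omega(n \log\log n)$.

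The principal obstacle will be choosing $\phi$ so that these two effects cancel exactly. A high-score query, though rare, can shrink many $V_j$ simultaneously and thus drop a large chunk of $\Phi$, while low-score queries affect few sets but occur with high probability; the potential must be calibrated so that the summed drop at score $s$ scales at most linearly in $s$, because only then does the geometric score tail damp the expected drop to a constant. Making this balance precise while correctly tracking the interaction between the strategy's adaptive choice of $x$ and the laminar structure of the $V_j$, including subtle cases where $s^*$ advances and the relevant set $V_{s^*+1}$ changes identity, is the technical heart of the argument.
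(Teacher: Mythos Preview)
Your overall plan---Yao's principle plus a potential built from $\log\log(2n/|V_j|)$-type terms---is precisely the paper's approach. But the specific potential $\Phi = \sum_{j=1}^n \phi(|V_j|)$ you propose has a genuine gap at the terminal step.

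You claim $\Phi = 0$ once the secret is identified, i.e.\ that every $V_j$ must end as a singleton. This is false. By Theorem~\ref{thm:consistency} the secret is pinned down exactly when $\prod_j (|V_j| - c_j) = 1$, where $c_j$ counts the $V_i$ with $i<j$ and $V_i \subseteq V_j$; the laminar family need not consist of singletons. The fully nested family $V_j = [j]$ already satisfies this product condition, yet for it one computes $\Phi_0 - \Phi_{\text{end}} = \sum_{j=1}^n \log\log(2n/j) = O(n)$, not $\Omega(n\log\log n)$. So even granting an $O(1)$ expected drop per query, your stopping-time inequality would only yield $\E[T] = \Omega(n)$. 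Whether such heavily nested terminal families are actually reachable by some strategy is beside the point: your proposal offers no argument ruling them out, and the burden is on the potential to do so.

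The paper repairs exactly this defect. It sums the $\log\log$ contribution not over all $j$ but only over \emph{active} candidate sets---those $V_j$ for which the score $j-1$ was actually returned along the root-to-$v$ path. The update rules then force active sets to be pairwise \emph{disjoint}, not merely laminar. A separate probabilistic argument (your own observation that $\Pr[\text{score jumps by }\geq \alpha] = 2^{-\Omega(\alpha)}$, applied to the sequence of score-increasing edges) shows that with probability at least $3/4$ there are $\Omega(n)$ active sets by the time the score reaches $n/3$; disjointness within $[n]$ then forces $\Omega(n)$ of them to have size $O(1)$, giving potential $\Omega(n\log\log n)$ at that node. The per-query bound (the paper's Lemma~\ref{lem:pot}) is indeed the technical heart, but it is this two-part endpoint argument that your plan is missing. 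Two smaller points: the paper's hard distribution is not uniform over $\{0,1\}^n\times S_n$ but takes $z$ to be determined by $\pi$ via $z_{\pi(i)} \equiv i \pmod 2$, which gives the candidate sets a clean parity structure used throughout; and your appeal to Lemma~\ref{lem:random2} is misplaced, since that lemma concerns the randomness of the upper-bound \emph{algorithm}, whereas here the algorithm is deterministic and all randomness lies in the input.
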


To prove a lower bound for randomized query schemes, we appeal to
Yao's principle. That is, we first define a hard distribution over the
secrets and show that every deterministic query scheme for this hard
distribution needs $\Omega(n \log \log n)$ queries in
expectation. This part of the proof is done using a potential function
argument.

\paragraph{Hard Distribution.}
Let $\Pi$ be a permutation drawn uniformly among all the permutations of
$[n]$ (in this section, we use capital letters to denote random variables). 
Given such a permutation, we let our target string $Z$ be the one
satisfying $Z_{\Pi(i)} = i (\mod 2)$ for $i=1,\dots,n$.
Since $Z$ is uniquely determined by the permutation $\Pi$,
we will mostly ignore the role of $Z$ in the rest of this section.
Finally, we use $F(x)$ to denote the value of the random variable
$f_{Z,\Pi}(x)$ for $x \in \left\{ 0,1 \right\}^n$.
We will also use the notation $a \equiv b$ to mean that $a \equiv b (\mod 2)$.

\paragraph{Deterministic Query Schemes.}
By fixing the random coins, a randomized solution with expected $t$
queries implies the existence of a deterministic query scheme with
expected $t$ queries over our hard distribution. The rest of this
section is devoted to lower bounding $t$ for such a deterministic
query scheme.

Recall that a deterministic query scheme is a decision tree $T$
in which each node $v$ is labeled with a string $x_v \in
\{0,1\}^n$. Each node has $n+1$ children, numbered from 0 to $n$, 
and the $i$th child is traversed if $F(x_v) = i$.
To guarantee correctness, no two inputs can end up in the same leaf.

For a node $v$ in the decision tree $T$, we define $\max_v$ as the largest value of $F$
seen along the edges from the root to $v$. 
Note that $\max_v$ is not a random variable.
Also, we define $S_v$ as the subset of inputs (as outlined above) that reach node $v$.

\paragraph{Update Rules for Candidate Sets.} We use a potential function which
measures how much ``information'' the queries asked have revealed
about $\Pi$. Our goal is to show that the expected increase in the potential
function after asking each query is small. Our potential function depends crucially on the
candidate sets.  The update rules for the
candidate sets are slightly more specific than the ones in Section~\ref{sec:definitions} because we
now have a fixed connection between the two parts of the secret. We denote the
candidate set for $\pi(i)$ at node $v$ with $V_i^v$. At the root node $r$, we have $V^r_i = [n]$ for all $i$. 
Let $v$ be a node in the tree and let
$w_0,\dots,w_n$ be its children ($w_i$ is traversed when the score $i$ is returned).
Let $P_0^v$ (resp. $P_1^v$) be the set of positions in $x_v$ that contain
0 (resp. 1). Thus, formally, 
$P_0^v = \{ i \mid x_v[i]=0\}$ and $P_1^v = \{ i \mid x_v[i]=1\}$.\footnote{To prevent our notations from becoming too overloaded, here and in the remainder of the section we write $x=(x[1],\ldots, x[n])$ instead of $x=(x_1,\ldots, x_n)$} 
The precise definition of candidate sets is as follows:
\[
V^{w_j}_i = \left\{
\begin{array}{l l}
  V^v_i \cap P^v_{i (\mod 2)} & \quad \text{if } i \leq j\,,\\
  V^v_i \cap P^v_{j (\mod 2)} & \quad \text{if } i = j+1\,,\\
  V^v_i & \quad \text{if } i > j+1.\\
\end{array} \right.
\]

As with the upper bound case, the candidate sets have some very useful properties.
These properties are slightly different from the ones observed before, due to the fact
that some extra information has been announced to the query algorithm.
We say that a candidate set $V_i^v$ is {\emph{active (at $v$)}} if the following
conditions are met: 
(i) at some ancestor node $u$ of $v$, we
have $F(x_u) = i-1$, 
(ii) at every ancestor node $w$ of $u$
we have $F(x_w) < i-1$, 
and (iii) $i < \min \left\{  n/3,\max_v\right\}$.
We call $V_{\max_v+1}^v$ {\emph{pseudo-active (at $v$)}}. 

For intuition on the requirement $i < n/3$, observe from the following
lemma that $V^v_{\max_v+1}$ contains all sets $V^v_i$ for $i \leq
\max_v$ and $i \equiv \max_v$. At a high level, this means that the
distribution of $\Pi(\max_v+1)$ is not independent of $\Pi(i)$ for $i
\equiv \max_v$. The bound $i < n/3$, however, forces the dependence to
be rather small (there are not too many such sets). This greatly helps
in the potential function analysis.

We prove the following lemma with arguments similar to those in the proof of Theorem~\ref{thm:knowledge}.
\begin{lemma}
\label{lem:nested}
	The candidate sets have the following properties:
		
		(i) Two candidate sets $V_i^v$ and $V_j^v$ with $i<j \le \max_v$ and
			$i\not \equiv j$ are disjoint.
	
		(ii) An active candidate set $V_j^v$ is disjoint from any candidate
			set $V_i^v$ provided $i < j < \max_v$.
	
		(iii) The candidate set $V_{i}^v$, $i\le \max_v$ is contained in the
			set $V_{\max_v+1}^{v}$ if $i \equiv \max_v$ and is disjoint
			from it if $i \not \equiv \max_v$.
	
		(iv) For two candidate sets $V_i^v$ and $V_j^v$, $i < j$, if $V_i^v \cap V_j^v \not = \emptyset$ then
            $V_i^v \subset V_j^v$.
\end{lemma}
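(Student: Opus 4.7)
The plan is to derive a single closed-form characterization of each candidate set in terms of the $P$-sets along the root-to-$v$ path and then read off all four properties from it. Let $r = u_0, u_1, \ldots, u_k = v$ be the path and set $c_t := F(x_{u_t})$. Unrolling the update rule yields, by a short induction on depth, the identity
\[
V_i^v \;=\; \bigcap_{t \,:\, c_t \geq i} P^{u_t}_{i \bmod 2} \;\cap\; \bigcap_{t \,:\, c_t = i-1} P^{u_t}_{(i-1) \bmod 2},
\]
with an empty intersection interpreted as $[n]$; the first case of the update rule produces the first intersection, the second case produces the second. In particular $V_i^v = [n]$ exactly when $i > \max_v + 1$. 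All four claims will follow by inspecting which $P$-sets appear in this formula for $V_i^v$ versus $V_j^v$.

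For (i), with $i < j \leq \max_v$ and $i \not\equiv j$, pick any ancestor $u_t$ with $c_t \geq j$ (one exists by definition of $\max_v$); then $V_i^v \subseteq P^{u_t}_{i \bmod 2}$ and $V_j^v \subseteq P^{u_t}_{j \bmod 2}$ lie in the two disjoint halves of $[n]$ cut out by $x_{u_t}$. For (ii), split on parity: if $i \not\equiv j$ invoke (i); if $i \equiv j$, use the ancestor $u$ with $c_u = j-1$ guaranteed by the definition of active—at its transition $V_j$ picks up the $P^u_{(j-1) \bmod 2}$ constraint via the $i = c+1$ case of the rule, while $V_i$ picks up $P^u_{i \bmod 2}$ via the first case (applicable since $i \leq j-1 = c_u$), and these are disjoint because $i \equiv j \not\equiv j-1$. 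For (iii) with $M := \max_v$, the set $V_{M+1}$ is only modified at ancestors with $c_t = M$, giving $V_{M+1}^v = \bigcap_{t : c_t = M} P^{u_t}_{M \bmod 2}$. When $i \leq M$ and $i \equiv M$, each such $u_t$ also has $c_t \geq i$ with matching parity, so every constraint defining $V_{M+1}^v$ is already imposed on $V_i^v$; when $i \not\equiv M$, any single such $u_t$ exhibits disjoint halves.

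For (iv) the subcases $j > M+1$ (trivial, since $V_j^v = [n]$), $j = M+1$ (reduce to (iii)), and $j \leq M$ with $i \not\equiv j$ (reduce to (i)) are immediate. The real content is $i < j \leq M$ with $i \equiv j$, which is the only place where the hypothesis $V_i^v \cap V_j^v \neq \emptyset$ truly bites. The plan is to dichotomise on whether some ancestor $u_t$ has $c_t = j-1$. If one does, the characterization forces $V_j^v \subseteq P^{u_t}_{(j-1) \bmod 2}$ and, since $c_t = j-1 \geq i$, also $V_i^v \subseteq P^{u_t}_{i \bmod 2}$; these are disjoint because $i \equiv j \not\equiv j-1$, contradicting the hypothesis. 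Hence no such ancestor exists, and the characterization collapses $V_j^v$ to $\bigcap_{t : c_t \geq j} P^{u_t}_{j \bmod 2}$, which contains $V_i^v$ because $\{t : c_t \geq j\} \subseteq \{t : c_t \geq i\}$ and the $P$-parities agree.

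The main obstacle is exactly this last subcase: elsewhere parity reasoning on a single well-chosen $u_t$ immediately disposes of the claim, whereas the nested containment for same-parity indices requires the nonempty-intersection hypothesis to rule out the ``boundary'' $P^{u_t}_{(j-1) \bmod 2}$ constraint on $V_j^v$ that would otherwise obstruct $V_i^v \subseteq V_j^v$.
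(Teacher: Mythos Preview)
Your proof is correct. The underlying ideas---choosing an ancestor with a particular score and reading off parity constraints from the update rule, and for (iv) dichotomising on whether some ancestor score equals $j-1$---are exactly those of the paper. The one organizational difference is that you first establish the closed-form identity
\[
V_i^v \;=\; \bigcap_{t \,:\, c_t \geq i} P^{u_t}_{i \bmod 2} \;\cap\; \bigcap_{t \,:\, c_t = i-1} P^{u_t}_{(i-1) \bmod 2}
\]
and then read off all four claims, whereas the paper argues each part separately, handling (iii) and the containment subcase of (iv) by induction along the root-to-$v$ path. Your packaging is a bit cleaner: once the closed form is in hand, (iii) and the final subcase of (iv) become one-line set-inclusion checks rather than inductive arguments, and the proof makes explicit why the nonempty-intersection hypothesis in (iv) is needed precisely to kill the $c_t = j-1$ constraint. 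The paper's version has the minor virtue of not needing the closed form at all, but the logical content is identical.
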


\begin{proof}
    Let $w$ be the ancestor of $v$ where the function returns score $\max_v$.

    To prove (i), observe that in $w$, one of $V_i^w$ and $V_j^w$ is intersected with $P_0^w$
    while the other is intersected with $P_1^w$ and thus they are made disjoint.  

    To prove (ii), we can assume $i \equiv j$ as otherwise the result follows from
    the previous case. 
    Let $u$ be the ancestor of $v$ such that $F(x_u) = j-1$ and
    that in any ancestor of $u$, the score returned by the function 
    is smaller than $j-1$.  
    At $u$, $V_j^u$ is
    intersected with $P_{j-1 \bmod 2}^u$ while $V_i^v$ is intersected with
    $P_{i \bmod 2}^u$.  Since $ i\equiv j$, it follows that they are again
    disjoint.

    For (iii), the latter part follows as in (i). 
    Consider an ancestor $v'$ of $v$ and let
    $w_j$ be the $j$th child of $v'$ that is also an ancestor of $v$.
    We use induction and we assume $V_{i}^{v'} \subset V_{\max_v+1}^{v'}$.
    If $j<\max_v$, then $V_{\max_v+1}^{v'} = V_{\max_v+1}^{w_j}$ which means
    $V_{i}^{w_j} \subset V_{\max_v+1}^{w_j}$.
    If $j=\max_v$, then $V_{\max_v+1}^{w_j} = V_{\max_v+1}^{v'} \cap P_{\max_v (\mod 2)}^{v'}$ and notice that
    in this case also $V_{i}^{w_j} = V_{i}^{v'} \cap P_{i (\mod 2)}^{v'}$ which still implies
    $V_{i}^{w_j} \subset V_{\max_v+1}^{w_j}$.

    To prove (iv), first observe that the statement is trivial if $i \not \equiv j$.
    Also, if the function returns score $j-1$ at any ancestor of $v$, then by the
    same argument used in (ii) it is possible to show that $V_i^v \cap V_j^v = \emptyset$. 
    Thus assume $i \equiv j$ and the function never returns value $j-1$. In this case,
    it is easy to see that an inductive argument similar to (iii) proves that 
    $V_i^{v'} \subset V_j^{v'}$ for every ancestor $v'$ of $v$.
\end{proof}

\begin{corollary}
\label{lem:disjoint}
Every two distinct active candidate sets $V_i^v$ and $V_j^v$ are disjoint.
\end{corollary}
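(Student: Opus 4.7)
The plan is to reduce the corollary directly to part (ii) of Lemma~\ref{lem:nested}. Without loss of generality assume $i<j$. Because $V_j^v$ is active, by definition we have $j < \min\{n/3,\max_v\}$, hence in particular $j < \max_v$. Combined with $i<j$, this gives the chain $i < j < \max_v$, which is precisely the hypothesis of part (ii). Applying that part to the active set $V_j^v$ immediately yields $V_i^v \cap V_j^v = \emptyset$, as required.

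The only thing to double check is that we never need to treat $V_i^v$ and $V_j^v$ symmetrically: part (ii) of the lemma only requires one of the two sets (the one with the larger index) to be active, so it applies regardless of whether $V_i^v$ itself is active. I would therefore present the proof in two sentences, as a direct invocation of Lemma~\ref{lem:nested}(ii), noting explicitly that activity of $V_j^v$ is what supplies the strict inequality $j<\max_v$ that the lemma demands. No real obstacle is expected here; the corollary is essentially a re-packaging of the lemma that is convenient for later use in the potential-function argument (where one wants to sum sizes of active sets without worrying about double counting).
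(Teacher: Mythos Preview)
Your argument is correct and is exactly the intended derivation: the paper states the result as an immediate corollary of Lemma~\ref{lem:nested} with no separate proof, and your observation that activity of the larger-indexed set supplies the condition $j<\max_v$ needed for part~(ii) is precisely the point.
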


Remember that the permutation $\Pi$ was chosen uniformly and randomly.
Soon, we shall see that this fact combined with the above properties imply that
$\Pi(i)$ is uniformly distributed in $V_i^v$, when $V_i^v$ is active. 
The following lemma is needed to prove this.

\begin{lemma}\label{lem:distribution}
	Consider a candidate set $V_i^v$ and let $i_1 <  \cdots <  i_k < i$ be the indices of
	candidate sets that are subsets of $V_i^v$.
    Let $\sigma := (\sigma_1, \cdots, \sigma_{i})$ be a sequence without repetition 
	from $[n]$ and let $\sigma' := (\sigma_1, \cdots, \sigma_{i-1})$. 
	Let $n_{\sigma}$ and $n_{\sigma'}$ be the number of permutations in $S_v$ that have $\sigma$ 
	and $\sigma'$ as a prefix, respectively. 
	If $n_\sigma > 0$, then $n_{\sigma'} = (|V_i^v|-k)n_\sigma$.
\end{lemma}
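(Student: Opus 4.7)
The plan is to count $n_{\sigma'}$ by partitioning the permutations in $S_v$ that extend $\sigma'$ according to the value they assign to position $i$, and then to exhibit a bijection showing that every block of this partition has the same size, namely $n_\sigma$. This reduces the lemma to two sub-claims: (a) the set $A$ of admissible values for $\pi(i)$ has cardinality $|V_i^v| - k$, and (b) every $a \in A$ extends to the same number of completions of the prefix.

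For (a), a permutation $\pi \in S_v$ with $\pi(j) = \sigma_j$ for all $j < i$ must satisfy $\pi(i) \in V_i^v \setminus \{\sigma_1, \ldots, \sigma_{i-1}\}$. I would then invoke Lemma~\ref{lem:nested}(iv): for each $j < i$ the sets $V_j^v$ and $V_i^v$ are either nested with $V_j^v \subseteq V_i^v$, or disjoint. Hence $\sigma_j = \pi(j) \in V_j^v$ lies in $V_i^v$ precisely when $j \in \{i_1, \ldots, i_k\}$, so $|A| = |V_i^v| - k$.

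For (b), I would fix two admissible values $a, a' \in A$ and build a bijection between the completions corresponding to $\pi(i) = a$ and those corresponding to $\pi(i) = a'$ by swapping the roles of $a$ and $a'$. Given a valid completion $\tau$ of the prefix $(\sigma_1, \ldots, \sigma_{i-1}, a)$, the value $a'$ necessarily appears at a unique position $j_0 > i$ (it lies in the codomain of $\tau$ because $a' \neq a$ and $a' \notin \{\sigma_1, \ldots, \sigma_{i-1}\}$); I send $\tau$ to the completion of $(\sigma_1, \ldots, \sigma_{i-1}, a')$ obtained by resetting $\tau(j_0) := a$ and leaving all other values unchanged. The one nontrivial verification is that the new assignment respects the candidate-set constraints at $j_0$, i.e.\ that $a \in V_{j_0}^v$. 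This is where Lemma~\ref{lem:nested}(iv) enters again: since $a' \in V_i^v \cap V_{j_0}^v$ and $i < j_0$, laminarity forces $V_i^v \subseteq V_{j_0}^v$, whence $a \in V_i^v \subseteq V_{j_0}^v$. The inverse map swaps the roles symmetrically.

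Combining (a) and (b), and using the hypothesis $n_\sigma > 0$ (which guarantees $\sigma_i \in A$ and identifies the common block size as $n_\sigma$), I would conclude $n_{\sigma'} = \sum_{a \in A} n_\sigma = (|V_i^v| - k)\, n_\sigma$. The only substantive step is the swap argument in (b), and its crux is the laminar structure from Lemma~\ref{lem:nested}(iv), which expresses the fact that all elements of $V_i^v$ are interchangeable with respect to membership in every later candidate set $V_{j_0}^v$ with $j_0 > i$.
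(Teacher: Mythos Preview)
Your proof is correct and follows essentially the same approach as the paper: both arguments rest on the swap bijection (exchange the value at position $i$ with the value at the later position where the other candidate sits) and both rely on the laminarity from Lemma~\ref{lem:nested}(iv) to verify that the swapped permutation remains in $S_v$. The paper phrases it as a single map $(\pi,s)\mapsto\pi_s$ from permutations with prefix $\sigma$ times the set of admissible values onto permutations with prefix $\sigma'$, whereas you partition by the value at position $i$ and build pairwise bijections between blocks; these are two presentations of the same counting argument, and your version makes the role of Lemma~\ref{lem:nested}(iv) more explicit than the paper does.
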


\begin{proof}
	Consider a permutation $\pi \in S_v$ that has $\sigma$ as a prefix.
	This implies $\pi(i) \in V_i^v$.
	For an element $s \in V_i^v$, $s \not = i_j$, $1 \le j \le k$, 
	let $\pi_s$ be the permutation obtained from $\pi$ by placing 
	$s$ at position $i$ and placing $\pi(i)$ where $s$ used to be.
	Since  $s \not = i_j$, $1 \le j \le k$, it follows that $\pi_s$ 
	has $\sigma'$ as prefix and since $s \in V_i^v$ it follows that 
	$\pi_s \in S_v$. It is easy to see that for every permutation in $S_v$
	that has $\sigma$ as a prefix we will create $|V_i^v|-k$ different
	permutations that have $\sigma'$ as a prefix and all these permutations
	will be distinct. 
	Thus, $n_{\sigma'} = (|V_i^v|-k)n_\sigma$.
\end{proof}

\begin{corollary}
\label{cor:uniform}
	Consider a candidate set $V_i^v$ and let $i_1 <  \cdots <  i_k < i$ be the indices of
	candidate sets that are subsets of $V_i^v$.
    Let $\sigma' := (\sigma_1, \cdots, \sigma_{i-1})$ be a sequence without repetition 
	from $[n]$ and let $\sigma_1 := (\sigma_1, \cdots, \sigma_{i-1}, s_1)$ and 
	$\sigma_2 := (\sigma_1, \cdots, \sigma_{i-1}, s_2)$ in which $s_1, s_2 \in V_i^v$.
	Let $n_{\sigma_1}$ and $n_{\sigma_2}$ be the number of permutations in $S_v$ that have $\sigma_1$ 
	and $\sigma_2$ as a prefix, respectively. 
    If $n_{\sigma_1}, n_{\sigma_2} > 0$, then $n_{\sigma_1} = n_{\sigma_2}$.
\end{corollary}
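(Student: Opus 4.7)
My plan is a direct reduction to Lemma~\ref{lem:distribution}. The crucial observation is that in the conclusion $n_{\sigma'} = (|V_i^v|-k)\, n_\sigma$ of that lemma, the multiplicative factor $(|V_i^v|-k)$ depends only on $V_i^v$ and on the indices $i_1 < \cdots < i_k < i$ of the candidate sets contained in $V_i^v$; it is insensitive to which specific element of $V_i^v$ is placed at position $i$ of $\sigma$.

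Concretely, I would apply Lemma~\ref{lem:distribution} twice---once with the length-$i$ prefix $\sigma_1$ and once with $\sigma_2$. Both applications are legitimate: the candidate set $V_i^v$ and the indices $i_1, \ldots, i_k$ are the same in both cases, $s_1, s_2 \in V_i^v$ by assumption, and the hypothesis $n_{\sigma_1}, n_{\sigma_2} > 0$ satisfies the ``if'' clause of the lemma. The two applications produce
\[ n_{\sigma'} \;=\; (|V_i^v|-k)\, n_{\sigma_1} \qquad \text{and} \qquad n_{\sigma'} \;=\; (|V_i^v|-k)\, n_{\sigma_2}. \]
Since $n_{\sigma_1} > 0$ forces $|V_i^v| - k > 0$ (otherwise the right-hand side would be zero, contradicting the fact that any permutation with prefix $\sigma_1$ also has prefix $\sigma'$ and hence $n_{\sigma'} \geq n_{\sigma_1} > 0$), I may cancel this common nonzero factor and conclude $n_{\sigma_1} = n_{\sigma_2}$.

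There is essentially no technical obstacle here; the corollary is a one-line consequence of the preceding lemma. Its purpose is to record, in a form convenient for the potential-function argument to come, the symmetry that any two extensions of $\sigma'$ by distinct elements of $V_i^v$ that are realised by some permutation in $S_v$ are in fact realised \emph{equally often}. This in turn underlies the uniformity claim for $\Pi(i)$ on active candidate sets that drives the lower bound.
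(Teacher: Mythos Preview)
Your proof is correct and follows the same approach as the paper: both apply Lemma~\ref{lem:distribution} to conclude that the ratio $n_{\sigma}/n_{\sigma'}$ depends only on $|V_i^v|-k$ and not on the specific element placed at position~$i$, whence any two positive extension counts must coincide. Your version is in fact more carefully stated than the paper's own one-line proof, which rephrases the conclusion in probability language and writes the factor as $1/|V_i^v|$ rather than $1/(|V_i^v|-k)$; your explicit justification that $|V_i^v|-k>0$ (so that cancellation is valid) is a nice touch.
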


\begin{proof}
	Consider a sequence $s_1,\cdots, s_i$ without repetition from $[n]$ such that
	$s_j \in V_j^v$, $1 \le j \le i$.
	By the previous lemma
	$\Pr[ \Pi(1) = s_1 \wedge \cdots \wedge \Pi(i-1) = s_{i-1} 
	\wedge \Pi(i) = s_i] = \Pr[ \Pi(1) = s_1 \wedge \cdots \wedge \Pi(i-1) = s_{i-1}] \cdot
	(1/|V_i^v|)$. 
\end{proof}

\begin{corollary}
\label{cor:distribution}
	If $V_i^v$ is active, then we have:
	\begin{itemize}
		\item[(i)] $\Pi(i)$ is independent of $\Pi(1), \cdots, \Pi(i-1)$.
		\item[(ii)] $\Pi(i)$ is uniformly distributed in $V_i^v$.
	\end{itemize}
\end{corollary}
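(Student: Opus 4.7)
The plan is to prove (i) and (ii) simultaneously by showing that, for any realization $\sigma' = (\sigma_1, \ldots, \sigma_{i-1})$ of $(\Pi(1), \ldots, \Pi(i-1))$ with $n_{\sigma'} > 0$, the conditional distribution of $\Pi(i)$ given $S_v$ and that prefix is uniform on all of $V_i^v$. Since this conditional distribution does not depend on the specific $\sigma'$, (i) follows immediately, and averaging over $\sigma'$ gives (ii). The two tools at hand are Lemma~\ref{lem:distribution} and Corollary~\ref{cor:uniform}; the key enabling observation is that the parameter $k$ appearing in those statements vanishes whenever $V_i^v$ is active.

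First I would establish $k = 0$. Since $V_i^v$ is active, by condition (iii) in the definition of activeness we have $i < \max_v$. Applying Lemma~\ref{lem:nested}(ii) (exchanging the roles of the indices so that the active set is the one with the larger index), we get that $V_i^v$ is disjoint from every $V_j^v$ with $j < i$. In particular, none of those $V_j^v$ can be a subset of $V_i^v$, so the list $i_1 < \cdots < i_k < i$ from the statement of Lemma~\ref{lem:distribution} is empty. As a side bonus, this disjointness also shows that for any prefix $\sigma'$ reaching $v$, each $\sigma_j \in V_j^v$ lies outside $V_i^v$, so every element of $V_i^v$ is automatically distinct from $\sigma_1, \ldots, \sigma_{i-1}$ and is a priori a legal value for $\Pi(i)$.

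Now I would fix such a prefix $\sigma'$ and decompose $n_{\sigma'} = \sum_{s \in V_i^v} n_{\sigma' \cdot (s)}$, which is valid because the constraints $s \in V_i^v$ and $s \notin \{\sigma_1, \ldots, \sigma_{i-1}\}$ coincide by the previous paragraph. Corollary~\ref{cor:uniform} forces every nonzero summand $n_{\sigma' \cdot (s)}$ to equal a common value $N$, while Lemma~\ref{lem:distribution} with $k = 0$ gives $n_{\sigma'} = |V_i^v| \cdot N$ as soon as at least one summand is nonzero. Combining the two, $mN = |V_i^v| N$ where $m$ is the number of $s \in V_i^v$ with a valid extension, so $m = |V_i^v|$: every element of $V_i^v$ occurs as $\Pi(i)$ with the same conditional probability $1/|V_i^v|$. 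This immediately yields both (i) and (ii). The main obstacle, and what I view as the most delicate point, is precisely the step that upgrades Corollary~\ref{cor:uniform}'s pairwise equality into full uniform support on $V_i^v$: on its own Corollary~\ref{cor:uniform} leaves open the possibility that some $s \in V_i^v$ never appears, and it is Lemma~\ref{lem:distribution} (with $k = 0$) that pins down the total count and closes this gap.
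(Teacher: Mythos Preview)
Your argument is correct and is essentially the approach the paper has in mind: the paper states Corollary~\ref{cor:distribution} without a separate proof, relying on Lemma~\ref{lem:distribution} together with the (implicit) observation that $k=0$ when $V_i^v$ is active, which is exactly what you make explicit via Lemma~\ref{lem:nested}(ii). One small simplification: your appeal to Corollary~\ref{cor:uniform} is redundant, since Lemma~\ref{lem:distribution} with $k=0$ already pins down every nonzero $n_{\sigma'\cdot(s)}$ to the single value $n_{\sigma'}/|V_i^v|$, and summing then forces all $|V_i^v|$ terms to be nonzero.
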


\subsection{The Potential Function}
We define the potential of an active candidate set $V_i^v$ as $\log\log{(2n/|V_i^v|)}$.
This is inspired by the upper bound:
a potential increase of $1$ corresponds to a candidate set advancing one level in the 
upper bound context (in the beginning, a set $V_i^v$ has size $n$ and thus its potential is
0 while at the end its potential is $\Theta(\log\log n)$. 
With each level, the quantity $n$ divided by the size of $V_i$ is squared). 
We define the potential at a node $v$ as 
$$\phi(v) = \log\log{\frac{2n}{|V_{\max_v+1}^v|-\con_v}} + \sum_{j \in A_v} \log\log{\frac{2n}{|V_{j}^v|}}\,,$$
in which $A_v$ is the set of indices of active candidate sets at $v$ and
$\con_v$ is the number of candidate sets contained inside $V_{\max_v+1}^v$.
Note that from Lemma~\ref{lem:nested}, it follows that $\con_v = \lfloor \max_v / 2 \rfloor$. 

The intuition for including the term $\con_v$ is the same as our
requirement $i<n/3$ in the definition of active candidate sets, namely
that once $\con_v$ approaches $|V_{\max_v+1}^v|$, the distribution of
$\Pi(\max_v+1)$ starts depending heavily on the candidate sets $V_i^v$
for $i\leq \max_v$ and $i \equiv \max_v$. Thus we have in some sense
determined $\Pi(\max_v+1)$ already when $|V_{\max_v+1}^v|$ approaches
$\con_v$. Therefore, we have to take this into
account in the potential function since otherwise changing $V_{\max_v+1}^v$
from being pseudo-active to being active could give a huge potential
increase.

The following is the main lemma that we wish to prove; it tells
us that the expected increase of the potential function
after each query is constant. The proof of this lemma is the main part of the proof of Theorem~\ref{thm:lower}.


\begin{lemma}\label{lem:pot}
Let $v$ be a node in $T$ and let $\is_v$ be the random variable giving
the value of $F(x_v)$ when $\Pi \in S_v$ and $0$ otherwise. Also
let $w_0,\dots,w_n$ denote the children of $v$, where $w_j$ is the
child reached when $F(x_v)=j$. Then,
$\E[\phi(w_{\is_v})-\phi(v) \mid \Pi \in S_v] = O(1)$.
\end{lemma}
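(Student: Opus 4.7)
The plan is to decompose the potential change as
\[
\phi(w_{\is_v}) - \phi(v) \;=\; \sum_{i \in A_v} \Delta_i^{\mathrm{act}} \;+\; \Delta^{\mathrm{pa}} \;+\; \Delta^{\mathrm{new}},
\]
where $\Delta_i^{\mathrm{act}}$ is the change in the summand for an active index $i \in A_v$ (by Lemma~\ref{lem:nested}(ii) these remain active after any transition, with $V_i^{w_{\is_v}} \subseteq V_i^v$), $\Delta^{\mathrm{pa}}$ captures the change in the pseudo-active term at index $\max_v + 1$, and $\Delta^{\mathrm{new}}$ collects indices that are absent from the potential at $v$ but enter at $w_{\is_v}$ (this only happens when $\is_v > \max_v$). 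I would then take expectation over $\is_v$ using Corollary~\ref{cor:distribution} and Lemma~\ref{lem:distribution}, and show each group contributes $O(1)$.

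For $\Delta_i^{\mathrm{act}}$, fix $i \in A_v$ and set $V_i^+ := V_i^v \cap P^v_{i \bmod 2}$, $V_i^- := V_i^v \setminus V_i^+$, and $p_i := |V_i^+|/|V_i^v|$. The update rules force $V_i^{w_{\is_v}}$ to equal $V_i^+$ on $\{\is_v \ge i\}$, $V_i^-$ on $\{\is_v = i-1\}$, and $V_i^v$ otherwise. The central tool is Corollary~\ref{cor:distribution}: conditional on $\Pi \in S_v$, the family $\{\Pi(k)\}_{k \in A_v}$ is mutually independent with each $\Pi(k)$ uniform on $V_k^v$, so the events ``position $k$ matches $x_v$'' factorise over $k \in A_v$; paired with a careful treatment of non-active $k \le \max_v$ (using Lemma~\ref{lem:distribution} to recover residual uniformity once the values already assigned to nested sets are removed), this gives $\Pr[\is_v \ge i] = \Pr[\is_v \ge i-1] \cdot p_i$. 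Concavity of $\log$, applied through $\log\log(2n/(pm)) = \log(L - \log p)$ with $L_i := \log(2n/|V_i^v|)$, then yields $\E[\Delta_i^{\mathrm{act}}] \le \Pr[\is_v \ge i-1] \cdot H_2(p_i)/(L_i \ln 2)$ with $H_2$ the binary entropy. Summing, I would recognise $\sum_i \Pr[\is_v \ge i-1] H_2(p_i)$ as the entropy $H(\is_v)$ of the response (via the chain rule for the independent Bernoullis $E_k := \mathbf{1}\{\Pi(k) \in P^v_{k \bmod 2}\}$); a more refined weighted summation that retains the individual $1/L_i$ factors, using that a single query can shrink $|V_i^v|$ by at most a factor of two while advancing the $\log\log$-level by only $1/L_i$, telescopes to $O(1)$ in total.

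The main obstacle will be $\Delta^{\mathrm{pa}}$: by Lemma~\ref{lem:nested}(iii) the pseudo-active set $V_{\max_v+1}^v$ strictly contains every active $V_k^v$ of parity $\max_v$, so $\Pi(\max_v+1)$ is \emph{not} uniform on $V_{\max_v+1}^v$ and must avoid the values taken by those nested $\Pi(k)$'s. The $-\con_v$ correction in $\phi$ is engineered for precisely this situation: Lemma~\ref{lem:distribution} gives that $\Pi(\max_v+1)$ is uniform on a residual set of size $|V_{\max_v+1}^v|-\con_v$, and the activeness gate $i < n/3$ (so $\con_v \le \max_v/2 < n/6$) keeps this denominator a positive constant fraction of $|V_{\max_v+1}^v|$, so the Taylor estimate from the active case carries over with this effective size. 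For $\Delta^{\mathrm{new}}$ the only relevant subcase is $\is_v > \max_v$: if $\is_v = \max_v+1$, the old pseudo-active becomes active (bounded as in the active group, with the $\con$-correction) and a fresh pseudo-active at index $\is_v+1$ opens with size $n$, contributing only $\log\log(2n/(n-\con_{w_{\is_v}})) = O(1/\log n)$; for larger jumps $\is_v > \max_v + 1$, the product factorisation of $\Pr[\is_v = j]$ from Corollary~\ref{cor:distribution} decays geometrically in $\is_v - \max_v$, so the expected contribution is a convergent geometric series summing to $O(1)$. Adding the bounds for the three groups yields $\E[\phi(w_{\is_v}) - \phi(v) \mid \Pi \in S_v] = O(1)$.
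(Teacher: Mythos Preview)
Your decomposition into active, pseudo-active, and new contributions matches the paper's organisation, and your per-index estimate
\[
\E[\Delta_i^{\mathrm{act}}]\;\le\;\Pr[\is_v\ge i-1]\cdot\frac{H_2(p_i)}{L_i\ln 2}
\]
is correct and in fact cleaner than the paper's way of writing the same quantity. The entropy inequality $\sum_{i\in A_v}\Pr[\is_v\ge i-1]\,H_2(p_i)\le H(\is_v)\le\log(n+1)$ is also valid. But this is exactly where the real work begins, and your proposal skips it. The sentence ``a more refined weighted summation that retains the individual $1/L_i$ factors \ldots\ telescopes to $O(1)$'' is not an argument: nothing telescopes here, and the claim that ``a single query can shrink $|V_i^v|$ by at most a factor of two'' is false (on the event $\is_v=i-1$ the set shrinks by the factor $1-p_i$, which can be arbitrarily small). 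Neither the entropy bound nor disjointness of the active sets alone suffices: there can be $\Theta(2^k)$ active indices with $L_i\approx k$, each contributing up to $1/k$, and a naive split over size ranges diverges. The paper closes this gap with a genuine two-level bucketing: first group indices by the running sum $\sum_{j<i}\eps_j\in[2^t-1,2^{t+1})$ so that $\Pr[\is_v\ge i-1]\le e^{-(2^t-1)}$ and $\sum_{i\in J_t}\eps_i\le 2^{t+1}$, then within each $J_t$ bucket by $|V_{a_i}^v|\in[n/2^{k+1},n/2^k]$ and apply concavity of $x\mapsto x\log(1/x)$. This double bucketing is the technical heart of the lemma and cannot be replaced by a one-line telescoping.

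There is a second, smaller gap in your $\Delta^{\mathrm{new}}$ case. The fresh pseudo-active set at index $\is_v+1$ does \emph{not} open with size $n$: by the update rule $V_{\is_v+1}^{w_{\is_v}}=[n]\cap P^v_{\is_v\bmod 2}$, so its size is $|P^v_{\is_v\bmod 2}|$, which can be as small as $\lfloor\is_v/2\rfloor+1$ and hence contribute as much as $\log\log(2n)$ to $\phi(w_{\is_v})$. The paper handles this (its term (\ref{eq:case2-1})) by observing that $\Pr[\is_v=a]$ contains the compensating factor $1-(x_{b_{a+1}}-\lfloor(a+1)/2\rfloor)/(n-a)$, which is small precisely when the new pseudo-active set is small; together with the geometric decay $2^{-\Omega(a-\max_v)}$ this yields $O(1)$. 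Your geometric-decay remark is right for the tail, but you still need this size--probability coupling for each individual $a$.
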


Note that we have $\E[\phi(w_{\is_v})-\phi(v) \mid \Pi \in S_v] = \sum_{a=0}^n \Pr[F(x_v)=a \mid \Pi \in S_v](\phi(w_a)-\phi(v)).$
We consider two main cases: 
$F(x_v) \le \max_v$ and $F(x_v) > \max_v$.
In the first case, the maximum score will not increase in $w_a$ which means $w_a$ will have the
same set of active candidate sets. In the second case,
the pseudo-active candidate set
$V_{\max_v+1}^v$ will turn into an active set $V_{\max_v+1}^{w_a}$ at $w_a$ and $w_a$ will 
have a new pseudo-active set. 
While this second case looks more complicated, it is in fact the
less critical part of the analysis. This is because the probability of suddenly increasing 
the score by a large $\alpha$ is extremely small (we will show that it is roughly $2^{-\Omega(\alpha)}$)
which subsumes any significant potential increase for values of $a > \max_v$.

Let $a_1, \dots, a_{|A_v|}$ be the indices of active candidate sets at $v$ sorted in increasing order.
We also define $a_{|A_v|+1} = \max_v+1$.
For a candidate set $V_{i}^v$, and a Boolean $b \in \left\{ 0,1 \right\}$, let
$V_{i}^v(b) = \left\{ j \in V_{i}^v \mid x_v[j] =b \right\}$.
Clearly, $|V_{i}^v(0)|  +| V_{i}^v(1)|  =| V_{i}^v|$.
For even $a_i$, $1 \le i \le |A_v|$, let 
$$\eps_i = |V_i^v(1)|/|V_i^v|,$$ 
and for odd $i$, let 
$$\eps_i = |V_i^v(0)|/|V_i^v|.$$
Thus $\eps_i$ is the fraction of locations in $V_i^v$ that contain values that does not match $Z_{\Pi(i)}$. Also, we define
$$\eps'_i := \Pr[ a_i \le F(x_v) < a_{i+1}-1 | \Pi \in S_v \wedge F(x_v) \ge a_i].$$
Note that $\eps'_i=0$ if $a_{i+1} = a_i+1$.

With these definitions, it is clear that we have
\begin{eqnarray*}
|V_{a_i}^{w_j}| = (1-\eps_i)|V_{a_i}^v| &\textrm{for}& a_i\leq j.\\
|V_{a_i}^{w_j}| = \eps_i|V_{a_i}^v| &\textrm{for}& a_i=j+1.\\
|V_{a_i}^{w_j}|=|V_{a_i}^v| &\textrm{for}& a_i > j+1.
\end{eqnarray*}
The important fact is that we can also bound other probabilities using the values of the 
$\eps_i$ and $\eps'_i$: We can show that (details follow)
\begin{equation}
\Pr[ F(x_v) = a_{i}-1 | \Pi \in
S_v] \le
\eps_i\Pi_{j=1}^{i-1}(1-\eps_j)(1-\eps'_j) \label{eq:prob}
\end{equation}
and
\begin{eqnarray*}
\Pr[a_{i} \le F(x_v) < a_{i+1}-1 | \Pi \in S_v] \le
\eps'_i(1-\eps_i)\Pi_{j=1}^{i-1}(1-\eps_j)(1-\eps'_j).
\end{eqnarray*}
Thus we have bounds on the changes in the size of the active candidate
sets in terms of the values $\eps_i$. The probability of making the various
changes is also determined from the values $\eps_i$ and $\eps_i'$ and
finally the potential function is defined in terms of the sizes of
these active candidate sets. Thus proving Lemma~\ref{lem:pot} reduces
to proving an inequality showing that any possible choice of the values for the  
$\eps_i$ and $\eps'_i$ provide only little expected increase in
potential.

\paragraph{Proof Sketch.} Since the full calculations are rather lengthy, in the following paragraphs 
we will provide the heart of the analysis by making simplifying assumptions that 
side-step some uninteresting technical difficulties that are needed for the full proof.
We assume that all values $\eps'_i$ are $0$ or in other words, 
$a_i = i$ for all $i \leq \max_v$. Also, we ignore the term in
$\phi(v)$ involving $\con_v$ and consider only the case where the
score returned is no larger than $\max_v$ and $\max_v=n/4$.

Thus the expected increase
in potential provided by the cases where the score does not increase
is bounded by
\begin{eqnarray*}
\sum_{j \leq n/4} \Pr[F(x_v)=j \mid \Pi \in S_v](\phi(w_j)-\phi(v)) \leq
\sum_{j \leq n/4} \eps_j (\phi(w_j)-\phi(v)).
\end{eqnarray*}
Also, we get that when a score of $j$ is returned, we update
\begin{eqnarray*}
|V_i^{w_j}| = (1-\Theta(1/n))|V_i^v| &\textrm{for}& i\leq j.\\
|V_i^{w_j}| = \Theta(|V_i^v|/n) &\textrm{for}& i=j+1.\\
|V_i^{w_j}|=|V_i^v| &\textrm{for}& i > j+1.
\end{eqnarray*}
Examining $\phi(w_j)-\phi(v)$ and the changes to the candidate sets,
we get that there is one candidate set whose size decreases by a
factor $\eps_j$, and there are $j$ sets that change by a factor
$1 - \eps_j$. Here we only consider the potential change caused by the
sets changing by a factor of $\eps_j$. This change is bounded by
\begin{eqnarray*}
\sum_{j \leq n/4} \eps_j \lg \left( \frac{\lg(2n/(\eps_j|V^v_j|)}{\lg(2n/|V^v_j|)}\right) =
\sum_{j \leq n/4} \eps_j \lg \left(1+ \frac{\lg(1/\eps_j)}{\lg(2n/|V^v_j|)}\right) \le 
\sum_{j \leq n/4} \eps_j \lg \frac{\lg(1/\eps_j)}{\lg(2n/|V^v_j|)},
\end{eqnarray*} 
where we used $\log(1 + x) \le x$ for $x > 0$ for the last inequality. The function $\eps_j \rightarrow (1/\eps_j)^{\eps_j}$ is decreasing for $0 < \eps \le 1$. Also, if $\eps > 0$ then $\eps \ge 1/n$ by definition of $\eps$. We can therefore upper bound the sum by setting $\eps_j = 4/n$ for all $j$. To continue these calculations, we use Lemma~\ref{lem:nested} to
conclude that the active candidate sets are disjoint and hence the
sum of their sizes is bounded by $n$. We now divide the sum into
summations over indices where $|V_j^v|$ is in the range $[2^{i} :
  2^{i+1}]$ (there are at most $n/2^i$ such indices):
\begin{eqnarray*}
\sum_{j \leq n/4} \Theta\left(\frac{1}{n}\right) \frac{\lg(n/4)}{\lg(2n/|V^v_j|)}
\leq
\Theta\left(\sum_{i=0}^{\lg n-1}\frac{\lg(n/4)}{2^i\lg(n/2^{i})} \right).\\
\end{eqnarray*}
Now the
sum over the terms where $i>\lg \lg n$ is clearly bounded by a constant
since the $2^i$ in the denominator cancels the $\lg(n/4)$ term and we
get a geometric series of this form. For $i<\lg \lg n$, we have
$\lg(n/4)/\lg(n/2^i) = O(1)$ and we again have a geometric series
summing to $O(1)$.

The full proof is very similar in spirit to the above, just
significantly more involved due to the unknown values $\eps_i$ and
$\eps_i'$. The complete proof is given in the next section.

\subsection{Formal Proof of Lemma~\ref{lem:pot}}
\label{FormalProofLemmaPot}

As we did in the proof sketch we write
\begin{eqnarray*}
\E[\phi(w_{\is_v})-\phi(v) \mid \Pi \in S_v] =
\sum_{a=0}^n \Pr[F(x_v)=a \mid \Pi \in S_v](\phi(w_a)-\phi(v)).
\end{eqnarray*}

As discussed, we divide the above summation into two parts: one for $a \le \max_v$ and another for $a > \max_v$:
\begin{eqnarray}
\notag \E[\phi(w_{\is_v})-\phi(v) \mid \Pi \in S_v] = \\
\sum_{a=0}^{\max_v} \Pr[F(x_v)=a \mid \Pi \in S_v](\phi(w_a)-\phi(v)) + \label{eq:alemax}\\
\sum_{a=1}^{{n/3}-\max_v} \Pr[F(x_v)={\max}_v + a \mid \Pi \in S_v](\phi(w_a)-\phi(v)) \label{eq:agmax}.
\end{eqnarray}

To bound the above two summations, it is clear that we need to handle
$\Pr[F(x_v)=a | \Pi \in S_v]$. In the next section, we will prove lemmas that will
do this.

\paragraph{Bounding Probabilities}
Let $a_1, \dots, a_{|A_v|}$ be the indices of active candidate sets at $v$ sorted in increasing order.
We also define $a_{|A_v|+1} = \max_v+1$.
For a candidate set $V_{i}^v$, and a Boolean $b \in \left\{ 0,1 \right\}$, let
$V_{i}^v(b) = \left\{ j \in V_{i}^v \mid x_v[j] =b \right\}$.
Clearly, $|V_{i}^v(0)|  +| V_{i}^v(1)|  =| V_{i}^v|$.
For even $a_i$, $1 \le i \le |A_v|$, let $\eps_i = |V_i^v(1)|/|V_i^v|$ 
and for odd $i$ let $\eps_i = |V_i^v(0)|/|V_i^v|$.
This definition might seem strange but is inspired by the following observation.
\begin{lemma}\label{lem:cond}
    For $i\le |A_v|$, $\Pr[F(x_v)=a_i-1 | \Pi \in S_v \wedge  F(x_v)> a_i-2] =  \eps_{i}$.
\end{lemma}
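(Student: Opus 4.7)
\medskip
\noindent\textbf{Proof plan for Lemma~\ref{lem:cond}.}
The plan is to reduce the conditional probability to a question about the uniform distribution of $\Pi(a_i)$ over the active candidate set $V_{a_i}^v$, and then read off the answer from the definition of $\varepsilon_i$. The heart of the matter is that, once $V_{a_i}^v$ is active, the only ``new'' randomness that decides whether the score equals $a_i-1$ rather than being at least $a_i$ is exactly the choice of $\Pi(a_i)$ inside $V_{a_i}^v$.

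First I would rewrite the target event in terms of the permutation. Since $F(x_v) > a_i - 2$ means that $x_v[\Pi(j)] = Z_{\Pi(j)}$ for all $j \le a_i - 1$, the event $F(x_v) = a_i - 1$ is, conditional on $F(x_v) > a_i - 2$, exactly the event
\[
x_v[\Pi(a_i)] \neq Z_{\Pi(a_i)}.
\]
Recalling that our hard distribution sets $Z_{\Pi(a_i)} = a_i \pmod 2$, this mismatch event becomes $x_v[\Pi(a_i)] = 1$ if $a_i$ is even, and $x_v[\Pi(a_i)] = 0$ if $a_i$ is odd; by the definition of $V_{a_i}^v(1)$ and $V_{a_i}^v(0)$, it is equivalent to $\Pi(a_i)$ lying in a specific subset of $V_{a_i}^v$ of size $\varepsilon_i |V_{a_i}^v|$.

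Next I would invoke Corollary~\ref{cor:distribution}. Since $V_{a_i}^v$ is active, $\Pi(a_i)$ is, conditional on $\Pi \in S_v$, uniformly distributed in $V_{a_i}^v$ and independent of $\Pi(1), \ldots, \Pi(a_i-1)$. The conditioning event $F(x_v) > a_i - 2$ is a statement purely about $\Pi(1), \ldots, \Pi(a_i-1)$ (it asks that these values fall in the ``matching'' halves of the earlier candidate sets), so by independence, further conditioning on it does not alter the law of $\Pi(a_i)$. Hence, conditional on $\Pi \in S_v$ and $F(x_v) > a_i - 2$, the value $\Pi(a_i)$ is still uniform on $V_{a_i}^v$.

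Combining the two observations yields
\[
\Pr[F(x_v) = a_i - 1 \mid \Pi \in S_v \wedge F(x_v) > a_i - 2] = \frac{|V_{a_i}^v(a_i+1 \bmod 2)|}{|V_{a_i}^v|} = \varepsilon_i,
\]
as claimed. The only subtlety I foresee is the bookkeeping around the fact that in the distribution, $Z$ is not independent of $\Pi$ but is in fact a deterministic function of $\Pi$; handling this cleanly just amounts to substituting $Z_{\Pi(a_i)} = a_i \bmod 2$ at the start, after which the argument is a direct application of uniformity and independence from Corollary~\ref{cor:distribution}.
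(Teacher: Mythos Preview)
Your proposal is correct and follows essentially the same route as the paper's proof: reduce the event $F(x_v)=a_i-1$ (given $F(x_v)>a_i-2$) to the mismatch $x_v[\Pi(a_i)]\not\equiv a_i$, then invoke Corollary~\ref{cor:distribution} and the definition of $\varepsilon_i$. Your version is simply more explicit about why the additional conditioning on $F(x_v)>a_i-2$---an event determined by $\Pi(1),\dots,\Pi(a_i-1)$---does not disturb the uniform law of $\Pi(a_i)$ on $V_{a_i}^v$, which the paper leaves implicit.
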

\begin{proof}
    Note that $F(x_v)=a_i-1$ happens if and only if $F(x_v)
      > a_i - 2$  and
    $x_v[\Pi(a_i)]\not \equiv a_i$.
    Since $V_{a_i}^v$ is an active candidate set, 
    the lemma follows from Corollary~\ref{cor:distribution} and the definition of
    $\eps_{i}$.
\end{proof}

Let $\eps'_i := \Pr[ a_i \le F(x_v) < a_{i+1}-1 | \Pi \in S_v \wedge F(x_v) \ge a_i]$.
Note that $\eps'_i=0$ if $a_{i+1} = a_i+1$.

\begin{lemma}
\label{lem:probLEmax}
    For $i \le |A_v|$ we have 
$|V_{a_i}^{w_j}| = |V_{a_i}^{v}|$ for $j < a_i-1$,
    $|V_{a_i}^{w_j}| = \eps_i |V_{a_i}^{v}|$ for  $j = a_i-1$, and
    $|V_{a_i}^{w_j}| = (1-\eps_i)|V_{a_i}^{v}|$ for $j > a_i-1$. Also,
\begin{align}
    \Pr[ F(x_v) = a_{i}-1 | \Pi \in S_v] &= \eps_i\Pi_{j=1}^{i-1}(1-\eps_j)(1-\eps'_j),  \label{eq:eq1} \\
    \Pr[a_{i} \le F(x_v) < a_{i+1}-1 | \Pi \in S_v] &= 
\eps'_i(1-\eps_i)\Pi_{j=1}^{i-1}(1-\eps_j)(1-\eps'_j), \label{eq:eq2}
\end{align} 
\end{lemma}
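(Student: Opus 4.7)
The plan is in two parts. The three size equalities for $|V_{a_i}^{w_j}|$ are direct instantiations of the update rule for candidate sets displayed just before Lemma~\ref{lem:nested}. Setting the index $i$ of that piecewise definition equal to $a_i$ and comparing with $j$, one obtains $V_{a_i}^{w_j} = V_{a_i}^{v}$ when $a_i > j+1$, $V_{a_i}^{w_j} = V_{a_i}^{v} \cap P_{j \bmod 2}^{v}$ when $a_i = j+1$, and $V_{a_i}^{w_j} = V_{a_i}^{v} \cap P_{a_i \bmod 2}^{v}$ when $a_i \le j$. In the middle case $j \equiv a_i - 1 \not\equiv a_i \pmod{2}$, so the intersection picks out exactly the elements of $V_{a_i}^{v}$ whose bit in $x_v$ disagrees with $z_{\Pi(a_i)}$; by the very definition of $\eps_i$ this subset has cardinality $\eps_i|V_{a_i}^{v}|$. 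The third case is symmetric and yields $(1-\eps_i)|V_{a_i}^{v}|$.

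For the two probability formulas I would telescope along the active indices. Introduce $p_i := \Pr[F(x_v) \ge a_i \mid \Pi \in S_v]$, with the boundary convention $a_0 = 0$ and $\eps'_0 = 0$ (so $p_0 = 1$). Two observations drive the recursion: (a) conditional on $F(x_v) \ge a_{i-1}$, the event $\{F(x_v) \ge a_i - 1\}$ is the complement (within this conditioning) of $\{a_{i-1} \le F(x_v) < a_i - 1\}$, which by definition of $\eps'_{i-1}$ has conditional probability $\eps'_{i-1}$; (b) conditional on $F(x_v) \ge a_i - 1$, Lemma~\ref{lem:cond} gives $\Pr[F(x_v) = a_i - 1] = \eps_i$, hence $\Pr[F(x_v) \ge a_i] = 1 - \eps_i$. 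Chaining (a) and (b) yields the one-step recursion $p_i = p_{i-1}(1 - \eps'_{i-1})(1 - \eps_i)$, and an immediate induction gives a closed product expression for $p_i$. The formula for $\Pr[F(x_v) = a_i - 1 \mid \Pi \in S_v]$ is then $p_{i-1}(1 - \eps'_{i-1})\cdot \eps_i$ by Lemma~\ref{lem:cond}, and after pairing each $(1-\eps_j)$ with $(1-\eps'_j)$ in the product (legitimate because $\eps'_0 = 0$ allows the $j=0$ factor to be absorbed) one obtains exactly (\ref{eq:eq1}). Equation (\ref{eq:eq2}) is $p_i \cdot \eps'_i$ by the very definition of $\eps'_i$, and the same re-pairing rewrites it in the stated form.

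There is no substantial obstacle in either part. The probabilistic content is entirely delegated to Lemma~\ref{lem:cond} (which rests on Corollary~\ref{cor:distribution} and requires $V_{a_i}^v$ to be active) and to the definition of $\eps'_j$. The two minor points that need a line of care are the boundary convention $\eps'_0 = 0$ hidden in the index ranges of the displayed products, and the well-definedness of the conditional probabilities, i.e., that $p_{i-1} > 0$ whenever we condition on $F(x_v) \ge a_{i-1}$; the latter is automatic because the event $\{F(x_v) \ge a_i-1\}$ appears on the path to $v$ via the active-set definition, so it has positive measure under the uniform distribution on $S_v$.
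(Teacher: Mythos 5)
Your proposal is correct and follows essentially the same route as the paper: the size statements are read off from the candidate-set update rule together with the definition of $\eps_i$, and the probability formulas come from chaining Lemma~\ref{lem:cond} with the definition of $\eps'_{i-1}$ into a telescoping product (your recursion $p_i = p_{i-1}(1-\eps'_{i-1})(1-\eps_i)$ is exactly the paper's pair of identities for $\Pr[F(x_v)>a_i-1]$ and $\Pr[F(x_v)>a_i-2]$, just packaged as one step). The boundary convention $\eps'_0=0$ and the positivity caveat you flag are handled equally implicitly in the paper, so nothing is missing.
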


\begin{proof}
    Using Lemma~\ref{lem:cond}, it is verified that
\begin{align*}
\Pr[ F(x_v) > a_{i}-1 | \Pi \in S_v]
&=
\Pr[ F(x_v) > a_{i}-2 \wedge F(x_v) \neq a_{i}-1 | \Pi \in S_v]\\ 
&=
\Pr[F(x_v) \not = a_i -1|F(x_v) > a_i-2 \wedge \Pi \in S_v] \cdot \\
\Pr[F(x_v)> a_i-2| \Pi \in S_v] &= 
(1-\eps_i) \Pr[F(x_v)> a_i-2| \Pi \in S_v].
\end{align*}
Similarly, using the definition of $\eps'_i$ we can see that
\begin{align*}
& \Pr[ F(x_v) > a_{i}-2| \Pi \in S_v] \\
& =
\Pr[F(x_v) \not \in \left\{ a_{i-1}, \dots, a_i-2 \right\} \wedge
F(x_v) > a_{i-1}-1 | \Pi \in S_v] \\
&=
\Pr[F(x_v) \not \in \left\{ a_{i-1}, \dots, a_i-2 \right\} | F(x_v) > a_{i-1}-1 \wedge 
\Pi \in S_v] \Pr[F(x_v) > a_{i-1}-1 | \Pi \in S_v] \\
&=
(1-\eps'_{i-1}) \Pr[F(x_v) > a_{i-1}-1 | \Pi \in S_v].
\end{align*}
Using these equalities, we find that
\begin{equation*}
    \Pr[ F(x_v) > a_{i}-1 | \Pi \in S_v] = (1-\eps_i)\Pi_{j=1}^{i-1}(1-\eps_j)(1-\eps'_j)
    \label{eq:gi}
\end{equation*}
and
\begin{equation*}
    \Pr[ F(x_v) > a_{i}-2 | \Pi \in S_v] = \Pi_{j=1}^{i-1}(1-\eps_j)(1-\eps'_j).
    \label{eq:gii}
\end{equation*}
Equality (\ref{eq:eq1}) follows from the above two equalities. 
To obtain Equality (\ref{eq:eq2}), we only need to consider the definition of $\eps'_i$ as an additional ingredient.
The rest of the lemma follows directly from the definition of $\eps_i$ and the candidate sets.
\end{proof}

\begin{lemma}
\label{lem:aeqmax}
    Let $b \in \left\{ 0,1 \right\}$ be such that $b \equiv \max_v$ and let $k := |V_{\max_v+1}^v(b)|$.
    Then,
\begin{eqnarray*}
    \Pr[F(x_v) = {\max}_v | \Pi \in S_v] =
{ k-\con_v \over |V_{\max_v+1}^v|-\con_v } \prod_{i=1}^{|A_v|}(1-\eps_i)(1-\eps'_i).
\end{eqnarray*}
\end{lemma}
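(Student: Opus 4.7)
The plan is to decompose
\[
\Pr[F(x_v)=\max_v \mid \Pi \in S_v] = \Pr[F(x_v) \ge \max_v \mid \Pi \in S_v] \cdot \Pr[x_v[\Pi(\max_v+1)]=b \mid F(x_v)\ge\max_v,\ \Pi\in S_v],
\]
using the fact that the event $F(x_v)=\max_v$ is the conjunction of $F(x_v)\ge \max_v$ and $x_v[\Pi(\max_v+1)] \ne z_{\Pi(\max_v+1)}$. Since $z_{\Pi(\max_v+1)} \equiv \max_v+1$ while $b \equiv \max_v$, the second event reduces to $\Pi(\max_v+1) \in V_{\max_v+1}^v(b)$.

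For the first factor, I would reuse the intermediate identity derived inside the proof of Lemma~\ref{lem:probLEmax}, namely $\Pr[F(x_v)>a_i-2 \mid \Pi \in S_v]=\prod_{j=1}^{i-1}(1-\eps_j)(1-\eps'_j)$, applied with $i=|A_v|+1$ so that $a_i = \max_v+1$. This directly yields the advertised product $\prod_{j=1}^{|A_v|}(1-\eps_j)(1-\eps'_j)$.

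For the second factor, I would first observe that the event $F(x_v)\ge\max_v$ is determined by the prefix $(\Pi(1),\dots,\Pi(\max_v))$, so it suffices to compute the conditional distribution of $\Pi(\max_v+1)$ given this prefix and $\Pi\in S_v$ alone. Applying Lemma~\ref{lem:distribution} with $i=\max_v+1$---whose properly contained candidate sets are, by Lemma~\ref{lem:nested}(iii), precisely the $V_j^v$ with $j\le \max_v$ and $j\equiv \max_v$, numbering $\con_v$ in total---shows that this conditional distribution is uniform on $V_{\max_v+1}^v\setminus\{\Pi(j):j\le\max_v,\ j\equiv\max_v\}$, a set of size $|V_{\max_v+1}^v|-\con_v$. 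I would then count favorable outcomes: under $F(x_v)\ge\max_v$ each such $\Pi(j)$ satisfies $x_v[\Pi(j)] = j \bmod 2 = b$, so these excluded values already lie in $V_{\max_v+1}^v(b)$, leaving exactly $|V_{\max_v+1}^v(b)|-\con_v = k-\con_v$ favorable positions. Since the ratio is the same for every compatible prefix, averaging over prefixes yields the claimed conditional probability, and multiplying the two factors gives the lemma.

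The main obstacle will be the second factor: one must invoke Lemma~\ref{lem:distribution} to extract simultaneously the correct uniform support of size $|V_{\max_v+1}^v|-\con_v$ and the correct favorable count $k-\con_v$, and argue cleanly that the further conditioning on $F(x_v)\ge\max_v$ does not perturb the conditional law of $\Pi(\max_v+1)$ given the prefix---this relies on the separation between the "score at least $\max_v$" information (a property of the prefix) and the position information about $\Pi(\max_v+1)$.
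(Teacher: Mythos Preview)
Your proposal is correct and follows essentially the same approach as the paper: decompose into $\Pr[F(x_v)>\max_v-1\mid \Pi\in S_v]$ (obtained from the intermediate identity in Lemma~\ref{lem:probLEmax}) times the conditional probability that $x_v[\Pi(\max_v+1)]=b$, and evaluate the latter via Lemma~\ref{lem:nested}(iii) and Lemma~\ref{lem:distribution}. If anything, you are more explicit than the paper in arguing that the excluded values $\Pi(j)$ with $j\equiv\max_v$ all land in $V_{\max_v+1}^v(b)$ under the conditioning, and that the event $F(x_v)\ge\max_v$ is prefix-measurable so the conditional law of $\Pi(\max_v+1)$ given the prefix is unaffected.
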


\begin{proof}
Conditioned on $F(x_v) > \max_v-1$, $F(x_v)$ will be equal to $\max_v$ if 
    $x_v[\Pi(\max_v+1)] = b$.
    By definition, the number of positions in $V_{\max_v+1}^v$ that satisfy this is $k$.
    However, $V_{\max_v+1}^v$ contains $\con_v $ candidate sets but since
    $V_{\max_v+1}^v$ can only contain a candidate set $V_{i}^v$ if $i \equiv \max_v$ (by Lemma~\ref{lem:nested}),
    it follows from Lemma~\ref{lem:distribution} that 
    $\Pr[F(x_v) = \max_v | \Pi \in S_v \wedge F(x_v) > \max_v-1] = (k-\con_v )/(|V_{\max_v+1}^v|-\con_v )$.
    The lemma then follows from the previous lemma.
\end{proof}

\begin{lemma}
\label{lem:aggmax}
    Let $b \in \left\{ 0,1 \right\}$ be such that $b \equiv \max_v$ and let $k := |V_{\max_v+1}^v(b)|$.
    Then,
\begin{eqnarray*}
    \Pr[F(x_v) > {\max}_v | \Pi \in S_v] \le {|V_{\max_v+1}^v|- k \over |V_{\max_v+1}^v|-\con_v }\prod_{i=1}^{|A_v|}(1-\eps_i)(1-\eps'_i).
\end{eqnarray*}
\end{lemma}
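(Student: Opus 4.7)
The plan is to factor $\Pr[F(x_v) > \max_v \mid \Pi \in S_v]$ as the product of $\Pr[F(x_v) \geq \max_v \mid \Pi \in S_v]$ with the conditional probability $\Pr[F(x_v) > \max_v \mid F(x_v) \geq \max_v,\,\Pi \in S_v]$, and to bound each factor separately using the machinery already developed for Lemma~\ref{lem:aeqmax}. An entirely equivalent route would be to subtract $\Pr[F(x_v) = \max_v \mid \Pi \in S_v]$ from $\Pr[F(x_v) \geq \max_v \mid \Pi \in S_v]$; both give the same conclusion.

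For the first factor, I would extend the telescoping recursion from the proof of Lemma~\ref{lem:probLEmax}. Iterating the step
\[
\Pr[F(x_v) \geq a_i \mid \Pi \in S_v] \;=\; (1-\eps_i)(1-\eps'_{i-1})\,\Pr[F(x_v) \geq a_{i-1} \mid \Pi \in S_v]
\]
down to $i = 1$ and then multiplying by a final factor $(1-\eps'_{|A_v|})$ to cover the gap between $a_{|A_v|}$ and $\max_v$ (using the definition of $\eps'_{|A_v|}$) gives
\[
\Pr[F(x_v) \geq \max_v \mid \Pi \in S_v] \;=\; \prod_{i=1}^{|A_v|}(1-\eps_i)(1-\eps'_i).
\]

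For the second factor, the key observation is that Lemma~\ref{lem:distribution} does not require its candidate set to be active, so it applies directly to the pseudo-active set $V_{\max_v+1}^v$. Together with Lemma~\ref{lem:nested}(iii), it implies that, conditional on $\Pi \in S_v$ and $F(x_v) \geq \max_v$, the value $\Pi(\max_v+1)$ is uniformly distributed on $V_{\max_v+1}^v$ with the $\con_v$ positions occupied by the previously-placed $\Pi(i)$ for $i \leq \max_v$, $i \equiv \max_v$ removed. Every one of those $\con_v$ excluded positions lies in $V_{\max_v+1}^v(b)$, because $F(x_v) \geq \max_v$ forces $x_v[\Pi(i)] = i \bmod 2 \equiv b$ at each such index. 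Since $z_{\Pi(\max_v+1)} = (\max_v+1) \bmod 2 \not\equiv b$, the event $F(x_v) > \max_v$ coincides with $x_v[\Pi(\max_v+1)] \neq b$, and a direct count gives the conditional probability $(|V_{\max_v+1}^v| - k)/(|V_{\max_v+1}^v| - \con_v)$.

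Multiplying the two factors yields the claimed bound (in fact with equality, but the inequality form is all that is needed). The main obstacle is purely combinatorial bookkeeping: one has to track which positions in $V_{\max_v+1}^v$ are "frozen" by earlier placements and which remain free, and in particular argue that all $\con_v$ frozen positions sit on the $b$-side of $x_v$. Once it is noted that Lemma~\ref{lem:distribution} applies verbatim to the pseudo-active $V_{\max_v+1}^v$ and that Lemma~\ref{lem:nested}(iii) pins the contained candidate sets to exactly the indices $i \equiv \max_v$, the remaining counting is routine.
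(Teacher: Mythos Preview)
Your proposal is correct and follows essentially the same route as the paper. The paper's proof is terser: it simply takes the conditional probability $\Pr[F(x_v) = \max_v \mid \Pi \in S_v \wedge F(x_v) > \max_v-1] = (k-\con_v)/(|V_{\max_v+1}^v|-\con_v)$ as already established inside the proof of Lemma~\ref{lem:aeqmax}, complements it to obtain $(|V_{\max_v+1}^v|-k)/(|V_{\max_v+1}^v|-\con_v)$, and multiplies by $\Pr[F(x_v) > \max_v-1 \mid \Pi \in S_v] = \prod_{i=1}^{|A_v|}(1-\eps_i)(1-\eps'_i)$ from Lemma~\ref{lem:probLEmax}; you unpack that conditional probability again from Lemma~\ref{lem:distribution} and Lemma~\ref{lem:nested}(iii), but the substance is identical (and your observation that the bound is actually an equality is correct).
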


\begin{proof}
    From the previous lemma we have that 
    $\Pr[F(x_v) = \max_v | \Pi \in S_v \wedge F(x_v) > \max_v-1] = (k-\con_v )/(|V_{\max_v+1}^v|-\con_v )$.
    Thus,
\begin{eqnarray*}
\Pr[F(x_v) > {\max}_v | \Pi \in S_v \wedge F(x_v) > {\max}_v-1] =
{|V_{\max_v+1}^v|- k  \over |V_{\max_v+1}^v|-\con_v }.
\end{eqnarray*}
The claim now follows from Lemma~\ref{lem:probLEmax}. 
\end{proof}

Remember that $P_0^v$ (resp. $P_1^v$) are the set of positions in $x_v$ that contain
0 (resp. 1).

\begin{lemma}\label{lem:agmax}
    Let $x_0 = |P_0^v|$ and $x_1 = |P_1^v|$.
    Let $b_i$ be a Boolean such that $b_i \equiv i$.
    For $a \geq \max_v+2$
    \begin{eqnarray*}
        \Pr[ F(x_v) = a \mid \Pi \in S_v ] \le
  \left(   \prod_{i=\max_v + 2}^{a}{x_{b_i} - \lfloor i/2 \rfloor \over n-i+1}\right)\left(1-{x_{b_{a+1}} - \lfloor (a+1)/2 \rfloor \over n-a} \right).
    \end{eqnarray*}
\end{lemma}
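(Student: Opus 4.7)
The plan is to express the event $\{F(x_v)=a\}$ as a conjunction of events about the individual images $\Pi(i)$, apply the chain rule, and control each conditional factor by establishing a conditional-uniform property for $\Pi$ given $\Pi\in S_v$.

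First, because on our hard distribution $z_{\Pi(i)}=b_i$, the event $\{F(x_v)=a\}$ is precisely $\bigcap_{i=1}^a\{\Pi(i)\in P^v_{b_i}\}\cap\{\Pi(a+1)\in P^v_{1-b_{a+1}}\}$. Writing $E_i=\{\Pi(i)\in P^v_{b_i}\}$ for $i\le a$ and $E'_{a+1}=\{\Pi(a+1)\in P^v_{1-b_{a+1}}\}$, the chain rule gives
\[
\Pr[F(x_v)=a\mid\Pi\in S_v]=\Pr[E_1\cdots E_{\max_v+1}\mid\Pi\in S_v]\prod_{i=\max_v+2}^{a}\Pr[E_i\mid E_1\cdots E_{i-1},\Pi\in S_v]\cdot\Pr[E'_{a+1}\mid E_1\cdots E_a,\Pi\in S_v].
\]
I bound the leading factor crudely by $1$; this is the step that suppresses the (complicated) dependence of $\Pi(1),\dots,\Pi(\max_v+1)$ on the path from the root to $v$ and the corresponding active candidate sets.

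Second, for $i>\max_v+1$ I establish a conditional-uniform property: conditioned on $\Pi\in S_v$ and on any consistent fixing $\Pi(1)=\pi_1,\dots,\Pi(i-1)=\pi_{i-1}$, the value $\Pi(i)$ is uniformly distributed on $[n]\setminus\{\pi_1,\dots,\pi_{i-1}\}$. This uses the fact that $V^v_j=[n]$ for every $j>\max_v+1$, so the only remaining constraint on the later positions is distinctness; the formal argument is a transposition-swap identical in spirit to Lemma~\ref{lem:distribution} and Corollary~\ref{cor:uniform}, now applied to positions beyond $\max_v+1$ where the candidate sets are trivial.

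Third, I count, for each $\max_v+2\le i\le a$, how many of the previously placed values are necessarily contained in $P^v_{b_i}$: every $j<i$ with $b_j=b_i$ forces $\Pi(j)\in P^v_{b_j}=P^v_{b_i}$ through the conditioning on $E_j$, and the candidate-set structure inherited from $\Pi\in S_v$ (in particular, the restriction on $\Pi(\max_v+1)$ coming from the ancestor $u$ at which $F(x_u)=\max_v$) contributes the remaining forced positions. Combining this count with the uniform-distribution property of the second step yields
\[
\Pr[E_i\mid E_1\cdots E_{i-1},\Pi\in S_v]\le \frac{x_{b_i}-\lfloor i/2\rfloor}{n-i+1},
\]
and the same counting argument applied to the complementary event $E'_{a+1}$ produces the final factor $1-(x_{b_{a+1}}-\lfloor(a+1)/2\rfloor)/(n-a)$. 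Multiplying the bounds gives the claimed inequality.

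The main technical obstacle is precisely the combinatorial accounting in the third step: the naive count of indices $j<i$ with $b_j\equiv b_i$ gives only $\lfloor(i-1)/2\rfloor$ forced positions, so reaching the stated $\lfloor i/2\rfloor$ requires a parity-careful case analysis that also exploits the constraints on $\Pi(\max_v+1)$ and on the active candidate sets. Once those counts are pinned down, the uniform-distribution property turns each conditional probability into a simple ratio and the inequality becomes routine.
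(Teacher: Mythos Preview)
Your approach is essentially identical to the paper's: both decompose via the chain rule, bound the contribution from positions $1,\dots,\max_v+1$ trivially by~$1$, and then invoke the conditional-uniform property (Lemma~\ref{lem:distribution}/Corollary~\ref{cor:uniform}) at each position $i\ge\max_v+2$, using $V_i^v=[n]$ there, to turn each conditional factor into the claimed ratio. The paper handles the count you flag as the ``main technical obstacle'' far more casually than you anticipate---it simply observes that $\sigma_j\in P^v_{b_j}$ for every $j<i$ (since $F(x_v)\ge i-1$) and asserts that $\lfloor i/2\rfloor$ of these lie in $P^v_{b_i}$, with no further parity case split and no appeal to constraints on $\Pi(\max_v+1)$ or the active sets.
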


\begin{proof}
    Notice that we have $V_{i}^v = [n]$ for $i\ge\max_v+2$ which means $i-1$ is the number of candidates sets
    $V_{j}^{v}$ contained in $V_{i}^v$ and among those $\lfloor i/2 \rfloor$ are such that $i\equiv j$.
    Consider a particular prefix $\sigma=(\sigma_1, \cdots, \sigma_{i-1})$ such that there exists a permutation 
    $\pi \in S_v$ that has $\sigma$ as a prefix.
    This implies that $\sigma_j \in P_{b_j}^v$.
    Thus, it follows that there are $x_{b_i} - \lfloor i/2 \rfloor$ elements $s \in P_{b_i}^v$ such that
    the sequences $(\sigma_1, \cdots, \sigma_{i-1}, s)$ can be the prefix
    of a permutation in $S_v$.
    Thus by Corollary~\ref{cor:uniform}, and for $i \ge \max_v+2$, 
    \[ \Pr[ F(x_v) = i-1  | \Pi \in S_v \wedge F(x_v) \ge i-1 ] = 1-{x_{b_i} - \lfloor i/2 \rfloor \over n-i+1} \]
    and
    \[ \Pr[ F(x_v) \ge i  | \Pi \in S_v \wedge F(x_v) \ge i-1 ] = {x_{b_i} - \lfloor i/2 \rfloor \over n-i+1}. \]
\end{proof}

\begin{corollary}\label{cor:agmax}
	For $\max_v+1 \le a \le n/3$ we have 
\begin{eqnarray*}
\Pr[ F(x_v) = a \mid \Pi \in S_v ] =
2^{-\Omega(a-\max_v)} \cdot\left(1-{x_{b_{a+1}} - \lfloor (a+1)/2 \rfloor \over n-a} \right).
\end{eqnarray*}
\end{corollary}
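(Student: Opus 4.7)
The plan is to derive the corollary directly from Lemma~\ref{lem:agmax} by showing that the product
\[
\prod_{i=\max_v+2}^{a} \frac{x_{b_i} - \lfloor i/2 \rfloor}{n-i+1}
\]
is at most $2^{-\Omega(a-\max_v)}$. The second factor is already in the form asserted by the corollary, so bounding the product suffices.

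The first step is to control the denominators uniformly: since $i \le a \le n/3$, we have $n-i+1 \ge 2n/3$, so every denominator is $\Omega(n)$. For the numerators I would use the crude bound $x_{b_i} - \lfloor i/2 \rfloor \le x_{b_i}$; the subtracted $\lfloor i/2 \rfloor$ only helps and can be discarded for the asymptotic bound.

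The key step is to pair consecutive factors. Because $b_i \equiv i \pmod 2$, the parities alternate along the product, so a pair of consecutive terms has the form
\[
\frac{x_0}{n-i+1}\cdot\frac{x_1}{n-i} \le \frac{x_0 x_1}{(2n/3)^2}.
\]
Since $x_0+x_1=n$, AM--GM gives $x_0 x_1 \le n^2/4$, so each pair is at most $\frac{n^2/4}{4n^2/9}=\frac{9}{16}<1$. With $\lfloor (a-\max_v-1)/2 \rfloor$ such pairs (and at worst one leftover factor bounded by $x_{b_i}/(n-i+1) \le n/(2n/3) = 3/2$), the whole product is at most $(3/2)\cdot(9/16)^{(a-\max_v-1)/2} = 2^{-\Omega(a-\max_v)}$, which finishes the argument.

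I do not expect any real obstacle; the only points that need care are the boundary case when $a-\max_v$ is odd (handled by the single leftover factor above) and verifying that the constant $9/16$ is indeed strictly below $1$ so that the exponential decay bound $2^{-\Omega(a-\max_v)}$ is genuine. The assumption $a \le n/3$ is exactly what makes the denominators $\Omega(n)$, and hence makes the AM--GM pairing give a constant-factor contraction per step.
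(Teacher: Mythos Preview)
Your argument is correct and follows the same pairing-plus-AM--GM idea as the paper. The paper's proof is marginally slicker: instead of discarding the $-\lfloor i/2\rfloor$ terms and invoking the hypothesis $a\le n/3$ to lower-bound the denominators by $2n/3$, it keeps the subtracted terms and observes that for consecutive indices $i,i+1$ one has $x_{b_i}+x_{b_{i+1}}=n$, so
\[
\frac{x_{b_i}-\lfloor i/2\rfloor}{n-i+1}\cdot\frac{x_{b_{i+1}}-\lfloor i/2\rfloor}{n-i+1}
\le \frac{\bigl((n-2\lfloor i/2\rfloor)/2\bigr)^2}{(n-i+1)^2}\le\frac14,
\]
which gives the exponential decay without ever appealing to $a\le n/3$. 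Your route trades this for a slightly looser constant ($9/16$ per pair plus a leftover $3/2$), but since everything is absorbed into the $2^{-\Omega(a-\max_v)}$ notation the conclusion is identical.
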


\begin{proof}
    Since $x_{b_i} + x_{b_{i+1}} = n$, it follows that
\begin{eqnarray*}
\left( {x_{b_i} - \lfloor i/2 \rfloor \over n-i+1} \right) \left( {x_{b_{i+1}} - \lfloor i/2 \rfloor \over n-i+2} \right) \le
\left( {x_{b_i} - \lfloor i/2 \rfloor \over n-i+1} \right) \left( {x_{b_{i+1}} - \lfloor i/2 \rfloor \over n-i+1} \right) \le {1 \over 2}.
\end{eqnarray*}
\end{proof}

Now we analyze the potential function. 
Remember that we have split the change in the potential function into two
parts: one is captured by Equation~\eqref{eq:alemax} and the other
by Equation~\eqref{eq:agmax}.
We claim each is bounded by a constant. 

\begin{lemma}\label{lem:alemax}
  The value of Equation~\eqref{eq:alemax} is bounded by $O(1)$.
\end{lemma}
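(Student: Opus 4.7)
The plan is to refine the sketch given after Lemma~\ref{lem:pot}. Since $a \le \max_v$, the maximum score at the child $w_a$ coincides with $\max_v$, so the set $A_{w_a}$ of active indices agrees with $A_v$ and $\con_{w_a}=\con_v$. The potential change therefore decomposes into a sum of per-set increments $\log\log(2n/|V_{a_i}^{w_a}|)-\log\log(2n/|V_{a_i}^v|)$ for each active index $a_i$ plus an analogous increment from the pseudo-active set $V_{\max_v+1}^v$. Writing $L_i:=|V_{a_i}^v|$ and using the update rules from the beginning of Section~\ref{sec:lower}, $V_{a_i}^v$ shrinks by factor $\eps_i$ on the hit event $\{F(x_v)=a_i-1\}$, by factor $1-\eps_i$ on every miss event $\{F(x_v)\ge a_i\}$, and is otherwise unchanged, while the pseudo-active set is affected only when $a=\max_v$. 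I split \eqref{eq:alemax} into a sum of miss contributions, a sum of hit contributions, and a pseudo-active contribution, and bound each by $O(1)$.

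For the miss sum, combining $\log(1+x) \le x$ with $\log(2n/L_j)\ge 1$ bounds the per-set increment by $\log(1/(1-\eps_j))$. Substituting the miss probability $(1-\eps_j)Q_j$ from Lemma~\ref{lem:probLEmax}, where $Q_j:=\prod_{k<j}(1-\eps_k)(1-\eps'_k)$, and applying the elementary inequality $(1-\eps)\log(1/(1-\eps)) \le \eps$ on $[0,1]$ reduces the per-index contribution to $Q_j\eps_j$. The identity $Q_j(1-\eps_j)(1-\eps'_j) = Q_{j+1}$ yields $Q_j\eps_j \le Q_j - Q_{j+1}$, so the sum telescopes to at most $Q_1 - Q_{|A_v|+1} \le 1$. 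The pseudo-active contribution, occurring only when $a=\max_v$, is handled along the same lines: Lemma~\ref{lem:aeqmax} supplies the probability, and the entropy-style bound $y\log(1/y) \le 1$ on $(0,1]$ applied to $y=(k-\con_v)/(|V_{\max_v+1}^v|-\con_v)$ closes the argument; the standing constraint $\max_v<n/3$ keeps $|V_{\max_v+1}^v|-\con_v>0$ so the logarithms stay well defined.

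The hit sum $\sum_i \eps_i Q_i \log(1 + \log(1/\eps_i)/\log(2n/L_i))$ is the technical heart of the proof. I split the indices according to whether $X_i := \log(1/\eps_i)/\log(2n/L_i)$ is at most or greater than $1$. When $X_i \le 1$, the bound $\log(1+X_i) \le X_i \le 1$ combined with the telescoping estimate $\sum_i \eps_i Q_i \le 1$ from the miss analysis yields an $O(1)$ contribution. When $X_i > 1$, the constraint $\eps_i L_i \ge 1$ forces $L_i \ge \sqrt{2n}$, so by the disjointness of active candidate sets (Corollary~\ref{lem:disjoint}) at most $O(\sqrt{n})$ indices fall in this regime; here I use $\log(1+X_i) \le 1 + \log X_i \le 1 + \log\log(1/\eps_i)$ and the dyadic partition of the $L_i$ into ranges $[2^h, 2^{h+1})$, precisely as outlined in the proof sketch after Lemma~\ref{lem:pot}. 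Splitting into the subregimes $h \le \log\log n$ and $h > \log\log n$ collapses the remaining sum into two geometric series, each summing to $O(1)$.

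The main obstacle is coordinating the two distinct telescopings---the one over the probabilities $\eps_i Q_i$ and the one over the dyadic size classes of the $L_i$---inside the hit sum; the miss and pseudo-active bounds are then routine consequences of the elementary inequalities already derived. A secondary technical nuisance is the $\con_v$ correction in the pseudo-active term, whose well-definedness is guaranteed throughout by the restriction $\max_v<n/3$ built into the definition of active candidate sets.
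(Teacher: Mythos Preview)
Your decomposition into per-set miss, hit, and pseudo-active increments is legitimate and your miss and pseudo-active bounds are correct (and in fact cleaner than the paper's treatment of the analogous pieces). The gap is in the hit sum when $X_i>1$.

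After replacing $\log(1+X_i)$ by $1+\log\log(1/\eps_i)$ you have thrown away the $L_i$-dependence, so a subsequent dyadic partition on $L_i$ cannot produce a geometric series. Concretely, take $\sqrt n/2$ active indices, each with $L_i=2\sqrt n$ and $\eps_i=1/(2\sqrt n)$ (so $\eps_iL_i=1$ and $X_i=1+2/\log n>1$), with all $\eps'_j=0$. Then $Q_i\ge e^{-1/4}$ for every $i\le\sqrt n/2$, and your bound gives
\[
\sum_{i}\eps_iQ_i\bigl(1+\log\log(1/\eps_i)\bigr)\;\ge\; e^{-1/4}\cdot\frac{\sqrt n}{2}\cdot\frac{1}{2\sqrt n}\cdot\bigl(1+\log\log(2\sqrt n)\bigr)\;=\;\Theta(\log\log n),
\]
even though the true hit sum here is $O(1)$ (because $\log(1+X_i)\approx 1$). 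All indices sit in a single dyadic bucket $h\approx(\log n)/2$, so there is no geometric decay across $h$; and since $L_i>\sqrt{2n}$ forces $h\ge(\log n)/2$, your proposed split $h\le\log\log n$ versus $h>\log\log n$ has an empty first part.

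The paper avoids this by never passing to $1+\log\log(1/\eps_i)$. It keeps the denominator $\log(2n/L_i)$, runs a \emph{double} dyadic partition---first on the running sum $\sum_{j<i}\eps_j$ (to exploit $\prod_{j<i}(1-\eps_j)\le e^{-\sum\eps_j}$) and then on $L_i$---and inside each bucket applies the concavity inequality $\sum_i y_i\log(1/y_i)\le Y\log(k/Y)$ to aggregate the terms $\eps_i\log(1/\eps_i)$. The resulting $\log|J_{t,k}|\le k{+}1$ cancels against the retained denominator $k\approx\log(2n/L_i)$, which is exactly the cancellation your bound discards. So the missing ingredients are: keep $\log(2n/L_i)$ in the denominator, partition on $\sum_{j<i}\eps_j$ as well as on $L_i$, and use concavity rather than a per-index bound.
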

\begin{proof}

We have,
\begin{equation}
\phi(w_a) - \phi(v) =
	  \log {\log{2n \over |V_{\max_{w_a}+1}^{w_a}|-\con_{w_a}} \over \log{2n \over |V_{\max_v+1}^v|-\con_v }} + \sum_{j\in A_v} \log{\log{2n \over |V_{j}^{w_a}|} \over  \log{2n \over |V_{j}^v|}}.
    \label{eq:ipot}
\end{equation}
When $a \le \max_v$ we have, $\max_v = \max_{w_a}$ and
$\con_v = \con_{w_a}$. 
For $a < \max_v$, we also have $V^{w_a}_{\max_v + 1} = V^{v}_{\max_v + 1}$.
It is clear from (\ref{eq:ipot}) that for $a_i \le a < a_{i+1}-1$, all the values
of $\phi(w_a) -\phi(v)$  will be equal. Thus, (\ref{eq:alemax}) is equal to

\begin{eqnarray}
 &&  \sum_{i=1}^{|A_v|} \Pr[F(x_v) = a_i-1 | \Pi \in S_v] (\phi(w_{a_i-1})-\phi(v)) + \label{eq:case1-1} \\
    & & \sum_{i=1}^{|A_v|} \Pr[ a_i \le F(x_v) < a_{i+1}-1 | \Pi \in S_v] (\phi(w_{a_i})-\phi(v)) +\label{eq:case1-2} \\
    & & \Pr[F(x_v)={\max}_v  | \Pi \in S_v] (\phi(w_{\max_v}) - \phi(v))\label{eq:case1-3}.
\end{eqnarray}

We now analyze each of these terms separately.

\paragraph{Analyzing (\ref{eq:case1-1})}
We write (\ref{eq:case1-1}) using (\ref{eq:ipot}) and Lemma~\ref{lem:probLEmax}(\ref{eq:eq1}). 
Using inequalities, $1-x \le e^{-x}$, for $0 \le x \le 1$,
$\log(1+x) \le x$ for $x\ge 0$, and $\sum_{1 \le i \le k} y_i \log 1/y_i
  \le Y \log (k/Y)$ for $y_i \ge 0$ and $Y = \sum_{1 \le i \le k} y_i$, we obtain that (\ref{eq:case1-1}) equals 
\begin{align}
    \notag  
    & \mathlarger{\sum}_{i=1}^{|A_v|}   \eps_i\prod_{j=1}^{i-1}(1-\eps_j)(1-\eps'_j)\left( \sum_{j=1}^{i} \log{\log{2n \over |V_{a_j}^{w_{a_i-1}}|} \over  \log{2n \over |V_{a_j}^v|}}\right) 
    \\  =     
    \notag &\mathlarger{\sum}_{i=1}^{|A_v|}   \eps_i\prod_{j=1}^{i-1}(1-\eps_j)(1-\eps'_j)\left(\log{\log{2n \over |V_{a_i}^{w_{a_i-1}}|} \over  \log{2n \over |V_{a_i}^v|}}  +\sum_{j=1}^{i-1} \log{\log{2n \over |V_{a_j}^{w_{a_i-1}}|} \over  \log{2n \over |V_{a_j}^v|}}\right) 
    \\  = 
    \notag &\mathlarger{\sum}_{i=1}^{|A_v|}   \eps_i\prod_{j=1}^{i-1}(1-\eps_j)(1-\eps'_j)\left(\log{\log{2n \over |V_{a_i}^{v}|}+ \log{1\over \eps_i}\over  \log{2n \over |V_{a_i}^v|}}  +\sum_{j=1}^{i-1} \log{\log{2n \over |V_{a_j}^{v}|} + \log{1\over 1-\eps_j}\over  \log{2n \over |V_{a_j}^v|}}\right)  
    \\  = 
    \notag &\mathlarger{\sum}_{i=1}^{|A_v|}   \eps_i\prod_{j=1}^{i-1}(1-\eps_j)(1-\eps'_j)\left(\log\left(  1+{\log{1\over \eps_i}\over  \log{2n \over |V_{a_i}^v|}} \right)  +\sum_{j=1}^{i-1} \log\left(  1+{\log{1\over 1-\eps_j}\over  \log{2n \over |V_{a_j}^v|}}\right) \right)   
    \\  \le 
    \notag &\mathlarger{\sum}_{i=1}^{|A_v|}   \eps_i\prod_{j=1}^{i-1}(1-\eps_j)\left(\log\left(  1+{\log{1\over \eps_i}\over  \log{2n \over |V_{a_i}^v|}} \right)  +\sum_{j=1}^{i-1} \log\left(  1+{\log{1\over 1-\eps_j}}\right) \right)   
    \\ \le 
    \notag &\mathlarger{\sum}_{i=1}^{|A_v|}   \eps_i\prod_{j=1}^{i-1}(1-\eps_j)\left(\log\left(  1+{\log{1\over \eps_i}\over  \log{2n \over |V_{a_i}^v|}} \right)  +\sum_{j=1}^{i-1} {\log{1\over 1-\eps_j}} \right)    
    \\  = 
    &\mathlarger{\sum}_{i=1}^{|A_v|}   \eps_i\log\left(  1+{\log{1\over \eps_i}\over  \log{2n \over |V_{a_i}^v|}} \right)\prod_{j=1}^{i-1}(1-\eps_j)   + \label{eq:case1-1-p1} \\
 & \quad     \mathlarger{\sum}_{i=1}^{|A_v|}   \eps_i\prod_{j=1}^{i-1}(1-\eps_j)\left( \sum_{j=1}^{i-1} {\log{1\over 1-\eps_j}} \right). \label{eq:case1-1-p2}
\end{align}
To bound (\ref{eq:case1-1-p1}), we use the fact that any two active candidate sets
are disjoint. We break the summation into
smaller chunks. Observe that 
$\prod_{j=1}^{i-1}(1-\eps_j) \le  e^{-\sum_{j=1}^{i-1}\eps_j}$.
Thus, let $J_t$, $t \ge 0$, be the set of indices such that for each $i\in J_t$ we have
$2^t - 1 \le \sum_{j=1}^{i-1}\eps_j  < 2^{t+1}$.
Now define $J_{t,k} = \left\{i \in J_t \mid  n/2^{k+1} \le |V_{a_{i}}^v| \le n/2^k   \right\}$, for $ 0 \le k \le \log n$
and let $s_{t,k} = \sum_{i \in J_{t,k}} \eps_i$.
Observe that by the disjointness of two active candidate sets, $|J_{t,k}| \le
2^{k+1}$. The sum (\ref{eq:case1-1-p1}) is thus equal to 
\begin{eqnarray*}
	\sum_{t=0}^{\log n}  \sum_{k=1}^{\log n} \sum_{i \in J_{t,k}}   \eps_i\log\left(  1+{\log{1\over \eps_i}\over  \log{2n \over |V_{a_i}^v|}} \right)\prod_{j=1}^{i-1}(1-\eps_j) \le  \\
	\sum_{t=0}^{n}  \sum_{k=1}^{\log n} \sum_{i \in J_{t,k}}   \eps_i\log\left(  1+{\log{1\over \eps_i}\over  k} \right)e^{-\sum_{j=1}^{i-1}\eps_j} \le  \\
	\sum_{t=0}^{n}  \sum_{k=1}^{\log n} \sum_{i \in J_{t,k}}   \eps_i{\log{1\over \eps_i}\over  k}e^{-2^t+1} \le \sum_{t=0}^{n}  \sum_{k=1}^{\log n} {s_{t,k} \log{|J_{t,k}| \over s_{t,k}} \over k} e^{-2^t+1} \le  \\
	\sum_{t=0}^{n}  \sum_{k=1}^{\log n} {s_{t,k} (k+1)+s_{t,k}\log{1 \over s_{t,k}} \over k} e^{-2^t+1}
	\le\\  \sum_{t=0}^{n}  2^{t+2} e^{-2^t+1} + \sum_{t=0}^n\sum_{k=1}^{\log n} {{s_{t,k} \log{1 \over s_{t,k}} \over k}} e^{-2^t+1} \le  \\
	O(1) + \sum_{t=0}^n\sum_{r=1}^{\lg \lg n}\sum_{k=2^{r-1}}^{2^r}{{s_{t,k} \log{1 \over s_{t,k}} \over 2^{r-1}}} e^{-2^t+1}\,. \\
\end{eqnarray*}  

Now define  $S_{t,r} = \sum_{2^{r -1} \le k < 2^r} s_{t,k}$.
Remember that  $\sum_{r=1}^{\log\log n}S_{t,r} < 2^{t+1}$.
(\ref{eq:case1-1-p1}) is thus at most 

\begin{eqnarray*}
	O(1) + \sum_{t=0}^n\sum_{r=1}^{\lg \lg n}{ {S_{t,r} \log{2^{r-1} \over
              S_{t,r}} \over 2^{r-1}}} e^{-2^t+1}= \\   O(1) + \sum_{t=0}^n\sum_{r=1}^{\lg \lg n}{ {S_{t,r}(r-1) + S_{t,r} \log{1 \over S_{t,r}} \over 2^{r-1}}} e^{-2^t+1} \le \\
	 O(1)+ \sum_{t=0}^n\sum_{r=1}^{\lg \lg n}{ {2^{t+1}(r-1) \over 2^{r-1}}} e^{-2^t+1} + \sum_{t=0}^n\sum_{r=1}^{\lg \lg n}{ {1 \over 2^{r-1}}} e^{-2^t+1}  = O(1). \\
\end{eqnarray*}

To bound (\ref{eq:case1-1-p2}), define $J_t$ as before and let $p_i = \prod_{j=1}^{i-1}1/(1-\eps_j)$.
Observe that the function $\log(1/(1-x))$ is a convex function which means if $s_i := \sum_{j=1}^{i-1}\eps_j$
is fixed, then  $\prod_{j=1}^{i-1}1/(1-\eps_j)$ is minimized when $\eps_j = s_i/(i-1)$. Thus,
$$p_i \ge \left( {1 \over 1- {s_i \over i-1}} \right)^{i-1} \ge  \left( {1+ {s_i \over i-1}} \right)^{i-1} \ge 
1+ {i-1 \choose j} \left( {s_i \over i-1} \right)^j$$
in which $j$ can be chosen to be any integer between 0 and $i-1$.
We pick $j:= \max\left\{ \lfloor s_i / 8\rfloor, 1 \right\}$. 
Since $i \ge s_i$, we get for $i \in J_t$,
\[p_i \ge 1+ { \left( { {i-1} \over 2} \right)^j \over j^j } \left( {s_i \over i-1} \right)^j \ge  \
1+ {\left( s_i \over 2j \right)^j} \ge 2^{s_i/8} \ge 2^{2^{t-4}}.\]
Thus, we can write (\ref{eq:case1-1-p2}) as 
\begin{eqnarray*}
      \sum_{i=1}^{|A_v|}   \eps_i  {\log p_i \over p_i} =  \sum_{t=0}^{\log n}\sum_{i \in J_t}   \eps_i  {\log p_i \over p_i} 
=  \sum_{t=0}^{\log n}\sum_{i \in J_t}  O\left({ \eps_i 2^t  \over 2^{2^{t-4}} } \right) \le 
    \sum_{t=0}^{\log n}  O\left({  2^{2t+1}  \over 2^{2^{t-4}}  }\right) = O(1).
\end{eqnarray*}

\paragraph{Analyzing (\eqref{eq:case1-2})}
The analysis of this equation is very similar to (\ref{eq:case1-1}).
We can write (\ref{eq:case1-2}) using (\ref{eq:ipot}) and (\ref{eq:eq2}). 
We also use the same technique as in analyzing (\ref{eq:case1-1-p2}).
We find that the sum (\ref{eq:case1-2}) equals 
\begin{eqnarray*}
    \sum_{i=1}^{|A_v|} \eps'_i(1-\eps_i)\prod_{j=1}^{i-1}(1-\eps_j)(1-\eps'_j)\left( \sum_{j=1}^{i} \log{\log{2n \over |V_{a_j}^{w_{a_i}}|} \over  \log{2n \over |V_{a_j}^v|}}\right) = \\
    \mathlarger{\sum}_{i=1}^{|A_v|}   \eps_i'(1-\eps_i)\prod_{j=1}^{i-1}(1-\eps_j)(1-\eps'_j)\left(\sum_{j=1}^{i} \log{\log{2n \over |V_{a_j}^{v}|} + \log{1\over 1-\eps_j}\over  \log{2n \over |V_{a_j}^v|}}\right)   = \\
    \mathlarger{\sum}_{i=1}^{|A_v|}   \eps_i'(1-\eps_i)\prod_{j=1}^{i-1}(1-\eps_j)(1-\eps'_j)\left(\sum_{j=1}^{i} \log \left(1+ { \log{1\over 1-\eps_j}\over  \log{2n \over |V_{a_j}^v|}} \right)\right)   \le \\
    \mathlarger{\sum}_{i=1}^{|A_v|}   \eps_i'(1-\eps_i)\prod_{j=1}^{i-1}(1-\eps_j)(1-\eps'_j)\left(\sum_{j=1}^{i} \log{1 \over 1-\eps_i} \right).
\end{eqnarray*}

Let $s_i := \sum_{j=1}^i \eps_j$, and $s'_i := \sum_{j=1}^{i-1} \eps'_j$. 
Similarly to the previous case, let $J_t$, $t \ge 0$, be the set of indices such that for each $i\in J_t$ we have
$2^t - 1 \le s_i + s'_i  \le 2^{t+1}$.
Also define  $p_i = \prod_{j=1}^{i}1/(1-\eps_j)$ and  $p'_i = \prod_{j=1}^{i-1}1/(1-\eps'_j)$.
Using the previous techniques we can get that $p_i \ge 2^{s_i/8}$ and $p'_i \ge 2^{s'_i/8}$.
We get that (\ref{eq:case1-2}) is at most 
\begin{align*}
    &\sum_{t=0}^{\log n}\sum_{i \in J_t} {\eps'_i} {\log p_i \over p_i p'_i}
    \le \sum_{t=0}^{\log n}\sum_{i \in J_t} {\eps'_i} {s_i \over  8\cdot 2^{s_i/8}2^{s'_i/8}} \\
    \le &\sum_{t=0}^{\log n}\sum_{i \in J_t} {\eps'_i} {2^{t+1} \over  8\cdot 2^{(2^{t}-1)/8}} \le 
     \sum_{t=0}^{\log n}{2^{2t+2} \over  8\cdot 2^{(2^{t}-1)/8}} = O(1).
\end{align*}

\paragraph{Analyzing (\ref{eq:case1-3})}
    Let $b \in \left\{ 0,1 \right\}$ be such that $b \equiv \max_v$ and let $k := |V_{\max_v+1}^v(b)|$.
    We use Lemma~\ref{lem:aeqmax}.
 Observe that we still have $\con_{v} = \con_{w_{\max_v}}$.
Thus we can bound (\ref{eq:case1-3}) from above by 
\begin{eqnarray*}
 {(k-\con_v )\over |V_{\max_v+1}^v|-\con_v } \prod_{i=1}^{|A_v|}(1-\eps_i)(1-\eps'_i) (\phi(w_{\max_v}) - \phi(v)) =  \\
{(k-\con_v )\over |V_{\max_v+1}^v|-\con_v }    \prod_{i=1}^{|A_v|}(1-\eps_i)(1-\eps'_i)\left( \log {\log{2n \over |V_{\max_v+1}^{w_{\max_v}}|-\con_{w_{\max_v}}} \over \log{2n \over |V_{\max_v+1}^v|-\con_v }} + \sum_{i=1}^{|A_v|} \log{\log{2n \over |V_{i}^{w_{\max_v}}|} \over  \log{2n \over |V_{i}^v|}} \right) \le \\
{(k-\con_v )\over |V_{\max_v+1}^v|-\con_v }  \prod_{i=1}^{|A_v|}(1-\eps_i)(1-\eps'_i)\left( \log {\log{2n \over k-\con_v } \over \log{2n \over |V_{\max_v+1}^v|-\con_v }} + \sum_{i=1}^{|A_v|} \log \left(1+ \log{1\over 1-\eps_i} \right) \right)\le \\
{(k-\con_v )\over |V_{\max_v+1}^v|-\con_v }  \log {\log{2n \over k-\con_v } \over \log{2n \over |V_{\max_v+1}^v|-\con_v }} + O(1) = O(1).
\end{eqnarray*}

\end{proof}

We now return to the other remaining part of the change in the potential function. 
\begin{lemma}\label{lem:agmax2}
  The value of Equation~\eqref{eq:agmax} is bounded by $O(1)$.
\end{lemma}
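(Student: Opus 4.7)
The plan is to decompose $\phi(w_{\max_v+a})-\phi(v)$ analogously to Equation~\eqref{eq:ipot} and bound the contribution of three separate parts to Equation~\eqref{eq:agmax} by $O(1)$ each. The three parts are: \textbf{(I)} the change in the pseudo-active term, which encompasses the transformation of the old pseudo-active $V^v_{\max_v+1}$ into the first new active set $V^{w_{\max_v+a}}_{\max_v+1}$ together with the appearance of a new pseudo-active at index $\max_v+a+1$; \textbf{(II)} the potentials of the freshly active sets $V^{w_{\max_v+a}}_i$ for $\max_v+1 < i \le \max_v+a$, each equal to $P^v_{i \bmod 2}$ and hence of size $y_i\in\{x_0,x_1\}$ by the candidate-set update rule; and \textbf{(III)} the change in potentials of the old active sets $V^v_{a_i}$, whose sizes shrink to $(1-\eps_i)|V^v_{a_i}|$.

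Part (III) is handled exactly as in the analysis of Equation~\eqref{eq:case1-1-p2}. Applying $\log(1+x)\le x$ bounds the contribution by $\sum_{i\in A_v}\log(1/(1-\eps_i))$, and using $\sum_{a\ge 1}\Pr[F(x_v)=\max_v+a\mid\Pi\in S_v]\le\prod_j(1-\eps_j)(1-\eps'_j)$ yields a quantity of the form $r\log(1/r)$ with $r=\prod_j(1-\eps_j)(1-\eps'_j)\in(0,1]$, which is $O(1)$. For Part (II), I would exchange the order of summation to rewrite the contribution as $\sum_{i>\max_v+1}\log\log(2n/y_i)\cdot\Pr[F(x_v)\ge i\mid\Pi\in S_v]$. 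Lemma~\ref{lem:agmax} gives $\Pr[F(x_v)\ge i] \le O(1)\cdot\prod_{j=\max_v+2}^{i}(y_j/n)$, and the alternating-pair bound $(x_{b_j}/n)(x_{b_{j+1}}/n)\le 1/2$ used in the proof of Corollary~\ref{cor:agmax} implies $\prod_{j\le i}(y_j/n)\le 2^{-\lfloor (i-\max_v-1)/2\rfloor}$. Writing $u_i=y_i/n$, I exploit the elementary inequality $u\log\log(2/u)=O(1)$ for $u\in(0,1]$ to absorb the factor $u_i\log\log(2/u_i)$, so the $i$th summand is at most $O(1)\cdot 2^{-\Omega(i-\max_v)}$; the resulting geometric series sums to $O(1)$.

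Part (I) combines the change in the pseudo-active term with the $i=\max_v+1$ boundary contribution. The relevant probability bound is Lemma~\ref{lem:aggmax}, which gives $\Pr[F(x_v)>\max_v\mid\Pi\in S_v]\le\frac{|V^v_{\max_v+1}|-k}{|V^v_{\max_v+1}|-\con_v}\prod_j(1-\eps_j)(1-\eps'_j)$. Since the first new active set $V^{w_{\max_v+a}}_{\max_v+1}$ has size exactly $|V^v_{\max_v+1}|-k$, multiplying this probability by its potential $\log\log(2n/(|V^v_{\max_v+1}|-k))$ produces an expression of the form $u\log\log(2/u)=O(1)$ after rescaling by $|V^v_{\max_v+1}|-\con_v$. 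The new pseudo-active at index $\max_v+a+1$, of size $y_{\max_v+a}$ with corrected denominator $y_{\max_v+a}-\con_{w_{\max_v+a}}$, is absorbed using Corollary~\ref{cor:agmax} in the same fashion. The main technical obstacle will be the careful parity and $\con$ bookkeeping, in particular ensuring that the corrections $\lfloor (\max_v+a)/2\rfloor$ do not disrupt the geometric decay and verifying that the alternating-pair bound is legitimately initiated at the pseudo-active rather than at a full $[n]$-set; each of these pieces contributes only $O(1)$, completing the argument.
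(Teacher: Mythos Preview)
Your Parts (I) and (III) essentially match the paper's approach: the paper splits $\phi(w_a)-\phi(v)$ for $a>\max_v$ into three pieces \eqref{eq:case2-1}, \eqref{eq:case2-2}, \eqref{eq:case2-3}, which correspond to (new pseudo-active term), (old pseudo-active $\to$ newly active $V_{\max_v+1}$), and (change in the old active sets). Your Part (III) is exactly \eqref{eq:case2-3} and is bounded just as you say, via $r\log(1/r)=O(1)$ with $r=\prod_j(1-\eps_j)$; your Part (I) combines \eqref{eq:case2-1} and \eqref{eq:case2-2} and the arguments you sketch (Lemma~\ref{lem:aggmax} for the $\max_v+1$ term, Corollary~\ref{cor:agmax} for the new pseudo-active term) are the right ones.

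The issue is your Part (II). The sets $V_i^{w_{\max_v+a}}$ for $\max_v+1<i\le\max_v+a$ are \emph{not} active at $w_{\max_v+a}$. Look back at condition (i) of the active definition: $V_i$ is active only if some ancestor $u$ satisfies $F(x_u)=i-1$. On the path to $w_{\max_v+a}$ the scores returned are all $\le\max_v$ except for the single score $\max_v+a$ at $v$ itself; no ancestor ever returned a score in $\{\max_v+1,\dots,\max_v+a-1\}$. Hence the only candidate set that becomes newly active is $V_{\max_v+1}$, and the paper's decomposition of $\phi(w_a)-\phi(v)$ contains no Part-(II) terms at all. Your ``decomposition'' therefore does not equal $\phi(w_a)-\phi(v)$.

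This is a conceptual error rather than a fatal gap: since every term $\log\log(2n/|V_i^{w_a}|)$ is nonnegative, your three-part sum is an \emph{upper bound} on $\phi(w_a)-\phi(v)$, and your Part-(II) argument (pull a factor $y_i/n$ out of $\Pr[F(x_v)\ge i]$ via Lemma~\ref{lem:agmax}, use $u\log\log(2/u)=O(1)$, and sum the residual geometric series) would go through. So the lemma would still follow. But you should fix the decomposition to match what $\phi$ actually contains, drop Part (II) entirely, and note that only $V_{\max_v+1}$ joins $A_{w_a}$; this both corrects the misunderstanding and shortens the proof.
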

\begin{proof}
The big difference here is that the candidate set $V_{\max_v+1}^{w_a}$ becomes an active candidate set (if of course
$\max_v+1 < n/3$) at $w_a$ while $V_{\max_v+1}^{v}$ was not active at $v$.
Because of this, we have 
\begin{align*}
 \phi(w_a) - \phi(v) \le  
 \log \log{2n \over |V_{a+1}^{w_a}|-\con_{w_a}} 
 &+ 
\log\log{2n \over |V_{\max_v+1}^{w_a}|} - 
 \log \log{2n \over |V_{\max_v+1}^{v}|-\con_{v}}\\ 
 & + 
\sum_{j\in A_v} \log{\log{2n \over |V_{j}^{w_a}|} \over  \log{2n \over |V_{j}^v|}}. 
\end{align*}
Thus, we get that
\begin{eqnarray}
    \notag &&\sum_{a>\max_v}^{n/3} \Pr[F(x_v)=a \mid \Pi \in S_v](\phi(w_a)-\phi(v)) = \\
    && \quad \sum_{a>\max_v}^{n/3} \Pr[F(x_v)=a \mid \Pi \in S_v]\log \log{2n \over |V_{a+1}^{w_a}|-\con_{w_a}} + \label{eq:case2-1} \\
    &&\quad \sum_{a>\max_v}^{n/3} \Pr[F(x_v)=a \mid \Pi \in S_v]\log\left( {\log{2n \over |V_{\max_v+1}^{w_a}|}\over \log{2n \over |V_{\max_v+1}^v| -\con_v }} \right) +  \label{eq:case2-2} \\
   &&\quad  \sum_{a>\max_v}^{n/3} \Pr[F(x_v)=a \mid \Pi \in S_v]\sum_{j\in A_v} \log{\log{2n \over |V_{j}^{w_a}|} \over  \log{2n \over |V_{j}^v|} }\label{eq:case2-3}.
\end{eqnarray}

Using the previous ideas, it is easy to see that we can bound (\ref{eq:case2-3}) from above by 
\begin{align*}
    & \sum_{a>{\max}_v}^{n/3} 
    \big(
    \Pr[F(x_v)=a \mid \Pi \in S_v \wedge F(x_v) > {\max}_v]
    \cdot\\
    & \quad \quad \quad \Pr[F(x_v) > {\max}_v \mid \Pi \in S_v]\cdot \sum_{j\in A_v} \log{\log{2n \over |V_{j}^{w_a}|} \over  \log{2n \over |V_{j}^v|}}\big)
    \\ \le 
    & \sum_{a>{\max}_v}^{n/3} 
    \big( \Pr[F(x_v)=a \mid \Pi \in S_v \wedge F(x_v) > {\max}_v]\cdot
    \\ & \quad \quad \quad 
    \prod_{i=1}^{|A_v|}(1-\eps_i)(1-\eps'_i)\sum_{j\in A_v} \log{\log{2n \over |V_{j}^{w_a}|} \over  \log{2n \over |V_{j}^v|}} \big)  
    \\ \le
	 & \prod_{i=1}^{|A_v|}(1-\eps_i)(1-\eps'_i)\sum_{j\in A_v} \log{\log{2n \over |V_{j}^{w_a}|} \over  \log{2n \over |V_{j}^v|}}  
	 \\  \le
	 & \prod_{i=1}^{|A_v|}(1-\eps_i)(1-\eps'_i)\sum_{j\in A_v} \log\left( {1+\log {1 \over 1-\eps_j}} \right) 
	 \\  \le
	 & \prod_{i=1}^{|A_v|}(1-\eps_i)\sum_{j\in A_v} \log {1 \over 1-\eps_j}  
	 \\  \le
	 & \prod_{i=1}^{|A_v|}(1-\eps_i)\log\left( {1 \over \prod_{j\in A_v} (1-\eps_j) } \right) = O(1). \\
\end{align*}

To analyze (\ref{eq:case2-2}) by Lemma~\ref{lem:aggmax} we know that
$\Pr[F(x_v) > \max_v | \Pi \in S_v] \le{|V_{\max_v+1}^v|- k  \over |V_{\max_v+1}^v|-\con_v }$
in which $k$ is as defined in the lemma.
Note that in this case $|V_{\max_v+1}^{w_a}| =|V_{\max_v+1}^v|-k$.
This implies that (\ref{eq:case2-2}) is at most 

\begin{eqnarray*}
     \sum_{a>{\max}_v}^{n} \Pr[F(x_v)=a \mid \Pi \in S_v]\log\left( {\log{2n \over |V_{\max_v+1}^v|-k}\over \log{2n \over |V_{\max_v+1}^v| -\con_v }} \right)  = \\
     \left(  \sum_{a>{\max}_v}^{n} \Pr[F(x_v)=a \mid \Pi \in S_v]\right)\log\left( {\log{2n \over |V_{\max_v+1}^v|-k}\over \log{2n \over |V_{\max_v+1}^v| -\con_v }} \right)  = \\
     \Pr[F(x_v) > {\max}_v \mid \Pi \in S_v]\log\left( {\log{2n \over |V_{\max_v+1}^v|-k}\over \log{2n \over |V_{\max_v+1}^v| -\con_v }} \right)  \le \\
     {|V_{\max_v+1}^v|- k  \over |V_{\max_v+1}^v|-\con_v } \log\left( {\log{2n \over |V_{\max_v+1}^v|-k}\over \log{2n \over |V_{\max_v+1}^v| -\con_v }} \right)  = O(1). \\
\end{eqnarray*}

It is left to analyze (\ref{eq:case2-1}). 
Let $x_0 = |P_0^v|$ and $x_1 = |P_1^v|$.
Let $b_i$ be a Boolean such $b_i \equiv i$.
Note that we have $|V_{\max_v+a+1}^{w_{\max_v+a}}|= x_{b_{{\max_v+a}}} = n-x_{b_{{\max_v+a+1}}}$. 
Using Corollary~\ref{cor:agmax} we the term (\ref{eq:case2-1}) is at most  
\begin{align*}
    &  \sum_{a=1}^{n/3-\max_v} \Pr[F(x_v)={\max}_v+a \mid \Pi \in S_v]\log \log{2n \over |V_{\max_v+a+1}^{w_{\max_v+a}}|-\con_{w_{\max_v+a}}}\\ 
   \le  & 
    \sum_{a=1}^{n/3-\max_v} 2^{-\Omega(a)}\Big(1-{x_{b_{ {\max}_v+a+1}} - \lfloor ( {\max}_v+a+1)/2 \rfloor \over n- {\max}_v-a} \Big)\\
    & \quad \quad \quad \quad  \cdot \log \log{2n \over n-x_{b_{{\max_v+a+1}}} -\lfloor ( {\max}_v+a)/2 \rfloor}\\
   = & O(1). \\
\end{align*}

\end{proof}

As discussed earlier, to prove our main lemma,  Lemma~\ref{lem:pot}, we need to analyze the change in the potential
function when going from node $v$ to a child of $v$.
This change in potential was split into two terms, one captured by Equation~\eqref{eq:alemax} and another
by Equation~\eqref{eq:agmax}.
By Lemmas~\ref{lem:alemax} and~\ref{lem:agmax2}, each of these terms is bounded by $O(1)$ and thus as a result, the change
in the potential when going from $v$ to a child of $v$ is also bounded by $O(1)$. 

To prove our lower bound, we simply need to observe that at the beginning, the potential is 0 since the algorithm
essentially knows nothing. 
However, at the end, the sizes of all the candidates sets are small. 
Consider a node $v$ the decision where we have hit $\max_v = n/3$. 
By Lemma~\ref{lem:nested}, it follows that $\Omega(n)$ candidate sets have $O(1)$ size.
As a result, the value of the potential function at $v$ is $\Omega(n\log \log n)$ as each candidate set of size $O(1)$
contributes $\Omega(\log\log n)$ to the potential. 
However, as our main lemma, Lemma~\ref{lem:pot}, shows, the expected increase in the potential function is $O(1)$.
As a result, it follows that the expected depth of $v$ must be $\Omega(n\log\log n)$.

\subsection{Concluding the Proof of Theorem~\ref{thm:lower}}


Intuitively, if the maximum score value increases after a query, it increases, in expectation, only by an additive constant.
In fact, 
as shown in Corollary~\ref{cor:agmax}, the probability of increasing
the maximum score value by $\alpha$ after one query is $2^{-\Omega(\alpha)}$.
Thus, it follows from the definition of the active candidate sets that
when the score reaches $n/3$ we expect $\Omega(n)$ active candidate sets.
However, by Lemma~\ref{lem:nested}, the active candidate sets are disjoint.
This means that a fraction of them (again at least $\Omega(n)$ of them), must
be small, or equivalently, their total potential is $\Omega(n\log\log n)$, meaning,
at least $\Omega(n \log\log n)$ queries have been asked.
In the rest of this section, we prove this intuition.

Given an input $(z,\pi)$, we say an edge $e$ in the decision tree $T$ is {\emph{increasing}}
if $e$ corresponds to an increase in the maximum score and it is traversed given the input $(z,\pi)$.
We say that an increasing edge is {\emph{short}} if it corresponds to an increase of at most $c$ in the
maximum function score (in which $c$ is a sufficiently large constant) and we call it {\emph{long}} otherwise.
Let $N$ be the random variable denoting the number of
increasing edges seen on input $\Pi$ before reaching a node with score greater than $n/3$.
Let $L_j$ be the random indicator variable taking the value $0$ if the
$j$th increasing edge is short, and taking the value equal to the amount of increase in the score along this edge if not.
If $j>N$, then we define $L_j=0$. 
Also let $W_j$ be the random variable corresponding to the node of the decision tree where the
$j$th increase happens.
As discussed, we have shown that for every node $v$, 
$\Pr[L_j \ge \alpha | W_j = v ] \le 2^{-\Omega(\alpha)}$.
We want to
upper bound $\sum_{j=1}^n \E[L_j]$ (there are always at most $n$
increasing edges). From the above, we know that
\begin{align*}
\E[L_j] 
& \leq \E[L_j \mid N \geq j]
= \sum_{v \in T} \sum_{i=c+1}^n i\cdot \Pr[ L_j = i\wedge W_j = v \mid N \geq j] \\
&= \sum_{v \in T} \sum_{i=c+1}^n i\cdot \Pr[ L_j = i\wedge W_j = v ] 
= \sum_{v \in T} \sum_{i=c+1}^n i \cdot \Pr[ L_j = i \mid W_j = v]\Pr[W_j = v] \\
&\leq \sum_{v \in T} \sum_{i=c+1}^n {i\over 2^{\Omega(i)}} \Pr[W_j = v]
\leq \sum_{v \in T} {1\over 2^{\Omega(c)}} \Pr[W_j=v] 
\le {1 \over 2^{\Omega(c)}}\,, \\
\end{align*}  
where the summation is taken over all nodes $v$ in the decision tree $T$.
The computation shows $\sum_{j=1}^n \E[L_j] \leq n/2^{\Omega(c)}$. 
By Markov's inequality,
we get that with probability at least $3/4$, we have $\sum_{j=1}^n L_j
\leq n/2^{\Omega(c)}$.
Thus, when the function score reaches $n/3$,
short edges must account for $n/3-n/2^{\Omega(c)}$ of the increase which is at least
$n/6$ for a large enough constant $c$.
Since any short edge has length at most $c$, there must be
at least $n/(6c)$ short edges. As discussed, this implies existence of 
$\Omega(n)$ active candidate sets that have size $O(1)$, meaning, their contribution
to the potential function is $\Omega(\log\log n)$ each. We have thus shown:

\begin{lemma}
\label{lem:highpotential}
Let $\vs$ be the random variable giving the leaf node of $T$ that the
deterministic query scheme ends up in on input $\Pi$. We have
$\phi(\vs)=\Omega(n\log\log n)$ with probability at least $3/4$.
\end{lemma}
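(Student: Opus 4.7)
The plan is to formalize the intuition given before the lemma: a leaf reached after the maximum score crosses $n/3$ must have many small active candidate sets, each contributing $\Omega(\log\log n)$ to the potential. I would first classify each edge along the path from the root to $\vs$ that corresponds to an increase of $\max$ as \emph{short} (if the jump is at most some constant $c$) or \emph{long} (if the jump exceeds $c$). Since the final maximum is at least $n/3$, the total length of all increasing edges traversed is at least $n/3$, so either short edges account for a constant fraction of this, or long edges do.

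The first key step is to rule out that long edges dominate. For any node $v$, Corollary~\ref{cor:agmax} gives $\Pr[F(x_v) \ge \max_v + \alpha \mid \Pi \in S_v] \le 2^{-\Omega(\alpha)}$. Letting $L_j$ be the length of the $j$-th increasing edge (taken to be $0$ if short, or if fewer than $j$ increasing edges occur), I would compute
\[
\E[L_j] \le \sum_{v \in T}\sum_{i > c} i \cdot \Pr[L_j = i \mid W_j = v]\Pr[W_j = v] \le \sum_{v \in T} \frac{1}{2^{\Omega(c)}}\Pr[W_j = v] = 2^{-\Omega(c)}\,,
\]
where $W_j$ is the node where the $j$-th increase occurs. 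Since there are at most $n$ increasing edges, $\sum_j \E[L_j] \le n / 2^{\Omega(c)}$, and Markov's inequality gives $\sum_j L_j \le n/12$ with probability at least $3/4$ once $c$ is large enough.

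On this event, short edges must contribute at least $n/3 - n/12 = n/4$ to the growth of $\max$, so there are at least $n/(4c) = \Omega(n)$ short edges. Now comes the main translation step: each short edge on the root-to-$\vs$ path corresponds to a new active candidate set being created. Indeed, if the $j$-th increasing edge ends at node $u$ with $\max_u = a_j$, then for every $i$ in the range $(\max_{u^-}, a_j]$ where $u^-$ is the preceding node, the set $V_i^u$ becomes active once $\max$ later grows a bit further, and by Lemma~\ref{lem:nested} all these active sets are pairwise disjoint subsets of $[n]$. More importantly, because the increase is short (at most $c$), each such edge guarantees that at least one newly-formed active set has size $O(1)$ in the sense that it was pinned down to a window of $O(c) = O(1)$ elements on that very edge.

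Combining these observations, I get $\Omega(n)$ active candidate sets of size $O(1)$ at $\vs$, each contributing $\log\log(2n/O(1)) = \Omega(\log\log n)$ to the potential, for a total of $\phi(\vs) = \Omega(n\log\log n)$, with probability at least $3/4$. The main obstacle I anticipate is the translation step from ``many short increasing edges'' to ``many small active candidate sets''—one must carefully track that a short jump really does force a bounded-size active set (rather than merely a pseudo-active one), using the update rules of the candidate sets and the definition of activation; the probabilistic part via Markov is comparatively routine once the tail bound from Corollary~\ref{cor:agmax} is in hand.
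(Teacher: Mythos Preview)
Your approach is the paper's own: classify increasing edges as short (jump $\le c$) or long, use the tail bound from Corollary~\ref{cor:agmax} and Markov's inequality to show that long edges account for at most $n/2^{\Omega(c)}$ of the growth in $\max$ with probability at least $3/4$, and deduce that there are $\Omega(n)$ short edges. This probabilistic part is correct and matches the paper.

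The gap is exactly where you flagged it, but the mechanism you propose is wrong. The claim ``because the increase is short, each such edge guarantees that at least one newly-formed active set has size $O(1)$'' is false: the length of a jump in $\max$ has nothing to do with the \emph{size} of any candidate set. If at node $v$ with $\max_v = m$ the query $x_v$ is balanced, a score of $m+1$ (jump of length $1$) leaves $|V_{m+1}^{w_{m+1}}| \approx n/2$, not $O(1)$. (A side remark: it is also not true that every index in $(\max_{u^-}, a_j]$ yields an active set; by definition $V_i$ becomes active only if the score $i-1$ was actually returned at some node, so each increasing edge produces exactly one eventual active set. This does not damage your count, though.) The correct completion is the one you set up but did not invoke: from $\Omega(n)$ short edges you obtain $\Omega(n)$ active candidate sets, and by Corollary~\ref{lem:disjoint} these are pairwise disjoint subsets of $[n]$. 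Hence their sizes sum to at most $n$, so by a simple averaging at least half of them have size $O(1)$, each contributing $\log\log(2n/O(1)) = \Omega(\log\log n)$ to $\phi$. This disjointness-plus-averaging step is the paper's actual argument; you had all the ingredients but substituted an incorrect reason for the size bound.
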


Finally, we show how Lemma~\ref{lem:pot} and
Lemma~\ref{lem:highpotential} combine to give our lower bound. Essentially this boils down to showing that if the query scheme is too efficient, then the query asked at some node of $T$ increases the potential by $\omega(1)$ in expectation, contradicting Lemma~\ref{lem:pot}. To show this explicitly, define $\tsLB$ as the random variable giving the number of queries
asked on input $\Pi$. We have $\E[\tsLB]=t$, where $t$ was the expected number of queries needed for the deterministic query scheme. Also let
$\ell_1,\dots,\ell_{4t}$ be the random variables giving the first $4t$ nodes of
$T$ traversed on input $\Pi$, where $\ell_1 = r$ is the root node and
$\ell_i$ denotes the node traversed at the $i$th level of $T$. If
only $m<4t$ nodes are traversed, define $\ell_{i} = \ell_m$ for $i>m$; i.e., $\phi(\ell_i)=\phi(\ell_m)$.  From Lemma~\ref{lem:highpotential}, Markov's inequality and a union bound, we may now write
\begin{align*}
\E[\phi(\ell_{4t})] 
&= \E\left[\phi(\ell_1) + \sum_{i=1}^{4t-1} \phi(\ell_{i+1})-\phi(\ell_i)\right] 
= \E[\phi(r)] + \E\left[\sum_{i=1}^{4t-1} \phi(\ell_{i+1})-\phi(\ell_i)\right] \\
&= \sum_{i=1}^{4t-1} \E[\phi(\ell_{i+1})-\phi(\ell_i)]
= \Omega(n\log\log n).
\end{align*}
Hence there exists a value $i^*$, where $1 \leq i^* \leq 4t-1$, such that
\[
\E[\phi(\ell_{i^*+1}) - \phi(\ell_{i^*})] = \Omega(n\log\log n/t).
\]
But
\begin{eqnarray*}
\E[\phi(\ell_{i^*+1}) - \phi(\ell_{i^*})] =
\sum_{v \in T_{i^*}, v \textrm{ non-leaf}} \Pr[\Pi \in S_v] \E[\phi(w_{\is_v})-\phi(v) \mid \Pi \in S_v],
\end{eqnarray*}
where $T_{i^*}$ is the set of all nodes at depth $i^*$ in $T$,
$w_0,\dots,w_n$ are the children of $v$ and $\is_v$ is the random
variable giving the score of $F(x_v)$ on an input $\Pi \in S_v$ and $0$ otherwise. Since the events $\Pi \in S_v$ and $\Pi \in S_u$ are disjoint for $v \neq u$, we conclude that there must exist a node $v \in T_{i^*}$ for which
\[
\E[\phi(w_{\is_v})-\phi(v) \mid \Pi \in S_v] = \Omega(n \log \log n/t).
\]
Combined with Lemma~\ref{lem:pot} this shows that $n \log \log n/t =
O(1)$; i.e., $t = \Omega(n \log \log n)$. This concludes the proof of Theorem~\ref{thm:lower}.

%

\subsection*{Acknowledgments}
Most of this work was done while Benjamin Doerr was with the Max Planck Institute for Informatics (MPII) in Saarbr\"ucken, Germany, and Carola Doerr was with the MPII and Universit\'e Diderot, Paris, France.

Carola Doerr acknowledges support from a Feodor Lynen postdoctoral research fellowship of the Alexander von Humboldt Foundation and the Agence Nationale de la Recherche under the project ANR-09-JCJC-0067-01.

\newcommand{\etalchar}[1]{$^{#1}$}

\end{document}